\documentclass[11pt]{article}
\usepackage[margin=1in]{geometry}
\usepackage{setspace}
\usepackage{amsmath,amssymb,amsthm,mathtools}
\usepackage{bm}
\usepackage{graphicx}
\graphicspath{{figures/}{./}}
\usepackage{booktabs}
\usepackage{float}
\usepackage{longtable}
\usepackage{wrapfig}
\usepackage[ruled]{algorithm}
\usepackage{algpseudocode}

\usepackage{caption}
\usepackage[round,authoryear]{natbib}
\usepackage{hyperref}
\makeatletter
\g@addto@macro\normalsize{%
  \setlength\abovedisplayskip{8pt plus 2pt minus 4pt}%
  \setlength\belowdisplayskip{8pt plus 2pt minus 4pt}%
  \setlength\abovedisplayshortskip{4pt plus 2pt}%
  \setlength\belowdisplayshortskip{6pt plus 2pt minus 3pt}%
}
\makeatother

\usepackage{enumitem}
\newtheorem{theorem}{Theorem}
\newtheorem{lemma}[theorem]{Lemma}
\newtheorem{corollary}[theorem]{Corollary}
\newtheorem{proposition}[theorem]{Proposition}

\newtheorem{assumption}[theorem]{Assumption}
\newtheorem{remark}[theorem]{Remark}
\newtheorem{definition}[theorem]{Definition}

\makeatletter\@addtoreset{theorem}{section}\makeatother

\newcommand{\E}{\mathbb{E}}
\renewcommand{\Pr}{\mathbb{P}}
\newcommand{\R}{\mathbb{R}}

\newcommand{\HH}{\mathcal{H}}

\newcommand{\norm}[1]{\lVert #1 \rVert}
\newcommand{\Opt}{\operatorname{Opt}}
\newcommand{\Optw}{\widetilde{\operatorname{Opt}}}
\newcommand{\ind}{\mathbf{1}}
\newcommand{\Var}{\operatorname{Var}}
\newcommand{\Perp}{\mathrel{\perp\!\!\!\perp}}
\title{Upper Confidence Bounds for the Prediction Error of Kernel Ridge Regression via Gaussian Refitting}
\author{Yijin Ni \\ Georgia Institute of Technology \\ \texttt{yni64@gatech.edu} \and Xiaoming Huo \\ Georgia Institute of Technology \\ \texttt{huo@gatech.edu}}
\date{}

\newcommand{\appref}[1]{Appendix~\ref{#1}}
\title{Upper Confidence Bounds for the Prediction Error of Kernel Ridge Regression via Gaussian Refitting}
\author{Yijin Ni \\ Georgia Institute of Technology \\ \texttt{yni64@gatech.edu} \and Xiaoming Huo \\ Georgia Institute of Technology \\ \texttt{huo@gatech.edu}}
\date{}

\begin{document}
\maketitle

\begin{abstract}
Assessing a single model fit requires a computable upper confidence bound for the gap between the fit and the unknown truth, as mean estimates ignore realization variance.
Standard cross-validation margins are bottlenecked at order $n^{-1/2}$ by noise fluctuations, even when the true error shrinks faster. 
While wild refitting cancels this noise level, existing Rademacher sign methods degenerate for kernel ridge regression and rely on unobservable quantities.

We propose a Gaussian refit for kernel ridge regression.
By Anderson's inequality, the fit movement is monotone in the noise sizes, yielding a computable tail bound.
Assuming only symmetric noise, the bound requires no moment assumptions and is calibrated at any confidence level via order statistics. Theoretically, using a worst-case envelope, the bound contracts at the minimax rate $O_P(n^{-2s/(2s+1)})$, correctly matching the prediction error.
Empirically, using a practical data-driven envelope, the bound maintains full coverage within twice the true $95\%$ error quantile. By contrast, cross-validation exceeds this quantile by factors up to $51$, and by hundreds under infinite-variance noise. The procedure extends empirically to nonlinear constrained estimators and real spatial data.

\end{abstract}

\medskip
\noindent\textit{Keywords:} cross-validation; excess risk; kernel ridge regression; minimax rate; upper confidence bound; wild bootstrap

\section{Introduction}
\label{sec:intro}

Kernel ridge regression, smoothing splines, and Gaussian process regression are standard methods for nonparametric prediction. Practitioners typically evaluate these models using cross-validation. This approach estimates the expected prediction error through the average loss on withheld observations.

However, estimating the expected error is insufficient when the realized error deviates heavily from the mean. Consider infinite-variance noise distributions, such as a Cauchy law. Under these conditions, the expected risk is undefined. The realized error at high quantiles vastly exceeds the median. Mean estimates fail to quantify the actual incurred error. A robust alternative is to construct high-probability upper bounds on the realized error.

In this paper, we establish these bounds in a fixed-design setting. 
This treats the covariates as fixed, restricting randomness solely to the response noise. 
Formally, given observations $\{(x_i,y_i)\}_{i=1}^n$ with fixed covariates $x_i\in\mathcal{X}$ and responses $y_i\in\mathbb{R}$ connected through an unknown true model, that is,
\begin{equation}
y_i \;=\; f^*(x_i) + w_i, \qquad i=1,\dots,n,
\label{eq:intro-model}
\end{equation}
where $f^*$ resides in a reproducing kernel Hilbert space (RKHS) $\HH$ and the noise $w_1,\dots,w_n$ is symmetric about zero.
A learning procedure $\mathcal{M}$ returns an estimator $\widehat{f}=\mathcal{M}(y)$ from the response vector $y=(y_1,\dots,y_n)$, where the fixed covariates ($x_i$'s) are directly absorbed into $\mathcal{M}$.
Denote the prediction error evaluating the average gap between $\widehat{f}$ and $f^*$ as $\mathcal{E}(\widehat f)$, that is,
\begin{equation}
\mathcal{E}(\widehat f) \;:=\; \norm{\widehat f-f^*}_n^2 \;:=\; \frac1n\sum_{i=1}^n \bigl(\widehat f(x_i)-f^*(x_i)\bigr)^2.
\label{eq:intro-target}
\end{equation}
The objective is to construct a computable upper confidence bound $\widehat{U}$, such that, 
\[
    \Pr\bigl(\mathcal{E}(\widehat f)\le\widehat U\bigr)\ge 1-\alpha,
\]
for a chosen confidence level $1-\alpha$. 
Construction of this bound requires only a bounded number of evaluations of $\mathcal{M}$, together with inputs computed from the smoother's explicit form.

Standard cross-validation cannot, in general, deliver such a bound.
It estimates prediction performance using losses evaluated on held-out observations, each of which contains fresh test noise.
Consequently, the fluctuations of the resulting estimate are governed by the noise level rather than by the prediction error of the fitted function, imposing an $n^{-1/2}$ floor on the interval margin.
By contrast, under sufficiently fast spectral decay and the corresponding regularity conditions, the prediction error of kernel ridge regression can converge faster than $n^{-1/2}$.
Proposition~\ref{prop:cv} shows that, in this regime, the margin-to-error ratio of standard cross-validation intervals diverges polynomially with $n$.
Proposition~\ref{prop:floor} establishes a more fundamental limitation: no procedure based solely on held-out losses can separate the fit's error from the test-noise fluctuation below the $n^{-1/2}$ scale.
A second limitation arises under heavy-tailed noise: the usual standard-error construction requires a finite fourth noise moment, a condition violated even by a Student-$t_4$ distribution.

Under heavy-tailed noise of unknown scale, existing methods do not provide a bound on the realized prediction error that is both computable and rate-sharp.
Point estimators target the expected error rather than its upper tail: Stein's unbiased risk estimate \citep[SURE;][]{stein1981estimation}, for example, covers the realized error with probability well below the nominal level ($0.39$--$0.58$, Table~\ref{tab:sim}).
Methods with genuine tail guarantees require assumptions absent from our setting, such as exactly Gaussian noise in the confidence-ball literature or a known noise scale in the kernel-certificate literature (\S\ref{sec:related}).
The closest alternative is the wild refit of \citet{wainwright2025wild}: flip the signs of the residuals, refit once, and estimate the error by pairing the perturbation with the displacement of the fit.
For kernel ridge regression, however, its guarantee is neither computable nor rate-sharp.
The bound depends on the noise supremum, the population bias, and an $f^*$-dependent radius, none of which is observable.
Its calibration also relies on bounded-constraint geometry absent from penalized kernel ridge regression, leaving the bound at the noise scale rather than the scale of the error.
Corollary~\ref{cor:w25krr} shows that the resulting bound-to-error ratio diverges polynomially in $n$.

Building on the refitting mechanism of \citet{wainwright2025wild}, we introduce a Gaussian refit that makes the resulting bound computable and directly calibratable.
Let $\widetilde w_i:=y_i-\widehat f(x_i)$ denote the residuals, let $|w|:=(|w_1|,\ldots,|w_n|)$ denote the unobserved noise magnitudes, and let $\circ$ denote coordinatewise multiplication.
Our construction modifies the wild refit in three respects.
First, it replaces $|w|$ by a computable vector $a$: the theoretical construction uses an envelope satisfying $a_i\ge|w_i|$, obtained from the kernel ball \eqref{eq:env-wc}, while the practical construction uses a leverage-corrected version of the residuals \eqref{eq:env-dd}.
Second, it replaces Rademacher signs by independent Gaussian multipliers
\[
\xi_1,\ldots,\xi_L\stackrel{\mathrm{iid}}{\sim}\mathcal N(0,I_n).
\]
Third, it replaces analytic tail bounds by simulation-based calibration.
For each multiplier, we refit the perturbed responses and compute
\[
m_k:=\norm{\mathcal M(y+\xi_k\circ a)-\mathcal M(y)}_n .
\]
Whereas the wild refit combines the displacement with the perturbation to estimate a cross term, we use the displacement norm directly.
For a linear smoother, the refit movement is obtained by applying the same linear operator to $\xi_k\circ a$ that maps the noise $w$ to the stochastic component of the prediction error.
Together with the envelope $a_i\ge|w_i|$, this permits the quantiles of the refit movements to calibrate an upper bound for that component.
The resulting bound is
\begin{equation}
\widehat U_\alpha
\;:=\;
\bigl(m_{(\lceil(L+1)(1-\alpha)\rceil)}+b\bigr)^2,
\label{eq:intro-bound}
\end{equation}
where $m_{(1)}\le\cdots\le m_{(L)}$ are the ordered refit movements and $b\ge0$ bounds the smoothing bias.
The same collection of refits therefore yields bounds across confidence levels without further model fitting.

The Gaussian replacement is essential: the same construction fails with Rademacher multipliers.
The envelope substitutes computable magnitudes for the unobserved $|w|$, and this substitution is valid only if enlarging a coordinatewise magnitude cannot decrease the upper quantiles of the movement.
Rademacher multipliers do not have this property: cancellation between coordinates can cause a larger magnitude to reduce the movement on half of the sign realizations (Remark~\ref{rem:twopoint}).
Gaussian multipliers do.
By Anderson's inequality, enlarging any magnitude cannot decrease any upper quantile of the movement distribution (Lemma~\ref{lem:anderson}).
A second comparison closes the chain to the truth: at the true magnitudes, the upper tail of the Gaussian movement dominates that of the sign-driven noise term (Proposition~\ref{prop:gauss-dom}).
Gaussian multipliers are therefore required by the construction, rather than chosen merely for convenience.

The Gaussian refit provides both a finite-sample guarantee and a practical error bound.
For the theoretical guarantee, we use the worst-case envelope: the resulting bound is valid conditionally on the noise magnitudes, requires no moment assumption on the noise, and contracts at the minimax rate over the kernel ball (Theorems~\ref{thm:validity} and~\ref{thm:rate}).
The analysis applies to any symmetric linear smoother, including smoothing splines and the Gaussian-process posterior mean (Remark~\ref{rem:linear}).
In practice, we replace the worst-case envelope by a sharper vector built from leverage-corrected residuals and evaluate the resulting bound empirically.
Across the sample sizes and noise laws considered, it achieves full coverage while remaining within twice the true $95\%$ error quantile; the three cross-validation intervals are $4$ to $51$ times that quantile (Table~\ref{tab:sim} and Figure~\ref{fig:sim}).
Under Cauchy noise, the Gaussian-refit bound remains in single digits, whereas the cross-validation bounds reach the hundreds (\S\ref{sec:sim-moment}).
We further apply the practical procedure to a nonlinear constrained fit and a real elevation field, where it achieves coverage of $0.99$--$1.00$ in both cases (Tables~\ref{tab:port} and~\ref{tab:realdata}).

\paragraph{Contributions.}
The paper makes four contributions:
\begin{enumerate}[label=(\roman*), leftmargin=*, itemsep=2pt, topsep=2pt]
\item \textbf{Limits of held-out losses.}
We prove that cross-validation intervals are polynomially wider than the realized prediction error whenever the fit converges faster than $n^{-1/2}$, and that no bound computed from held-out losses can cross the $n^{-1/2}$ floor, however the interval is repaired (Propositions~\ref{prop:cv} and~\ref{prop:floor}).
\item \textbf{Necessity of Gaussian multipliers.}
We prove that the Rademacher wild refit is not rate-sharp for kernel ridge regression, and we isolate the two properties a multiplier must supply for the construction to close: monotonicity of the movement quantiles in the envelope, and domination of the noise term's upper tail. Gaussian multipliers supply both; signs supply neither (Corollary~\ref{cor:w25krr}, Remark~\ref{rem:twopoint}, and \S\ref{sec:completion}).
\item \textbf{A computable, rate-sharp bound.}
We construct the bound \eqref{eq:intro-bound} from a bounded number of refits, and we prove that it is valid in finite samples conditionally on the noise magnitudes, under no moment assumption on the noise, and that it contracts at the minimax rate over the kernel ball. We also give the sharper leverage-corrected envelope used in practice (\S\ref{sec:the-bound}, \S\ref{sec:data-driven}, and Theorems~\ref{thm:validity}--\ref{thm:rate}).
\item \textbf{Empirical evaluation.}
We evaluate the bound against three forms of cross-validation, nested cross-validation, SURE, and the Rademacher statistic, across sample sizes and noise laws down to Cauchy, and we carry it to a nonlinear constrained fit and a real elevation field (\S\ref{sec:simulation}--\S\ref{sec:realdata}).
\end{enumerate}

The scope of the guarantees is as follows. The proofs cover linear smoothers with the penalty fixed in advance (Remarks~\ref{rem:linear} and~\ref{rem:lambda}); the extension to nonlinear constrained fitting is empirical. Validity is conservative and relies on the noise being sufficiently delocalized by the smoother. The worst-case envelope is loose for slowly decaying kernel spectra, and the guarantees assume that the target belongs to the reproducing kernel Hilbert space, failing which cross-validation is the appropriate tool.

\section{Setup}
\label{sec:setup}

\begin{table}[t]
\centering
\caption{Notation used throughout the paper. Symbols local to \S\ref{sec:wr-inequality} are defined where they appear.}
\label{tab:notation}
{\setstretch{1}\footnotesize
\begin{tabular}{@{}l@{\quad}p{0.68\textwidth}@{}}
\toprule
\multicolumn{2}{@{}l}{\textit{Data and model}}\\
\midrule
$n$ & sample size \\
$x_i$ & covariate, $x_i\in\mathcal{X}\subseteq\R^d$ \\
$y_i$ & response, $y_i\in\R$ \\
$f^*$ & the unknown regression function, $y_i=f^*(x_i)+w_i$, $\forall i$ \\
$w,\,|w|,\,\varepsilon$ & noise $w=(w_1,\dots,w_n)$, its absolute value, and its signs; $w_i=\varepsilon_i|w_i|$, $\forall i$ \\
$\HH,\,k$ & RKHS containing $f^*$, induced by reproducing kernel $k$ \\
$\norm{\cdot}_\HH,\,\langle\cdot,\cdot\rangle_\HH$ & norm and inner product induced by the RKHS $\HH$ \\
$B,\,\kappa$ & known bounds for $f^*$ and $k$, i.e.\ $\norm{f^*}_\HH\le B$ and $\sup_x k(x,x)\le\kappa^2$ \\
\midrule
\multicolumn{2}{@{}l}{\textit{Empirical norm and pairing}}\\
\midrule
$\norm{\cdot}_n$ & empirical norm, $\norm{g}_n^2=\tfrac1n\sum_i g(x_i)^2$ \\
$\langle\cdot,\cdot\rangle_n$ & empirical pairing, $\langle g,v\rangle_n=\tfrac1n\sum_i g(x_i)v_i$ \\
\midrule
\multicolumn{2}{@{}l}{\textit{Estimator, target, and goal}}\\
\midrule
$\mathcal{M}$ & the estimator, $\mathcal{M}:\R^n\to\HH$; a function $g$ enters through $(g(x_i))_{i=1}^n$ \\
$\widehat f$ & the fit of $f^*$, $\widehat f=\mathcal{M}(y)$ \\
$\mathcal{E}(\widehat f)$ & excess risk, $\norm{\widehat f-f^*}_n^2$ \\
$\widehat U_\alpha$ & upper confidence bound at level $1-\alpha$, i.e.\ $\Pr(\mathcal{E}(\widehat f)\le\widehat U_\alpha)\ge 1-\alpha$ \\
\midrule
\multicolumn{2}{@{}l}{\textit{Kernel ridge regression, the running instance}}\\
\midrule
$K$ & Gram matrix, $K=(k(x_i,x_j))_{ij}$ \\
$\lambda$ & ridge penalty \\
$H$ & kernel ridge smoother, $H=K(K+n\lambda I)^{-1}$; $\widehat f=Hy$ \\
$\mu_j$ & eigenvalues of $K/n$, $\mu_1\ge\cdots\ge0$ \\
$\asymp$ & equality up to constant factors \\
$s$ & smoothness index, $\mu_j\asymp j^{-2s}$ \\
$d_n$ & effective dimension, $d_n=\operatorname{tr}(H)$ \\
$\check f,\,R$ & constrained fit \eqref{eq:constrained} and its radius, $\norm{f}_\HH\le R$ \\
\midrule
\multicolumn{2}{@{}l}{\textit{The Gaussian-refit bound} (\S\ref{sec:certificate}), $\widehat U_\alpha=(q_{1-\alpha}(a)+b)^2$}\\
\midrule
$\xi$ & Gaussian multiplier, $\xi\sim\mathcal N(0,I_n)$ \\
$\circ$ & coordinatewise product, $(u\circ v)_i=u_i v_i$ \\
$a$ & envelope replacing the unknown $|w|$; $a_i\ge|w_i|$, $\forall i$ \\
$m(a)$ & movement of the fit under one draw, $\norm{\mathcal{M}(y+\xi\circ a)-\mathcal{M}(y)}_n$ \\
$L$ & number of refit draws \\
$q_{1-\alpha}(a)$ & noise term, the $\lceil(L+1)(1-\alpha)\rceil$-th smallest of the $L$ movements \\
$b$ & bias input, any $b\ge\norm{(H-I)f^*}_n$ \\
\bottomrule
\end{tabular}}
\end{table}

Given samples $\{(x_i,y_i)\}_{i=1}^n$ with fixed input covariates $x_i\in\mathcal{X}\subseteq\R^d$ and scalar responses $y_i\in\R$ for each $i$, suppose there exists an unknown regression function $f^*$ in a reproducing kernel Hilbert space (RKHS) $\HH$, with kernel $k$ and norm $\norm{\cdot}_\HH$, building the relationship between the covariates and the responses, that is,
\begin{equation}
y_i \;=\; f^*(x_i) + w_i, \qquad i=1,\dots,n,
\label{eq:model}
\end{equation}
where $w=(w_1,\dots,w_n)$ is the observation noise. Let $\widehat f$ be the fitted estimator of $f^*$, a transformation of the observed responses $y=(y_1,\dots,y_n)$ produced by the training procedure, that is, $\widehat f=\mathcal{M}(y)$, where $\mathcal{M}:\R^n\to\HH$ denotes the transformation. We identify a function $g\in\HH$ with its value-vector $(g(x_i))_{i=1}^n$ at the design points; then $\mathcal{M}(g)$ denotes $\mathcal{M}$ applied to those values, and functions and vectors combine coordinatewise there, so that $\mathcal{M}$ acts on a fit or on $f^*$ as well as on the response $y$. Consider the instance-wise excess risk as the prediction error of the fitted model $\widehat f$, that is,
\begin{equation}
\mathcal{E}(\widehat f) \;:=\; \norm{\widehat f - f^*}_n^2 \;=\; \tfrac1n\textstyle\sum_i \bigl(\widehat f(x_i)-f^*(x_i)\bigr)^2 ,
\label{eq:target}
\end{equation}
where $\norm{\cdot}_n$ is the empirical norm, $\norm{g}_n^2:=\tfrac1n\sum_i g(x_i)^2$ for any $g:\mathcal{X}\to\R$. In this work, we seek a computable upper confidence bound $\widehat U_\alpha$ for the excess risk $\mathcal{E}(\widehat f)$ at a user-chosen level $1-\alpha$, that is,
\begin{equation}
\Pr\bigl(\mathcal{E}(\widehat f)\le\widehat U_\alpha\bigr)\ge 1-\alpha, \qquad \alpha\in(0,1) .
\label{eq:goal}
\end{equation}

Two assumptions constrain the model and the noise: the bounded target and kernel of Assumption~\ref{as:model}, which feed the worst-case envelope and bias input of \S\ref{sec:the-bound}, and the conditional symmetry of Assumption~\ref{as:noise}, the sole distributional requirement, which licenses the sign perturbation. Table~\ref{tab:notation} collects the notation. We write $\langle g,v\rangle_n:=\tfrac1n\sum_i g(x_i)v_i$ for the pairing of a function $g$ with a vector $v\in\R^n$ at the design points.

\begin{assumption}[Bounded target and kernel]
\label{as:model}
The regression function has bounded norm and the kernel is bounded:
\[
\norm{f^*}_\HH\le B, \qquad \sup_{x\in\mathcal{X}} k(x,x)\le\kappa^2 ,
\]
for known constants $B$ and $\kappa$.
\end{assumption}

\begin{assumption}[Conditional symmetry of the noise]
\label{as:noise}
Decompose the noise vector into its magnitude and sign, i.e., $w_i=\varepsilon_i|w_i|$ with $\varepsilon_i:=\operatorname{sign}(w_i)\in\{-1,1\}$, and $|w|:=(|w_1|,\dots,|w_n|)$. Suppose the signs are i.i.d.\ and independent of the magnitudes, i.e.,
\[
\varepsilon\Perp|w|, \qquad \Pr(\varepsilon_i=+1)=\Pr(\varepsilon_i=-1)=\tfrac12, \quad \forall i .
\]
\end{assumption}

Equivalently, the sign vector is uniform on the hypercube and independent of the magnitudes, $\varepsilon\sim\operatorname{Unif}\{-1,1\}^n$. The magnitudes $|w|$ are otherwise arbitrary: they may be dependent, heavy-tailed, or of varying scale across the observations. The noise need not have any finite moment, and $\E[w_i^2]$ may be infinite. This is condition~(11b) of \citet{wainwright2025wild}.

\paragraph{The estimators.}
We study two estimators, both firmly non-expansive. The first, our object of study, is \emph{kernel ridge regression},
\begin{equation}
\widehat f \;:=\; \arg\min_{f\in\HH}\Bigl\{\tfrac1n\textstyle\sum_i(y_i-f(x_i))^2+\lambda\norm{f}_\HH^2\Bigr\} \;=\; Hy,
\qquad H \;:=\; K(K+n\lambda I)^{-1},
\label{eq:krr}
\end{equation}
the penalized least-squares linear smoother with Gram matrix $K=(k(x_i,x_j))_{ij}$; we write $\widehat f$ interchangeably for the fitted function $\mathcal{M}(y)\in\HH$ and its vector of values $Hy=(\widehat f(x_i))_{i=1}^n$ at the design points. The smoother is symmetric with eigenvalues in $[0,1]$; write $\mu_1\ge\cdots\ge\mu_n\ge0$ for the eigenvalues of $K/n$ and $d_n:=\operatorname{tr}(H)=\sum_j\mu_j/(\mu_j+\lambda)$ for the effective dimension. Every guarantee in this paper is for \eqref{eq:krr}.
The second, its nonlinear counterpart, is the reproducing-kernel-ball--constrained least-squares fit
\begin{equation}
\check f \;:=\; \arg\min_{\norm{f}_\HH\le R}\ \tfrac1n\textstyle\sum_i\bigl(y_i-f(x_i)\bigr)^2 ,
\label{eq:constrained}
\end{equation}
the projection of the data onto the ball $\{f:\norm{f}_\HH\le R\}$; this is the setting of the wild-refit theory of \citet{wainwright2025wild}, and we use it in \S\ref{sec:sim-main} to test our construction empirically. We take $R\ge B$, so that $f^*$ is feasible for \eqref{eq:constrained}.
Both \eqref{eq:krr} and \eqref{eq:constrained} are firmly non-expansive in the empirical norm, $\norm{\mathcal{M}(y)-\mathcal{M}(y')}_n^2\le\langle\mathcal{M}(y)-\mathcal{M}(y'),\,y-y'\rangle_n$: kernel ridge through the linear form $H$, since $H-H^2=H(I-H)$ is positive semidefinite, and the constrained fit through the projection. This property underlies the movement query of \S\ref{sec:certificate} for the linear smoother and the portability of the construction to the constrained fit in \S\ref{sec:data-driven}. The guarantees of \S\ref{sec:theory} use the linear structure of \eqref{eq:krr}, and \eqref{eq:constrained} marks the boundary of what the construction reaches beyond it.

\section{The calibrated wild-refit bound}
\label{sec:certificate}

We build on the wild-refit inequality of \citet{wainwright2025wild}, which bounds the prediction error by a computable movement plus a remainder; for the kernel ridge smoother that remainder is uncomputable and the bound degenerates (\S\ref{sec:wr-inequality}). We therefore replace the optimism decomposition by a direct noise-and-bias split, and bound its noise term with Gaussian rather than Rademacher multipliers: Anderson's inequality licenses a computable envelope, and calibrating the movement by an order statistic yields $\widehat U_\alpha=(q_{1-\alpha}(a)+b)^2$ (\S\ref{sec:completion}). We then give it in a worst-case version, proved valid and rate-optimal in \S\ref{sec:theory} (\S\ref{sec:the-bound}), and a tighter data-driven version evaluated in \S\ref{sec:simulation}, whose reach beyond the linear smoother we test on the constrained fit \eqref{eq:constrained} (\S\ref{sec:data-driven}).

\subsection{The wild-refit inequality of Wainwright}
\label{sec:wr-inequality}

Recall the kernel ridge fit \eqref{eq:krr}. We seek a high-probability bound on the excess risk $\mathcal{E}(\widehat f)$; the wild refit of \citet{wainwright2025wild} supplies one, and specializing it to kernel ridge shows why a new construction is needed. Decompose the error into an estimation part and an approximation part,
\begin{equation}
\mathcal{E}(\widehat f)\;\le\;2\norm{\widehat f-f^\dagger}_n^2+2\norm{f^\dagger-f^*}_n^2 ,
\qquad
f^\dagger \;:=\; \mathcal{M}(f^*) \;=\; Hf^* ,
\label{eq:err-split}
\end{equation}
where $f^\dagger$, the \emph{best penalized approximation}, is the fit the estimator returns from the noiseless response, so that $\norm{f^\dagger-f^*}_n=\norm{(H-I)f^*}_n$ is the approximation bias. The estimation part is driven by the noise through a cross term: firm non-expansiveness gives $\norm{\widehat f-f^\dagger}_n^2\le\langle\widehat f-f^\dagger,\,w\rangle_n$, unobservable since it involves $w$ and $f^*$. Bounding such cross terms is the role of the wild refit: it perturbs the residuals $\widetilde w:=y-\widehat f$ with i.i.d.\ Rademacher signs $\varepsilon$ and refits at a scale $\rho>0$, giving the \emph{wild refit} and the computable \emph{wild optimism}
\begin{equation}
\widehat f^{\bullet}_\rho \;:=\; \mathcal{M}\bigl(\widehat f+\rho\,\varepsilon\circ\widetilde w\bigr),
\qquad
\Optw^{\bullet}_\rho \;:=\; \tfrac1n\textstyle\sum_i \varepsilon_i\widetilde w_i\bigl(\widehat f^{\bullet}_\rho(x_i)-\widehat f(x_i)\bigr).
\label{eq:wild-refit}
\end{equation}
Two results of \citet{wainwright2025wild}, restated in \appref{app:degeneracy}, turn the wild refit into a bound. The first bounds the \emph{optimism} $\Opt^\star(\widehat f):=\tfrac1n\sum_i w_i(\widehat f(x_i)-f^*(x_i))$, the cross term at the fit, by the wild optimism plus a remainder. The second bounds the estimation error by the \emph{critical radius} $r^*$, the fixed point $W_n(r^*)\asymp(r^*)^2$ of the \emph{wild complexity} $W_n(r):=\sup_{\norm{f-\widehat f}_n\le r}\tfrac1n\sum_i\varepsilon_i\widetilde w_i(f(x_i)-\widehat f(x_i))$, the cross term localized to an $r$-ball, whose growth sets the rate. Wainwright's theory is developed for the constrained fit \eqref{eq:constrained} and defers the penalized smoother \eqref{eq:krr}; specializing both results there shows that neither ingredient survives.

\begin{corollary}[Degeneracy of the wild-refit bound]
\label{cor:w25krr}
Take the linear smoother \eqref{eq:krr} with the pilot equal to the fit, so that the residuals are $\widetilde w=(I-H)y$ and the best penalized approximation is $f^\dagger=Hf^*$.

\emph{(i) The optimism bound is uncomputable.} For any $t>0$, at the scale $\rho$ matched to a radius $r\ge\norm{\widehat f-f^\dagger}_n$ by $\norm{\widehat f^{\bullet}_\rho-\widehat f}_n=2r$, with probability at least $1-4\exp(-t^2)$,
\begin{equation}
\Opt^\star(\widehat f)\;\le\;
\underbrace{\Optw^{\bullet}_\rho}_{\text{computable}}
\;+\;
\underbrace{\bigl(3r+\norm{(H-I)f^*}_n\bigr)\tfrac{2\norm{w}_\infty t}{\sqrt n}+A_n}_{\text{not computable}} ,
\label{eq:w25krr}
\end{equation}
whose remainder involves the radius $r$ (unobservable through $f^\dagger$), the noise sup-norm $\norm{w}_\infty$, the bias $\norm{(H-I)f^*}_n$, and a pilot-approximation supremum $A_n$, none computable from the data.

\emph{(ii) The rate degenerates.} The wild complexity is a linear maximization, $W_n(r)=\sup_{\norm{g}_n\le r}\langle\varepsilon\circ\widetilde w,\,g\rangle_n=r\norm{\widetilde w}_n$, so the critical radius is the residual norm, $r^*\asymp\norm{(I-H)y}_n$, of constant order, and the certified bound reads, with probability at least $1-4\exp(-t^2)$,
\begin{equation}
\norm{\widehat f-f^*}_n\;\lesssim\;\norm{(I-H)y}_n ,
\label{eq:w25krr-rate}
\end{equation}
against the truth $\norm{\widehat f-f^*}_n=O_P(n^{-s/(2s+1)})$ of Lemma~\ref{lem:krr-rate}: a gap of order at least $n^{s/(2s+1)}\to\infty$. The restated propositions and the derivation are in \appref{app:degeneracy}.
\end{corollary}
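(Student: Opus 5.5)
The plan is to specialize the two restated propositions of \citet{wainwright2025wild} (\appref{app:degeneracy}) to the linear smoother $H$, one for each part.

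\emph{Part (i).} I will invoke the optimism bound of \citet{wainwright2025wild} with pilot $\widehat f$. Its remainder has three sources:
\begin{itemize}[leftmargin=*, itemsep=1pt, topsep=2pt]
\item a Lipschitz/concentration term for the sign-driven cross term over the $r$-ball around $f^\dagger$, scaled by $\norm{w}_\infty t/\sqrt n$ through Hoeffding's inequality for bounded sign sums;
\item a term reflecting the distance from the pilot to $f^\dagger$;
\item the pilot-approximation supremum $A_n$.
\end{itemize}
For $H$, I will identify $f^\dagger=\mathcal M(f^*)=Hf^*$ and check that firm non-expansiveness holds, since $H-H^2\succeq0$; this is the hypothesis their proposition needs in place of the projection property.

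The radius $r\ge\norm{\widehat f-f^\dagger}_n$ enters through the matching condition $\norm{\widehat f^\bullet_\rho-\widehat f}_n=2r$. The triangle inequality then bounds the relevant distances by $r+2r$ from the refit and the fit, plus $\norm{(H-I)f^*}_n$ from the gap between $f^\dagger$ and $f^*$. This produces the factor $3r+\norm{(H-I)f^*}_n$.

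A union bound over the two concentration events (each $2e^{-t^2}$) gives probability $1-4e^{-t^2}$. Non-computability is then read off term by term: $r$ requires $f^\dagger=Hf^*$, $\norm{w}_\infty$ requires $w$, and the bias requires $f^*$.

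\emph{Part (ii).} The key observation is that for $H$ the localized class $\{f:\norm{f-\widehat f}_n\le r\}$ is unconstrained by any RKHS ball. At the design points $g=f-\widehat f$ ranges over all vectors in the range of $K$, which is all of $\R^n$ when $K$ is invertible, with $\norm{g}_n\le r$. Cauchy--Schwarz in $\langle\cdot,\cdot\rangle_n$ then gives
\[
W_n(r)=r\norm{\varepsilon\circ\widetilde w}_n=r\norm{\widetilde w}_n,
\]
since $|\varepsilon_i|=1$.

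The fixed-point equation $r\norm{\widetilde w}_n\asymp r^2$ yields $r^*\asymp\norm{(I-H)y}_n$. Substituting into Wainwright's critical-radius bound gives \eqref{eq:w25krr-rate}.

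To show this is of constant order, I will write $(I-H)y=(I-H)f^*+(I-H)w$. The noise part satisfies
\[
\E\norm{(I-H)w}_n^2\ \ge\ \sigma^2\,\frac{n-2d_n+\operatorname{tr}(H^2)}{n}\ \ge\ \sigma^2\Bigl(1-\frac{2d_n}{n}\Bigr),
\]
which is bounded below because $d_n=o(n)$. Under heavy tails I will instead argue via medians, using symmetry and a lower bound on $\Pr(|w_i|>c)$. Comparing with the rate of Lemma~\ref{lem:krr-rate} gives the stated ratio $\gtrsim n^{s/(2s+1)}$.

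\emph{Main obstacle.} The hardest step is the careful instantiation in (i): matching Wainwright's constrained-class assumptions (star-shapedness and the scale-matching of $\rho$) to the penalized smoother, and confirming that the remainder constants come out exactly as $3r+\norm{(H-I)f^*}_n$. I would first try to show that their proof uses only firm non-expansiveness plus linearity, so that it transfers verbatim.
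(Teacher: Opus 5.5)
Your plan matches the paper's. Part (i) is obtained by substituting $f^\dagger=Hf^*$ and the bias $\norm{(H-I)f^*}_n$ into the restated wild-refit proposition. The paper treats that remainder as a black box, so your reconstruction of the $3r$ factor and the $4e^{-t^2}$ union bound is not needed; it only notes that $\Optw^{\bullet}_\rho$ is an explicit, computable expression in $H$ and $\varepsilon\circ\widetilde w$. Part (ii) is the same Cauchy--Schwarz evaluation $W_n(r)=r\norm{\widetilde w}_n$ followed by the fixed-point equation.

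The one real difference is that the paper merely asserts that $\norm{(I-H)y}_n$ is of constant order. Your expectation lower bound supplies this, at the cost of an added lower bound on the noise variance (the paper's $\sigma$ is only an upper scale) and a concentration step to pass from expectation to probability.
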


The deficiency of \eqref{eq:w25krr} lies in the inputs, not the inequalities: the concentration bounds behind the wild-refit theory apply to the Rademacher chaos without difficulty, but they are phrased in the noise sup-norm, the population bias, and a radius reachable only through $f^\dagger$, with generic constants, so no sharper tail inequality repairs the bound; what is missing is an observable surrogate for those inputs, and \S\ref{sec:completion} shows that under two-point multipliers no surrogate can carry a guarantee, their law not being monotone in its magnitudes. The wild complexity keys off the geometry of a constraint set, which the linear smoother lacks; kernel ridge calls instead for a construction tailored to its linear structure, which we give next.

\subsection{Completion and calibration}
\label{sec:completion}

Corollary~\ref{cor:w25krr} closes the optimism route for the kernel ridge smoother, but its linear structure makes the detour unnecessary. Since $\widehat f=Hy$ and $f^\dagger=Hf^*$ \eqref{eq:err-split}, the estimation error of \S\ref{sec:wr-inequality} is explicit, $\widehat f-f^\dagger=Hw$, so the triangle inequality behind \eqref{eq:err-split} reads
\begin{equation}
\norm{\widehat f-f^*}_n \;\le\; \underbrace{\norm{Hw}_n}_{=\;\norm{\widehat f-f^\dagger}_n} \;+\; b ,
\label{eq:wr-ineq}
\end{equation}
with $b\ge\norm{(H-I)f^*}_n=\norm{f^\dagger-f^*}_n$ an input bounding the approximation bias of \eqref{eq:err-split}. The noise term $\norm{Hw}_n$ is the estimation error that the wild-refit theory reached only through the optimism and its critical radius, delivered directly by the linear smoother; it is still unobservable, so we bound it by simulation, perturbing with Gaussian multipliers at a computable envelope, incurring an explicit remainder in place of the uncomputable one of \eqref{eq:w25krr}.

\paragraph{Completion.}
To bound the noise term we simulate it. Since $\norm{Hw}_n=\norm{H(\varepsilon\circ|w|)}_n$ is driven by the unknown magnitudes $|w|$ and signs $\varepsilon$, we replace both: the magnitudes by a computable \emph{envelope} $a\ge|w|$, and the signs by Gaussian multipliers $\xi$. The estimator's response to the synthetic noise is read through a \emph{movement query}, a single refit at $y+\xi\circ a$, giving the \emph{movement} $m(a)=\norm{H(\xi\circ a)}_n$. The completion rests on the chain
\[
\norm{Hw}_n \;=\; \norm{H(\varepsilon\circ|w|)}_n
\;\underset{\text{signs}\,\to\,\xi}{\preceq}\; \norm{H(\xi\circ|w|)}_n
\;\underset{|w|\,\to\,a}{\preceq}\; \norm{H(\xi\circ a)}_n \;=\; m(a),
\]
where $\preceq$ denotes domination of the upper tail. The first step is strict: a sign-refit at the true magnitudes would reproduce the law of $\norm{Hw}_n$ exactly, leaving no margin, while the Gaussian's extra dispersion lifts the upper tail (Proposition~\ref{prop:gauss-dom}, up to an explicit remainder). The second step is Anderson's inequality (Lemma~\ref{lem:anderson}); neither step survives sign multipliers (Remark~\ref{rem:twopoint}). The law of $m(a)$ is exactly known for the linear smoother (\appref{app:exact-law}), so the calibrated threshold carries no analytic constants. The movement query, the envelope, and the two dominations are made precise next.

\begin{definition}[Movement query]
\label{def:movement}
For a perturbation $v\in\R^n$, the movement query returns the displacement of the
fit about the base point $\mathcal{M}(y)$,
\[
\Delta_{\mathcal{M}}(v)\;:=\;\mathcal{M}(y+v)-\mathcal{M}(y),
\]
which for the linear smoother \eqref{eq:krr} equals $Hv$.
\end{definition}

\begin{definition}[Noise envelope]
\label{def:envelope}
A noise envelope is a vector $a\in\R_{\ge0}^n$ dominating the noise magnitudes coordinatewise,
\begin{equation}
a_i\ge|w_i|\quad\text{for every }i .
\label{eq:envelope-cond}
\end{equation}
\end{definition}

Draw Gaussian multipliers $\xi\sim\mathcal N(0,I_n)$ and, for a noise envelope $a$, form the refit movement
\begin{equation}
m(a) \;:=\; \norm{\Delta_{\mathcal{M}}(\xi\circ a)}_n ,
\label{eq:refit}
\end{equation}
which for a linear smoother is $m(a)=\norm{H(\xi\circ a)}_n$.

\emph{First, the magnitudes.} At an envelope satisfying \eqref{eq:envelope-cond}, inflating them to $a\ge|w|$ can only enlarge the movement, by a property of Gaussian measures with no counterpart for two-point multipliers.

\begin{lemma}[Envelope domination]
\label{lem:anderson}
Let $u,v\in\R^n$ be fixed with $u_i\ge v_i\ge 0$ for every $i$, and $\xi\sim\mathcal N(0,I_n)$. For any matrix $A$ and every $t\ge0$,
\[
\Pr\bigl(\norm{A(\xi\circ u)}> t\bigr) \;\ge\; \Pr\bigl(\norm{A(\xi\circ v)}> t\bigr).
\]
\end{lemma}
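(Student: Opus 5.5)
The plan is to reduce to one coordinate at a time and invoke Anderson's inequality on the symmetric convex set $C_t:=\{z\in\R^n:\norm{Az}\le t\}$. The norm $\norm{\cdot}$ is Euclidean, or the empirical norm, which differs by a factor $\sqrt n$; any seminorm works. Since $z\mapsto\norm{Az}$ is a seminorm, $C_t$ is convex and symmetric about the origin for every $t\ge0$. The claim is equivalent to
\[
\Pr(\xi\circ u\in C_t)\;\le\;\Pr(\xi\circ v\in C_t),
\]
that is, the centered Gaussian vector with the larger coordinatewise variances puts less mass on every symmetric convex set.

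\textbf{Decomposition.} Write $D_u=\operatorname{diag}(u)$ and $D_v=\operatorname{diag}(v)$. Then $\xi\circ u\sim\mathcal N(0,D_u^2)$ and $\xi\circ v\sim\mathcal N(0,D_v^2)$. Because $u_i\ge v_i\ge0$, the difference $\Sigma:=D_u^2-D_v^2=\operatorname{diag}(u_i^2-v_i^2)$ is positive semidefinite. Hence, in law,
\[
\xi\circ u\;\overset{d}{=}\;\xi\circ v+\eta, \qquad \eta\sim\mathcal N(0,\Sigma)\ \text{independent of}\ \xi .
\]
This is a direct check on characteristic functions, or one can take $\eta=\xi'\circ(u^2-v^2)^{1/2}$ with an independent $\xi'$.

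\textbf{Applying Anderson.} Condition on $\eta$ and consider
\[
\Pr(\xi\circ v+\eta\in C_t\mid\eta)=\Pr(\xi\circ v\in C_t-\eta).
\]
Anderson's inequality states that for a centered Gaussian, or more generally any symmetric unimodal law, and a symmetric convex set $C$, we have $\Pr(X\in C-y)\le\Pr(X\in C)$ for every shift $y$.

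The only subtlety is that $\xi\circ v$ may be degenerate when some $v_i=0$, so it has no Lebesgue density. I would handle this in one of two ways. One option is to apply Anderson on the subspace spanned by the coordinates with $v_i>0$; there the law has a symmetric log-concave density, and the section of $C_t-\eta$ by the affine slice is again a shift of a symmetric convex set within that subspace. The other option is to replace $v$ by $v+\delta\mathbf 1$, apply Anderson with the density in hand, and let $\delta\downarrow0$. Passing to the limit uses weak convergence and the fact that $C_t$ is closed. It is cleanest to state everything with non-strict events $\{\norm{\cdot}\le t\}$ and obtain strict ones by taking complements.

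Integrating the conditional bound over $\eta$ gives $\Pr(\xi\circ u\in C_t)\le\Pr(\xi\circ v\in C_t)$. Taking complements yields the stated inequality for every $t\ge0$.

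\textbf{Main obstacle.} The only delicate point is the degenerate-covariance case together with the boundary and limit argument; the rest is a direct use of Anderson. I would use the perturbation $v+\delta\mathbf 1$. The probability $\Pr(\norm{A(\xi\circ v_\delta)}\le t)$ is continuous in $\delta$ at every $t$ that is not an atom of the limit law. Both sides are monotone in $t$, so the inequality extends to all $t$ by right-continuity of distribution functions.

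As a remark, the argument also shows exactly where Rademacher multipliers fail. The decomposition of $\varepsilon\circ u$ into $\varepsilon\circ v$ plus an independent symmetric term is unavailable for two-point laws, which are not unimodal, matching Remark~\ref{rem:twopoint}.
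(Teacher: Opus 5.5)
Your proof is correct and is essentially the paper's argument: both rest on Anderson's inequality together with $\operatorname{diag}(u^2-v^2)\succeq 0$. The paper pushes forward through $A$ and applies the covariance-comparison form of Anderson (Lemma~\ref{lem:anderson-tool}, which allows singular covariances) to the Euclidean ball in the image space, using $A\operatorname{diag}(v^2)A^\top\preceq A\operatorname{diag}(u^2)A^\top$. You instead pull the ball back to $\{z:\norm{Az}\le t\}$ and rederive that comparison from the shift form via $\xi\circ u\overset{d}{=}\xi\circ v+\eta$, and you treat the degenerate-covariance case explicitly by perturbation, which the paper's cited lemma absorbs.
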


This is Anderson's inequality \citep{anderson1955integral}, proved in \appref{app:proof-anderson}. For Rademacher multipliers it fails: with $u=(1,1)$, $v=(1,0)$ and $A=[1,1]$ the two signs cancel and $\Pr(|\varepsilon_1+\varepsilon_2|>\tfrac12)=\tfrac12<1$. The Gaussian is the multiplier under which the completion is available.

\begin{remark}[No sign multiplier is envelope-monotone]
\label{rem:twopoint}
The failure is not particular to the example or to its scale: any symmetric two-point multiplier is a scaled Rademacher law, and the two coordinates cancel identically at every scale $\rho>0$, so the counterexample persists for the whole family. The monotonicity of the movement in the magnitudes, which Lemma~\ref{lem:anderson} supplies for Gaussian multipliers, is therefore unavailable to sign-based refits altogether: no envelope substitution under a two-point multiplier carries a one-sided guarantee.
\end{remark}

\emph{Second, the signs.} With the magnitudes handled by the envelope, the remaining difference from the noise term is the signs: at the true magnitudes, the Gaussian $\xi$ replaces the real signs $\varepsilon$ and keeps the movement conservative.

\begin{proposition}[Gaussian domination of the signed-noise term]
\label{prop:gauss-dom}
Fix the magnitudes $|w|$ and draw the signs $\varepsilon\sim\operatorname{Unif}\{-1,1\}^n$
of Assumption~\ref{as:noise}. The squared noise term and the squared Gaussian
movement at the true magnitudes,
\[
\norm{Hw}_n^2=\tfrac1n(\varepsilon\circ|w|)^\top H^\top H(\varepsilon\circ|w|),
\qquad
m(|w|)^2=\tfrac1n(\xi\circ|w|)^\top H^\top H(\xi\circ|w|),
\]
are quadratic forms in $H^\top H$ with the same mean, and the Gaussian form has the
larger variance. Consequently the Gaussian movement is conservative in the upper
tail: for every $\alpha\in(0,\tfrac12)$, the $(1-\alpha)$-quantile $q_{1-\alpha}(m(|w|))$ of the movement satisfies
\[
\Pr\bigl(\norm{Hw}_n>q_{1-\alpha}(m(|w|))\bigr)\;\le\;\alpha+\Delta_n,
\]
where $\Delta_n$ is the Berry--Esseen remainder made explicit in
Theorem~\ref{thm:validity}.
\end{proposition}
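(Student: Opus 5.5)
The plan is to write $M:=\tfrac1n H^\top H$ (symmetric PSD), $z:=\varepsilon\circ|w|$ and $g:=\xi\circ|w|$, so that $\norm{Hw}_n^2=z^\top Mz$ and $m(|w|)^2=g^\top Mg$. The argument has three parts: a moment comparison, a Gaussian approximation for each side, and a tail comparison.

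\textbf{Moment comparison.} Conditional on $|w|$, both $z$ and $g$ have independent, mean-zero coordinates with variance $|w_i|^2$. Hence
\[
\E[z^\top Mz]=\E[g^\top Mg]=\textstyle\sum_i M_{ii}|w_i|^2 .
\]
For the variances, the standard formula for quadratic forms in independent coordinates gives
\[
\Var(z^\top Mz)=\textstyle\sum_{i\ne j}2M_{ij}^2|w_i|^2|w_j|^2 ,
\]
since $\varepsilon_i^2=1$ makes the diagonal part deterministic. The Gaussian form has the extra term
\[
\Var(g^\top Mg)-\Var(z^\top Mz)=\textstyle\sum_i 2M_{ii}^2|w_i|^4\ge0 ,
\]
which comes from the excess kurtosis $\E\xi^4-1=2$. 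This settles the equal-mean and larger-variance claims.

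\textbf{Gaussian approximation.} I would approximate each centered quadratic form by a normal law with its own variance. For the Rademacher chaos I would use a Berry--Esseen bound for degenerate second-order U-statistics or chaoses; the de Jong or Götze--Tikhomirov type bounds control the error through $\operatorname{tr}$-type quantities and the maximal influence $\max_i\sum_j M_{ij}^2|w_i|^2|w_j|^2/\sigma^2$. For the Gaussian form, diagonalize: it equals $\sum_k \lambda_k\chi^2_1$ with $\lambda_k$ the eigenvalues of $D_{|w|}MD_{|w|}$, and classical Berry--Esseen applies with remainder of order $\max_k\lambda_k/\sqrt{\sum\lambda_k^2}$. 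Both remainders are collected into $\Delta_n$; this matches the paper's remark that validity needs the noise to be delocalized by the smoother.

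\textbf{Tail comparison.} Let $\sigma_\varepsilon\le\sigma_\xi$ be the two standard deviations and $\mu$ the common mean. For $\alpha<\tfrac12$ the normal quantile $z_{1-\alpha}$ is positive, so
\[
\mu+\sigma_\varepsilon z_{1-\alpha}\le\mu+\sigma_\xi z_{1-\alpha}.
\]
Transfer this through the two approximations. The Gaussian movement's $(1-\alpha)$-quantile is at least $\mu+\sigma_\xi z_{1-\alpha'}$ with $\alpha'$ within the Gaussian-side remainder of $\alpha$. Then
\[
\Pr\bigl(z^\top Mz>\mu+\sigma_\xi z_{1-\alpha'}\bigr)\le \Pr(N(0,1)>z_{1-\alpha'}\sigma_\xi/\sigma_\varepsilon)+\text{remainder}\le\alpha'+\text{remainder}.
\]
Taking square roots is monotone, so the statement about $q_{1-\alpha}(m(|w|))$ follows.

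\textbf{Main obstacle.} The main obstacle is the Berry--Esseen step for the Rademacher chaos. Because of the diagonal, it is a degenerate quadratic form whose normal approximation can fail, for example when $M$ is dominated by a few coordinates or has low rank. I would first try de Jong's CLT with Chatterjee's rate, which has error of order $\sqrt{\max_i\text{influence}_i}+\sqrt{\operatorname{tr}((DMD)^4)}/\sigma^2$. Failing that, I would accept the remainder as an explicit quantity and defer its size to Theorem~\ref{thm:validity}, as the statement permits.
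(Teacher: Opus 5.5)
Your argument is the paper's own, namely Steps 2--3 of the proof of Theorem~\ref{thm:validity}. It has the same pieces: equal means with variance deficit $1-\rho^2$, a Lyapunov Berry--Esseen bound of order $\delta$ for the weighted $\chi^2_1$ sum, a Berry--Esseen bound of order $\delta_N$ for the sign chaos, and the deficit exploited at a nonnegative threshold. Two small remarks:
\begin{enumerate}
\item The chaos bound must be stated for the off-diagonal part of $D_{|w|}MD_{|w|}$ (the paper's $N$), not for $D_{|w|}MD_{|w|}$ itself. The diagonal is deterministic under signs, and its fourth-power trace can dwarf the squared chaos variance when the diagonal dominates.
\item Your inversion $\Phi(g_{1-\alpha})\ge1-\alpha-\eta_G$ is cleaner than the paper's. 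Here $\eta_G,\eta_R$ are the Gaussian-side and sign-side Kolmogorov remainders. It yields $\alpha+\eta_G+\eta_R$ without the factor $1/\phi(z_{1-\alpha})$.
\end{enumerate}

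The genuine gap is in your last step, and the paper's proof makes the same leap (``taking $s=g_{1-\alpha}>0$ (as $\alpha<\tfrac12$)''). The bound $1-\Phi\bigl(z_{1-\alpha'}\sigma_\xi/\sigma_\varepsilon\bigr)\le\alpha'$ requires $z_{1-\alpha'}\ge0$, that is $\alpha+\eta_G\le\tfrac12$. The hypothesis $\alpha<\tfrac12$ does not give this. The Gaussian form is right-skewed, so its $(1-\alpha)$-quantile can lie below its mean for $\alpha$ near $\tfrac12$. At a negative threshold the variance deficit then works against you.

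This is not cosmetic; here is a counterexample.
\begin{itemize}
\item \emph{Setup.} Take $|w|\equiv1$ and $H=\sqrt{c^2+\beta}\,P+c(I-P)$ with $0<c<\sqrt{c^2+\beta}<1$ and $P$ a real circulant projection of rank $n/2$. This $H$ is a kernel ridge smoother, since $H=K(K+n\lambda I)^{-1}$ for $K=n\lambda H(I-H)^{-1}$.
\item \emph{Sign side.} $\norm{Hw}_n^2=c^2+\tfrac{\beta}{n}\varepsilon^\top P\varepsilon\ge c^2$ surely.
\item \emph{Gaussian side.} $n\,m(|w|)^2=\xi^\top(c^2I+\beta P)\xi\to c^2\chi^2_n$ as $\beta\downarrow0$, and the median of $\chi^2_n$ is below $n$.
\item \emph{Conclusion.} For $\alpha$ in a window of width $\asymp n^{-1/2}$ below $\tfrac12$ and $\beta$ small, the left side equals $1$. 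Meanwhile $\delta\le\sqrt{2/n}$ and $\delta_N=n^{-1/2}$, because $P$ has constant diagonal $\tfrac12$. So $\alpha+\Delta_n=\tfrac12+O(n^{-1/2})<1$ for large $n$, and the inequality fails.
\end{itemize}

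You need an extra hypothesis such as $\alpha+C\delta\le\tfrac12$, with $C$ the Gaussian-side Berry--Esseen constant. Under it $g_{1-\alpha}\ge z_{1-\alpha'}\ge0$, and your argument closes with remainder $\eta_G+\eta_R$. For fixed $\alpha<\tfrac12$ this hypothesis holds automatically once $\delta$ is small.
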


The two steps then chain to the threshold. By the magnitude inflation of Lemma~\ref{lem:anderson}, the $(1-\alpha)$-quantile of $m(a)$ at any envelope satisfying \eqref{eq:envelope-cond} is at least $q_{1-\alpha}(m(|w|))$, which dominates $\norm{Hw}_n$ by Proposition~\ref{prop:gauss-dom} up to $\Delta_n$; so that quantile of $m(a)$ is a conservative threshold for the noise term.

\paragraph{Calibration.}\label{sec:calibration}
Analytic tail bounds on $m(a)$ are conservative, so the threshold is read from the simulation directly.

\begin{definition}[Calibrated threshold]
\label{def:threshold}
Draw $\xi_1,\dots,\xi_L\sim\mathcal N(0,I_n)$, let $m_k$ be the movement \eqref{eq:refit} at $\xi_k$, and set the calibrated threshold to the order statistic
\begin{equation}
q_{1-\alpha}(a) \;:=\; m_{(k^*)}, \qquad k^*:=\lceil (L+1)(1-\alpha)\rceil ,
\label{eq:orderstat}
\end{equation}
the $k^*$-th smallest of the $L$ movements, which estimates the $(1-\alpha)$-quantile of $m(a)$ at every level from one set of draws.
\end{definition}

A single set of $L$ draws calibrates every level; the Monte Carlo error contributes the $C_1/\sqrt L$ term of Theorem~\ref{thm:validity}, and the exact weighted-chi-square law of $m(a)$ for the linear smoother is recorded in \appref{app:exact-law}.

Collecting the split \eqref{eq:wr-ineq} and the calibrated threshold gives our bound.

\begin{corollary}[The calibrated wild-refit bound]
\label{cor:bound}
Suppose the envelope satisfies \eqref{eq:envelope-cond} and the bias input obeys $b\ge\norm{(H-I)f^*}_n$. On the event $\{\norm{Hw}_n\le q_{1-\alpha}(a)\}$, the split \eqref{eq:wr-ineq} gives
\[
\norm{\widehat f-f^*}_n \;\le\; \norm{Hw}_n + b \;\le\; q_{1-\alpha}(a)+b ,
\]
and squaring yields
\begin{equation}
\mathcal{E}(\widehat f) \;\le\; \widehat U_\alpha \;:=\; \bigl(q_{1-\alpha}(a)+b\bigr)^2 .
\label{eq:ucb}
\end{equation}
\end{corollary}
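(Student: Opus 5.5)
The statement is deterministic on the stated event, so the plan is to verify the two displayed inequalities in turn and then square.

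First, we establish the split \eqref{eq:wr-ineq} for the kernel ridge smoother. Since $\widehat f=Hy=Hf^*+Hw$ at the design points, we have
\[
\widehat f-f^*=Hw+(H-I)f^* .
\]
The triangle inequality in the empirical norm then gives
\[
\norm{\widehat f-f^*}_n\le\norm{Hw}_n+\norm{(H-I)f^*}_n\le\norm{Hw}_n+b,
\]
where the last step uses the hypothesis $b\ge\norm{(H-I)f^*}_n$. This inequality holds for every realization of the noise; it needs neither Assumption~\ref{as:noise} nor the envelope condition.

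Second, on the event $\{\norm{Hw}_n\le q_{1-\alpha}(a)\}$ we replace $\norm{Hw}_n$ by $q_{1-\alpha}(a)$, which yields $\norm{\widehat f-f^*}_n\le q_{1-\alpha}(a)+b$. Both sides are nonnegative: the left side is a norm, and the right side is an order statistic of norms $m_k\ge0$ plus $b\ge0$. Since $t\mapsto t^2$ is monotone on $[0,\infty)$, squaring preserves the inequality. This gives $\mathcal{E}(\widehat f)=\norm{\widehat f-f^*}_n^2\le(q_{1-\alpha}(a)+b)^2=\widehat U_\alpha$, which is \eqref{eq:ucb}.

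No step here is a genuine obstacle. The one point to handle with care is that the corollary is a conditional, pathwise statement. Its probabilistic content, namely that the event $\{\norm{Hw}_n\le q_{1-\alpha}(a)\}$ has probability at least $1-\alpha$ up to remainders, is not part of this claim. That content comes from chaining Lemma~\ref{lem:anderson} (using $a\ge|w|$ from \eqref{eq:envelope-cond}), Proposition~\ref{prop:gauss-dom}, and the order-statistic calibration of Definition~\ref{def:threshold}, which is deferred to Theorem~\ref{thm:validity}. I would note explicitly that the envelope hypothesis enters only through that later probability bound and not through the deterministic inequality proved here.
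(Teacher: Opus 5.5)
Your proof is correct and takes the same route as the paper. The paper obtains the split \eqref{eq:wr-ineq} from $\widehat f-f^*=Hw+(H-I)f^*$ and the triangle inequality, then substitutes on the event and squares nonnegative quantities; you are also right that the envelope condition plays no role here and enters only through the probability bound of Theorem~\ref{thm:validity}.
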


The bound \eqref{eq:ucb} is \eqref{eq:intro-bound} in population form: the order statistic \eqref{eq:orderstat} estimates the quantile $q_{1-\alpha}(a)$ from the $L$ draws.

By the envelope-inflation of Lemma~\ref{lem:anderson} and the Gaussian domination of Proposition~\ref{prop:gauss-dom}, the calibrated quantile $q_{1-\alpha}(a)$ is a conservative threshold for the noise term, so the event of Corollary~\ref{cor:bound} holds with high probability; Theorem~\ref{thm:validity} makes this quantitative, bounding $\Pr(\mathcal{E}(\widehat f)>\widehat U_\alpha\mid|w|)$ by $\alpha$ up to explicit delocalization and Monte Carlo remainders. The bound is computed by Algorithm~\ref{alg:cert} with $L$ refits, and it remains to choose the envelope $a$ and the matching bias input $b$, where rigour and tightness trade off; we use two.

\subsection{The theoretical bound}
\label{sec:the-bound}

The worst-case envelope meets \eqref{eq:envelope-cond} through a closed-form bound on the fit error. Writing $g_i:=\widehat f(x_i)-f^*(x_i)$, the reproducing property, Cauchy--Schwarz, and the bound $\norm{f^*}_\HH\le B$ of Assumption~\ref{as:model} give $|g_i|=\bigl|\langle\widehat f-f^*,\,k(x_i,\cdot)\rangle_\HH\bigr|\le\bigl(\norm{\widehat f}_\HH+B\bigr)\sqrt{k(x_i,x_i)}$, so with $w_i=\widetilde w_i+g_i$ the envelope and bias input
\begin{equation}
a_i:=|\widetilde w_i|+S_i,\quad S_i:=\bigl(\norm{\widehat f}_\HH+B\bigr)\sqrt{k(x_i,x_i)},\qquad b:=\tfrac12 B\sqrt\lambda,
\label{eq:env-wc}
\end{equation}
satisfy $a_i\ge|w_i|$ and $b\ge\norm{(H-I)f^*}_n$; both are derived in \appref{app:envelope}, the latter from $\norm{(H-I)f^*}_n^2\le\tfrac14\lambda B^2$. Substituting into \eqref{eq:ucb} gives the closed-form theoretical bound
\begin{equation}
\widehat U_\alpha=\Bigl(q_{1-\alpha}(a)+\tfrac12 B\sqrt\lambda\Bigr)^2,\qquad a_i=|\widetilde w_i|+\bigl(\norm{\widehat f}_\HH+B\bigr)\sqrt{k(x_i,x_i)},
\label{eq:ucb-wc}
\end{equation}
computable from the fit and the known radius $B$ of Assumption~\ref{as:model}, and carrying the guarantee of \S\ref{sec:theory}. When no a-priori radius is available, the data-driven bound of \S\ref{sec:data-driven} replaces the worst-case inputs by estimates; the two constructions instantiate the same $\widehat U_\alpha$ at two envelopes, this one certified in \S\ref{sec:theory}, that one measured in \S\ref{sec:simulation}.

\subsection{The data-driven bound}
\label{sec:data-driven}

\paragraph{The data-driven envelope.}
For use in practice we replace the worst-case bounds by estimates. The noise scale uses the leverage correction of the wild-bootstrap literature \citep{mammen1993bootstrap,davidson2008wild}, and the bias input plugs an undersmoothed pilot fit in for the unknown $f^*$,
\begin{equation}
a_i:=\frac{|\widetilde w_i|}{1-h_{ii}}+|\widehat b_i|, \qquad
b:=\norm{\widehat b}_n, \qquad \widehat b:=(H-I)\widehat f_{\mathrm{pil}},
\label{eq:env-dd}
\end{equation}
with $h_{ii}$ the $i$th diagonal of the smoother $H$ of \eqref{eq:krr} and $\widehat f_{\mathrm{pil}}:=H_{\mathrm{pil}}\,y$ a kernel ridge fit at a smaller penalty, so it tracks $f^*$ more closely and carries less bias. Substituting into \eqref{eq:ucb} gives the bound evaluated throughout the paper,
\begin{equation}
\widehat U_\alpha=\Bigl(q_{1-\alpha}(a)+\norm{\widehat b}_n\Bigr)^2,\qquad
a_i=\frac{|\widetilde w_i|}{1-h_{ii}}+|\widehat b_i| .
\label{eq:ucb-dd}
\end{equation}
These substitutions are estimates rather than upper bounds, so $a$ does not satisfy \eqref{eq:envelope-cond} deterministically; the bound is substantially tighter than \eqref{eq:ucb-wc} but only empirically valid, its coverage established in \S\ref{sec:simulation}.

The same bound applies beyond the linear smoother. For the constrained fit \eqref{eq:constrained}, a nonlinear firmly non-expansive estimator, the movement query of Definition~\ref{def:movement} is evaluated directly on $\check f$; the leverage correction and the bias input $b$, which have no closed form for a nonlinear fit, are taken from a linear kernel-ridge stand-in at the same penalty. The construction is thus computable there too; its coverage, which \S\ref{sec:theory} does not certify, is established empirically in \S\ref{sec:sim-main}.

\begin{algorithm}[t]
\caption{Calibrated Gaussian refit}
\label{alg:cert}
\begin{algorithmic}[1]
\State \textbf{input:} data, fit $\widehat f$, envelope $a$, bias input $b$, level $1-\alpha$, draws $L$
\State base fit $\widehat f \gets \mathcal{M}(y)$
\For{$k=1,\dots,L$}
  \State draw $\xi_k\sim\mathcal N(0,I_n)$; \; $m_k \gets \norm{\mathcal{M}(y+\xi_k\circ a)-\mathcal{M}(y)}_n$
\EndFor
\State \textbf{return} $\widehat U_\alpha \gets \bigl(m_{(\lceil(L+1)(1-\alpha)\rceil)}+b\bigr)^2$
\end{algorithmic}
\end{algorithm}

Algorithm~\ref{alg:cert} runs with $(a,b)$ from \eqref{eq:env-wc} for the theoretical bound of \S\ref{sec:the-bound} or from \eqref{eq:env-dd} for the data-driven bound of \S\ref{sec:data-driven}: one procedure, two inputs.

\section{Validity and rate optimality}
\label{sec:theory}

Throughout this section $\mathcal{M}$ is the kernel ridge smoother \eqref{eq:krr} and $\widehat U_\alpha$ is the bound \eqref{eq:ucb} with the worst-case envelope of \S\ref{sec:the-bound}, drawn with $L$ Gaussian vectors.
Conditional on the noise magnitudes $|w|$, set
\begin{equation}
M \;:=\; \operatorname{diag}(|w|)\,H^\top H\,\operatorname{diag}(|w|),\qquad N \;:=\; M-\operatorname{diag}(M),
\label{eq:M}
\end{equation}
so that the squared refit movement at the true magnitudes is $m(|w|)^2=\tfrac1n\xi^\top M\xi$, and define the delocalization functionals
\begin{equation}
\delta \;:=\; \frac{\lambda_{\max}(M)}{\norm{M}_F},\qquad
\rho \;:=\; \frac{\bigl(\sum_i M_{ii}^2\bigr)^{1/2}}{\norm{M}_F},\qquad
\delta_N \;:=\; \frac{\norm{N}_{\mathrm{op}}}{\norm{N}_F},
\label{eq:deloc}
\end{equation}
where $\lambda_{\max}$ is the largest eigenvalue, $\norm{\cdot}_F$ the Frobenius norm, $\norm{\cdot}_{\mathrm{op}}$ the spectral norm, and $M_{ii}$ the $i$th diagonal entry; the three are related by $\delta_N\le(\delta+\rho)/\sqrt{1-\rho^2}$.
Throughout, $z_{1-\alpha}$ denotes the standard normal quantile, $\phi$ its density, and $O_P,\Omega_P$ the usual stochastic orders.

\subsection{Validity}

\begin{theorem}[Validity]
\label{thm:validity}
Under Assumptions~\ref{as:model} and~\ref{as:noise}, there are universal constants $C_0,C_1$ such that, conditional on the noise magnitudes $|w|$, for every $\alpha\in(0,\tfrac12)$ and every $L\ge1$, with $\delta,\delta_N$ as in \eqref{eq:deloc},
\begin{equation}
\Pr\bigl(\mathcal{E}(\widehat f) > \widehat U_\alpha \,\big|\, |w|\bigr)
\;\le\;
\alpha \;+\; \frac{C_0}{\phi(z_{1-\alpha})}\,\bigl(\delta+\delta_N\bigr) \;+\; \frac{C_1}{\sqrt L}.
\label{eq:validity}
\end{equation}
\end{theorem}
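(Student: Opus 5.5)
We reduce the failure event to a statement about the noise term alone, and then follow the chain of \S\ref{sec:completion}. By Corollary~\ref{cor:bound}, with the worst-case envelope \eqref{eq:env-wc} we have $a\ge|w|$ and $b\ge\norm{(H-I)f^*}_n$ deterministically (Appendix~\ref{app:envelope}). Hence
\[
\{\mathcal{E}(\widehat f)>\widehat U_\alpha\}\subseteq\{\norm{Hw}_n>m_{(k^*)}\},
\]
where $m_{(k^*)}$ is the order statistic \eqref{eq:orderstat} built from $L$ independent Gaussian draws at envelope $a$.

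One subtlety is that $a$ depends on $\widetilde w=(I-H)y$, and hence on the signs $\varepsilon$. We therefore first condition on $|w|$ and on $\varepsilon$, which makes $a$ fixed, and then apply Lemma~\ref{lem:anderson} with $u=a$, $v=|w|$, $A=H$. This shows that the law of each $m_k$ stochastically dominates that of $m(|w|)$. Consequently, for any deterministic threshold $t$,
\[
\Pr\bigl(m_{(k^*)}<t\mid |w|,\varepsilon\bigr)\le\Pr\bigl(\mathrm{Bin}(L,F(t))\ge k^*\bigr),
\]
where $F$ is the conditional cdf of $m(|w|)$, which does not depend on $\varepsilon$. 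Take $t=t_\eta:=q_{1-\alpha-\eta}(m(|w|))$ with $\eta\asymp L^{-1/2}$. Since $k^*\ge(L+1)(1-\alpha)$, Hoeffding (or an exact binomial computation) bounds the probability that the order statistic falls below $t_\eta$ by $C/\sqrt L$; a cleaner route is to choose $\eta$ so that this binomial tail and $\eta$ itself are both $O(L^{-1/2})$. On the complementary event $m_{(k^*)}\ge t_\eta$, the failure requires $\norm{Hw}_n>t_\eta$. This quantity depends only on $\varepsilon$ given $|w|$, so integrating over $\varepsilon$ gives
\[
\Pr(\text{fail}\mid|w|)\le\Pr\bigl(\norm{Hw}_n>q_{1-\alpha-\eta}(m(|w|))\bigr)+C/\sqrt L.
\]

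It remains to prove Proposition~\ref{prop:gauss-dom} quantitatively at level $\alpha+\eta$, and this is the main obstacle. Write $Q_R:=\varepsilon^\top M\varepsilon$ and $Q_G:=\xi^\top M\xi$. Then $\E Q_R=\E Q_G=\operatorname{tr}M$. The Rademacher form has no diagonal fluctuation, so $Q_R-\operatorname{tr}M=\varepsilon^\top N\varepsilon$ with variance $2\norm{N}_F^2$, while $\operatorname{Var}Q_G=2\norm{M}_F^2\ge2\norm{N}_F^2$.

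For the Gaussian side we standardize $Q_G$ using its eigen-decomposition, as a weighted chi-square. Berry--Esseen for sums of independent weighted $\chi^2_1$ terms gives a Kolmogorov distance to $\mathcal N(0,1)$ of order $\sum\lambda_j^3/\norm{M}_F^3\le\delta$. For the Rademacher chaos $\varepsilon^\top N\varepsilon$ we invoke a CLT for degenerate quadratic forms (de Jong / Götze--Tikhomirov type, or invariance plus a fourth-moment bound). Its Kolmogorov distance is controlled by the maximal influence and the spectral ratio; I would aim for $C\delta_N$, absorbing the influence term via $\max_i\sum_j N_{ij}^2\le\norm{N}_{\mathrm{op}}^2$.

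Given both normal approximations, the comparison is elementary. For $\alpha+\eta<\tfrac12$ we have $z_{1-\alpha-\eta}>0$, so
\[
q_{1-\alpha-\eta}(Q_G)\ge\operatorname{tr}M+\sqrt2\norm{M}_F\,(z_{1-\alpha-\eta}-C\delta/\phi)\ge\operatorname{tr}M+\sqrt2\norm{N}_F\,(z_{1-\alpha-\eta}-C\delta/\phi).
\]
The probability that $Q_R$ exceeds this is therefore at most $\alpha+\eta+C(\delta+\delta_N)/\phi(z_{1-\alpha})$. The quantile perturbation is converted to probability through the density $\phi$ near $z_{1-\alpha}$, which produces the factor $1/\phi(z_{1-\alpha})$. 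Handling $\eta$ (and the edge case $\alpha+\eta\ge\tfrac12$, where the bound is trivial once the constants are large) yields \eqref{eq:validity}.
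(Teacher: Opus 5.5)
Your skeleton is the paper's own: the reduction to $\{\norm{Hw}_n>m_{(k^*)}\}$, Lemma~\ref{lem:anderson} applied with $(|w|,\varepsilon)$ frozen, Berry--Esseen for the weighted $\chi^2_1$ sum, a CLT for the chaos $\varepsilon^\top N\varepsilon$, and the variance deficit $\norm{N}_F\le\norm{M}_F$. Two steps, however, do not deliver what you claim.

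The first is the Monte Carlo term. With $\eta=cL^{-1/2}$, Hoeffding gives only $\Pr(\mathrm{Bin}(L,1-\alpha-\eta)\ge k^*)\le e^{-2L\eta^2}=e^{-2c^2}$, which is a constant. That binomial tail really is of order $1-\Phi(\eta\sqrt L/\sqrt{\alpha(1-\alpha)})$, so no choice of $\eta$ makes $\eta+\text{tail}$ smaller than order $\sqrt{\log L/L}$.

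To reach $C_1/\sqrt L$, and in fact better, average instead of shifting. Let $T:=\norm{Hw}_n$ and let $F$ be the law of $m(|w|)$. Given $(|w|,\varepsilon)$, stochastic order passes to order statistics, so $\Pr(T>m_{(k^*)}\mid|w|,\varepsilon)\le\Pr(T>F^{-1}(U_{(k^*)})\mid|w|,\varepsilon)$, where $U_{(k^*)}\sim\mathrm{Beta}(k^*,L+1-k^*)$ is independent of $\varepsilon$. Integrating out $\varepsilon$ leaves $\E\,h(U_{(k^*)})$ with $h(p):=\Pr(T>F^{-1}(p)\mid|w|)$. Moreover $\E[1-U_{(k^*)}]=1-k^*/(L+1)\le\alpha$ exactly, which is what the $L+1$ in $k^*$ is for.

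It then suffices to have $h(p)\le 1-p+E$ for $p\ge p_0$, with $p_0$ a fixed distance below $1-\alpha$. The extra cost beyond $\alpha+E$ is only $\Pr(U_{(k^*)}<p_0)=O(1/L)$. Such a bound, with $E=C(\delta+\delta_N)$ and no $1/\phi$ detour, follows from $\Phi(g_p)\ge p-C\delta$ whenever the Gaussian quantile $g_p$ is nonnegative. The paper's DKW step (``$q_{1-\alpha}(a)\ge Q_0$ up to $C_2/\sqrt L$'') glosses over the same quantile-to-probability conversion.

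The second is the sign of that threshold. Your inequality $\sqrt2\norm{M}_F\,x\ge\sqrt2\norm{N}_F\,x$, with $x=z_{1-\alpha-\eta}-C\delta/\phi$, needs $x\ge0$, not merely $z_{1-\alpha-\eta}>0$. The leftover regime is not trivial for large constants. As $\alpha\uparrow\tfrac12$ the right side of \eqref{eq:validity} is about $\tfrac12+C_0(\delta+\delta_N)/\phi(0)+C_1/\sqrt L$, well below $1$.

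This is not only a presentational gap: the reduced inequality $\Pr(T>Q_0\mid|w|)\le\alpha+C(\delta+\delta_N)/\phi(z_{1-\alpha})$ is false there. Take $|w|=\mathbf{1}$ and $H^2=\tfrac12(I+n^{-1}N_0)$, a valid smoother with $0\preceq H\preceq I$. Here $N_0$ is a fixed zero-diagonal symmetric matrix with entries $\pm n^{-1/2}$ and $\norm{N_0}_{\mathrm{op}}\le 3$ (a typical Wigner realization).
\begin{itemize}
\item Then $\delta,\delta_N=O(n^{-1/2})$.
\item The Gaussian form $\xi^\top M\xi$ has its median about $\tfrac13$ below its mean, from the skew of $\chi^2_n$.
\item Meanwhile $\varepsilon^\top M\varepsilon-\operatorname{tr}M$ has standard deviation $O(n^{-1/2})$.
\item Hence $\Pr(T>Q_0\mid|w|)\to1$ at $\alpha=\tfrac12-\tfrac1n$, while the right side tends to $\tfrac12$.
\end{itemize}
The comparison step therefore needs either $\alpha\le\tfrac12-c$ with $c$-dependent constants, or an argument that exploits the envelope's slack. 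Under $\alpha\le\tfrac12-c$, either $\delta\gtrsim c$ and the bound is vacuous, or $g_{1-\alpha}>0$. The paper's Step~3 asserts $g_{1-\alpha}>0$ ``as $\alpha<\tfrac12$'', so this hole is shared with the paper rather than introduced by you.
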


The bound \eqref{eq:validity} is non-asymptotic and explicit in the delocalization functionals \eqref{eq:deloc} and the number of refits.
Whenever $\delta\to0$ and $\rho\to0$, so that $\delta_N\to0$, and $L\to\infty$, the right-hand side tends to $\alpha$; in that regime the variance deficit of the sign statistic makes the bound conservative.
The proof combines the envelope domination of Lemma~\ref{lem:anderson} with a Berry--Esseen comparison of the Gaussian and sign quadratic forms in $M$: both have the same mean, the Gaussian has the larger variance, so its quantile is a conservative threshold, and the functionals \eqref{eq:deloc} control the normal approximation of each.
When the noise concentrates on a few coordinates, so that $\delta\to1$, the approximation fails and \eqref{eq:validity} is vacuous.
The proof is in \appref{app:proof-validity}.

Theorem~\ref{thm:validity} locates the freedom from moment assumptions.
Conditional on $|w|$, the noise term $\norm{Hw}_n=\norm{H(\varepsilon\circ|w|)}_n$ is a bounded function of $n$ fair signs, with no tails to control; the marginal law of the noise enters only when \eqref{eq:validity} is integrated over the magnitudes,
\[
\Pr\bigl(\mathcal{E}(\widehat f)>\widehat U_\alpha\bigr)
\;=\;\E\,\Pr\bigl(\mathcal{E}(\widehat f)>\widehat U_\alpha\,\big|\,|w|\bigr)
\;\le\;\alpha\;+\;\frac{C_0}{\phi(z_{1-\alpha})}\,\E\bigl[\delta+\delta_N\bigr]
\;+\;\frac{C_1}{\sqrt L},
\]
valid for every distribution of the magnitudes, Cauchy included, since $\delta,\delta_N\in[0,1]$ make the expectation exist under any noise law.
The level term $\alpha$ passes through the expectation untouched; the tails enter only through $\E[\delta+\delta_N]$, the frequency with which a realization concentrates on few coordinates.
What the injected Gaussian dominates is therefore the conditional law of the signs at the given magnitudes, not the tails of the noise, which no Gaussian could dominate; the tails ride on the realized $|w|$ and enter both sides of the comparison equally, the error through $w$ and the bound through the envelope \eqref{eq:env-wc}.

\begin{remark}[Beyond kernel ridge]
\label{rem:linear}
The proof of Theorem~\ref{thm:validity} uses the smoother only through the linear representation $\widehat f=Hy$ and the quadratic form \eqref{eq:M}; it therefore holds verbatim for any symmetric linear smoother $0\preceq H\preceq I$, given inputs $(a,b)$ satisfying \eqref{eq:envelope-cond} and $b\ge\norm{(H-I)f^*}_n$. The worst-case construction of these inputs in \S\ref{sec:the-bound} uses only the reproducing property, so it applies to every penalized reproducing-kernel smoother $H=K(K+n\lambda I)^{-1}$; smoothing splines and the Gaussian-process posterior mean are of this form, with their own kernels in place of $k$ and, for the latter, the noise variance in the role of $n\lambda$. Theorem~\ref{thm:rate} then reads Assumption~\ref{as:decay} on the corresponding kernel's spectrum.
\end{remark}

\begin{remark}[Selecting the penalty]
\label{rem:lambda}
The guarantees treat the penalty $\lambda$ as fixed in advance, which is what keeps $H$ a constant matrix. Selecting $\lambda$ on an independent split leaves every statement intact, since conditional on the split the smoother is again a fixed linear map; selecting it on the same data makes $H$ a function of $y$, the fit is then no longer linear, and the guarantees do not apply. The closest case with evidence is the ball-constrained fit of \S\ref{sec:data-driven}, which by duality is kernel ridge at a data-dependent penalty: its coverage in Table~\ref{tab:port} suggests mild data dependence is tolerated, but no guarantee is claimed.
\end{remark}

\subsection{Rate optimality}

\begin{assumption}[Eigendecay and penalty]
\label{as:decay}
There are constants $0<c_1\le c_2$ and $s>\tfrac12$ with $c_1 j^{-2s}\le\mu_j\le c_2 j^{-2s}$ for $1\le j\le n$, the penalty satisfies $\lambda\asymp n^{-2s/(2s+1)}$, and the smoother has bounded leverage, $\max_i (H^\top H)_{ii}\le C_D\,d_n/n$ for a constant $C_D$.
\end{assumption}

Under Assumption~\ref{as:decay} the effective dimension satisfies $d_n\asymp n^{1/(2s+1)}$. The results of this and the next subsection, unlike Theorem~\ref{thm:validity}, involve the scale of the noise.

\begin{assumption}[Noise scale]
\label{as:scale}
There is a constant $\sigma$ such that $\Pr\bigl(|w_i|>t\bigr)\le 2\exp\{-t^2/(2\sigma^2)\}$ for every $t>0$ and every $i$.
\end{assumption}

Assumption~\ref{as:scale} enters no statement about the level: validity is free of moment conditions, and the scale governs only how fast the bound contracts, so under heavy tails the bound remains valid but loose, the regime measured in \S\ref{sec:sim-moment}. Under Assumptions~\ref{as:model}, \ref{as:noise}, \ref{as:decay}, and~\ref{as:scale} the fitted estimator attains the minimax rate.

\begin{lemma}[Kernel-ridge rate]
\label{lem:krr-rate}
Under Assumptions~\ref{as:model}, \ref{as:noise}, \ref{as:decay}, and~\ref{as:scale}, $\mathcal{E}(\widehat f)=\norm{\widehat f-f^*}_n^2=O_P\bigl(n^{-2s/(2s+1)}\bigr)$.
\end{lemma}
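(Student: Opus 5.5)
We follow the split \eqref{eq:wr-ineq}: $\norm{\widehat f-f^*}_n\le\norm{Hw}_n+\norm{(H-I)f^*}_n$. This bounds $\mathcal{E}(\widehat f)$ by twice the sum of a squared bias term and a squared noise term. We treat each term separately and show that each is $O_P(n^{-2s/(2s+1)})$ under the stated choice $\lambda\asymp n^{-2s/(2s+1)}$.

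\emph{Bias.} This term is deterministic. We use the bound $\norm{(H-I)f^*}_n^2\le\tfrac14\lambda B^2$ recorded in \S\ref{sec:the-bound} and derived in \appref{app:envelope}. To recall the argument: write $f^*=K\beta/\sqrt{\cdot}$ via the representer/feature expansion on the design, diagonalize $K/n$, and bound each coordinate factor $\lambda^2/(\mu_j+\lambda)^2\cdot\theta_j^2$ with $\sum_j\theta_j^2/\mu_j\le B^2$. Since $\sup_\mu \lambda^2\mu/(\mu+\lambda)^2=\lambda/4$, the bias is at most $\lambda B^2/4\asymp n^{-2s/(2s+1)}$.

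\emph{Variance.} Conditional on $|w|$, Assumption~\ref{as:noise} gives $\E[\norm{Hw}_n^2\mid|w|]=\tfrac1n\sum_i (H^\top H)_{ii}w_i^2$, because the cross terms vanish under the independent symmetric signs. Taking expectations over the magnitudes, Assumption~\ref{as:scale} gives $\E w_i^2\le C\sigma^2$ for a universal $C$ (for instance $4\sigma^2$ by integrating the tail). Hence $\E\norm{Hw}_n^2\le C\sigma^2\,\tfrac1n\operatorname{tr}(H^\top H)\le C\sigma^2 d_n/n$, using $\operatorname{tr}(H^2)\le\operatorname{tr}(H)$ since $0\preceq H\preceq I$. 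The leverage condition in Assumption~\ref{as:decay} is not even needed here: the trace suffices, and no independence across magnitudes is required because only the diagonal survives.

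The main step is the effective-dimension count $d_n=\sum_j\mu_j/(\mu_j+\lambda)\asymp\lambda^{-1/(2s)}$ under the eigendecay. Split at $j^*\asymp\lambda^{-1/(2s)}$. Terms with $j\le j^*$ contribute at most $j^*$. Terms with $j>j^*$ contribute at most $\lambda^{-1}c_2\sum_{j>j^*}j^{-2s}\lesssim\lambda^{-1}(j^*)^{1-2s}\asymp j^*$, which uses $s>\tfrac12$. With $\lambda\asymp n^{-2s/(2s+1)}$ this gives $d_n\asymp n^{1/(2s+1)}$, so $\E\norm{Hw}_n^2\lesssim\sigma^2 n^{-2s/(2s+1)}$. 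Markov's inequality then gives $\norm{Hw}_n^2=O_P(n^{-2s/(2s+1)})$. Combining the two terms with $(x+y)^2\le 2x^2+2y^2$ completes the proof.
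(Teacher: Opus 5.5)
Your proposal is correct and follows the paper's proof: the same split $\mathcal{E}(\widehat f)\le 2\norm{(H-I)f^*}_n^2+2\norm{Hw}_n^2$, the spectral bias bound $\lambda B^2/4$, the diagonal conditional covariance from the symmetric signs, $\E w_i^2\le 4\sigma^2$, $\operatorname{tr}(H^2)\le d_n$, and Markov's inequality. The only difference is that you explicitly verify the upper half $d_n\lesssim\lambda^{-1/(2s)}$ of the effective-dimension count under Assumption~\ref{as:decay}. That upper bound is all the lemma needs, and the paper simply asserts it.
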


The proof is in \appref{app:proof-rate}. The next result shows the bound contracts at this same rate, so it tracks the prediction error rather than sitting at a fixed multiple above it; the efficiency statement \eqref{eq:efficiency} pairs it with Lemma~\ref{lem:krr-rate}.

\begin{assumption}[Noise energy]
\label{as:energy}
There is a constant $c_0>0$ with $\tfrac1n\sum_{i=1}^n |w_i|^2 (H^\top H)_{ii}\ge c_0\, d_n/n$.
\end{assumption}

\begin{theorem}[Rate optimality]
\label{thm:rate}
Under Assumptions~\ref{as:model}, \ref{as:noise}, \ref{as:decay}, and~\ref{as:scale},
\begin{equation}
\widehat U_\alpha \;=\; O_P\bigl(n^{-2s/(2s+1)}\bigr),
\label{eq:rate}
\end{equation}
the minimax rate of estimation over $\{f\in\HH:\norm{f}_\HH\le B\}$. If Assumption~\ref{as:energy} also holds, then
\begin{equation}
\widehat U_\alpha \,/\, \mathcal{E}(\widehat f) \;=\; O_P(1).
\label{eq:efficiency}
\end{equation}
\end{theorem}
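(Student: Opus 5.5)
The plan is to bound the two pieces of $\widehat U_\alpha=(q_{1-\alpha}(a)+b)^2$ separately, using the worst-case inputs \eqref{eq:env-wc}. The bias piece is immediate: $b^2=\tfrac14B^2\lambda\asymp n^{-2s/(2s+1)}$ by Assumption~\ref{as:decay}. It therefore suffices to show $q_{1-\alpha}(a)=O_P(n^{-s/(2s+1)})$, i.e.\ that the calibrated order statistic is of order $\sqrt{d_n/n}$.

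\emph{Step 1: second moment of the movement.} Conditionally on the data, $m(a)^2=\tfrac1n\xi^\top\operatorname{diag}(a)H^\top H\operatorname{diag}(a)\xi$. Its mean is $\tfrac1n\sum_i a_i^2(H^\top H)_{ii}$, and the bounded-leverage part of Assumption~\ref{as:decay} gives
\[
\E_\xi\, m(a)^2\le C_D\,\tfrac{d_n}{n}\,\norm{a}_n^2 .
\]
It remains to show $\norm{a}_n^2=O_P(1)$. Since $a_i\le|\widetilde w_i|+(\norm{\widehat f}_\HH+B)\kappa$, I need two facts.

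First, $\norm{\widetilde w}_n\le\norm{(I-H)f^*}_n+\norm{(I-H)w}_n\le b+\norm{w}_n$. The term $\norm{w}_n^2=O_P(1)$ because Assumption~\ref{as:scale} bounds $\E w_i^2\lesssim\sigma^2$, so Markov applies.

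Second, $\norm{\widehat f}_\HH=O_P(1)$. Split $\widehat f=\mathcal M(f^*)+\mathcal M(w)$. The noiseless fit satisfies $\norm{\mathcal M(f^*)}_\HH\le B$, since it minimizes a penalized objective that $f^*$ itself bounds. For the noise fit, $\norm{\mathcal M(w)}_\HH^2=w^\top(K+n\lambda I)^{-1}K(K+n\lambda I)^{-1}w$. In the eigenbasis of $K/n$ its sign-conditional mean is at most $\max_i|w_i|^2$-weighted, and more precisely of order
\[
\sigma^2\,\tfrac1n\sum_j\frac{\mu_j}{(\mu_j+\lambda)^2}\lesssim\sigma^2\,\frac{d_n}{n\lambda}\asymp\sigma^2 .
\]
This uses that conditionally on $|w|$ the signs are independent (Assumption~\ref{as:noise}), so the expectation is $\tfrac1n\sum_i|w_i|^2 P_{ii}$, which I bound by $\E|w_i|^2\lesssim\sigma^2$ after taking the outer expectation.

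\emph{Step 2: from moments to the order statistic.} Fix $K>0$ and set $t=K\sqrt{d_n/n}$. By Markov, each draw satisfies $\Pr(m_k>t\mid\text{data})\le C_D\norm{a}_n^2/K^2=:p$. The event $\{m_{(k^*)}>t\}$ requires at least $L-k^*+1\gtrsim\alpha L$ of the $L$ draws to exceed $t$. Applying Markov to this binomial count bounds its probability by $Lp/(L-k^*+1)\lesssim p/\alpha$, which is small once $K$ is large, on the event $\norm{a}_n^2\le C$. This gives $q_{1-\alpha}(a)=O_P(\sqrt{d_n/n})=O_P(n^{-s/(2s+1)})$ and hence \eqref{eq:rate}.

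\emph{Step 3: efficiency \eqref{eq:efficiency}.} Here I need the matching lower bound $\mathcal E(\widehat f)=\Omega_P(d_n/n)$. Write $\widehat f-f^*=Hw-v$ with $v:=(I-H)f^*$, and expand
\[
\mathcal E(\widehat f)\ge\norm{Hw}_n^2-2\langle Hw,v\rangle_n .
\]
\begin{itemize}
\item \emph{Main term.} Conditionally on $|w|$, $\norm{Hw}_n^2=\tfrac1n\varepsilon^\top M\varepsilon$ has mean $\tfrac1n\operatorname{tr}M\ge c_0d_n/n$ by Assumption~\ref{as:energy}. As a Rademacher chaos its variance is at most $2\norm{M}_F^2/n^2$. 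I would bound this by
\[
\norm{M}_F^2\le\norm{M}_{\mathrm{op}}\operatorname{tr}M\le\max_i|w_i|^2\operatorname{tr}M=O_P(\sigma^2\log n)\operatorname{tr}M,
\]
using Assumption~\ref{as:scale} and $\norm{H}_{\mathrm{op}}\le1$. This makes the ratio of variance to squared mean $O_P(\log n/d_n)\to0$. Chebyshev then gives $\norm{Hw}_n^2\ge\tfrac12 c_0d_n/n$ with probability tending to one.
\item \emph{Cross term.} It is linear in $\varepsilon$, with variance at most $\max_i|w_i|^2\norm{Hv}_n^2/n$. Using $\norm{v}_n\le b\lesssim n^{-s/(2s+1)}$, the cross term is $O_P\bigl(\sqrt{\log n/n}\,n^{-s/(2s+1)}\bigr)=o(n^{-2s/(2s+1)})$, because $\tfrac12>\tfrac{s}{2s+1}$.
\end{itemize}
Combining gives $\mathcal E(\widehat f)=\Omega_P(n^{-2s/(2s+1)})$, and dividing \eqref{eq:rate} by it yields \eqref{eq:efficiency}.

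I expect this lower bound to be the main obstacle. Anti-concentration of the sign chaos is exactly where delocalization is needed: a pure Kahane--Khintchine/Paley--Zygmund argument gives only constant probability, not $O_P(1)$ of the ratio. The plan relies on the $\log n$ factor from the sub-Gaussian maximum being beaten by the polynomial growth of $d_n$.
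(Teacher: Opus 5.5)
Your proof is correct and follows the paper's skeleton. Both arguments use the same bound $\E_\xi m(a)^2\le C_D(d_n/n)\norm{a}_n^2$ from bounded leverage. Both control $\norm{a}_n^2$ through $\norm{Hf^*}_\HH\le B$ and $\E\norm{Hw}_\HH^2\lesssim\sigma^2 d_n/(n\lambda)\asymp\sigma^2$. Both prove \eqref{eq:efficiency} with the same lower-bound decomposition, using $\norm{M}_F^2\le\norm{w}_\infty^2\operatorname{tr}M$, $\norm{w}_\infty^2=O_P(\sigma^2\log n)$ and $d_n\gg\log n$, and both handle the cross term identically.

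The genuine difference is how you pass from the second moment to the threshold. The paper works with the population quantile and invokes Borell--TIS. Since $\xi\mapsto\norm{H(\xi\circ a)}_n$ is $\norm{a}_\infty/\sqrt n$-Lipschitz, the quantile exceeds the mean by $O_P(\sqrt{\log n/n})$. That step needs $\norm{a}_\infty=O_P(\sqrt{\log n})$, and the extra term must then be shown to be $o(n^{-s/(2s+1)})$.

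Your route applies Markov to each draw and then Markov to the exceedance count. This controls the order statistic $m_{(k^*)}$ that $\widehat U_\alpha$ actually uses, for any $L$ with $k^*\le L$. It needs only $\norm{a}_n^2=O_P(1)$, with no sup-norm control and no Gaussian concentration. It also produces no additive $\sqrt{\log n/n}$ term and covers the Monte Carlo step that the paper's argument leaves implicit. The price is a constant of order $\alpha^{-1/2}$ instead of $\sqrt{\log(1/\alpha)}$, which is immaterial for fixed $\alpha$.

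Your other two substitutions are also more elementary than the paper's. First, $\norm{\widetilde w}_n\le b+\norm{w}_n$ follows from $\norm{I-H}_{\mathrm{op}}\le1$ and replaces the appeal to Lemma~\ref{lem:krr-rate}. Second, Chebyshev with the exact chaos variance $2\norm{N}_F^2\le2\norm{M}_F^2$ replaces Hanson--Wright, which suffices because only a vanishing failure probability is needed.

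The anti-concentration worry in your last paragraph is already settled by your own bullet. There the variance-to-squared-mean ratio is $O_P(\log n/d_n)\to0$, which is exactly the delocalization the lower bound requires.
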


Assumption~\ref{as:energy} requires the noise to carry non-negligible energy through the smoother; it holds whenever the per-coordinate noise scale is bounded below.
The minimax rate is that of \citet{stone1982optimal} and \citet{yang1999information}; the proof is in \appref{app:proof-rate}.
For kernels with faster-than-polynomial eigendecay, such as the Gaussian, $d_n$ is bounded, Assumption~\ref{as:decay} fails, and \eqref{eq:rate}--\eqref{eq:efficiency} hold with an additional factor of $\log n$.

\subsection{Comparison with cross-validation}

We compare against the hold-out cross-validation upper bound
\begin{equation}
\widehat U_\alpha^{\mathrm{cv}} \;:=\; \widehat{\mathrm{Err}} + z_{1-\alpha}\,\widehat{\mathrm{se}} - \widehat\sigma^2 ,
\label{eq:cvbound}
\end{equation}
where $\widehat{\mathrm{Err}}$ is the mean held-out squared loss, $\widehat{\mathrm{se}}$ its standard error over the held-out points, and $\widehat\sigma^2$ a noise-variance estimate that places the bound on the prediction-error scale.

\begin{assumption}[Noise spread]
\label{as:spread}
The noise has a finite fourth moment, and the empirical variance of the squared held-out noise is bounded below: $\operatorname{empvar}_i\{w_i^2\}\ge\kappa_0>0$ for a constant $\kappa_0$.
\end{assumption}

\begin{proposition}[Cross-validation's margin does not keep pace]
\label{prop:cv}
Under Assumptions~\ref{as:model}, \ref{as:noise}, \ref{as:decay}, \ref{as:scale}, and~\ref{as:spread}, the margin $z_{1-\alpha}\,\widehat{\mathrm{se}}$ of \eqref{eq:cvbound} satisfies
\begin{equation}
\begin{aligned}
z_{1-\alpha}\,\widehat{\mathrm{se}} &\;=\; \Omega_P\bigl(n^{-1/2}\bigr)
\quad\text{independently of the fit, whence}\\
\frac{z_{1-\alpha}\,\widehat{\mathrm{se}}}{\mathcal{E}(\widehat f)} &\;=\; \Omega_P\bigl(n^{(2s-1)/(2(2s+1))}\bigr)\;\longrightarrow\;\infty .
\end{aligned}
\label{eq:cvdiverge}
\end{equation}
\end{proposition}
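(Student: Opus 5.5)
The plan is to study the standard error on the held-out set directly. I will show that it is driven by the spread of the squared test noise, which has nothing to do with the fit. The divergence of the ratio then follows from Lemma~\ref{lem:krr-rate}.

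Let $T$ be the held-out index set, of size $n_T\asymp n$ for a fixed split fraction. Write the held-out loss as $\ell_i=(y_i-\widehat f(x_i))^2=(w_i-g_i)^2$ with $g_i:=\widehat f(x_i)-f^*(x_i)$. Expanding gives
\[
\ell_i=w_i^2-2w_ig_i+g_i^2 .
\]
Then $\widehat{\mathrm{se}}=\bigl(\operatorname{empvar}_{i\in T}\{\ell_i\}/n_T\bigr)^{1/2}$. The empirical standard deviation is a seminorm on $\R^{n_T}$, so the reverse triangle inequality gives
\[
\operatorname{empsd}\{\ell_i\}\;\ge\;\operatorname{empsd}\{w_i^2\}-\operatorname{empsd}\{2w_ig_i\}-\operatorname{empsd}\{g_i^2\}.
\]
By Assumption~\ref{as:spread}, the first term is at least $\sqrt{\kappa_0}$.

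The remaining two terms must be $o_P(1)$. For the cross term,
\[
\operatorname{empsd}\{w_ig_i\}\le\bigl(\tfrac1{n_T}\textstyle\sum_{i\in T} w_i^2g_i^2\bigr)^{1/2}\le \max_i|g_i|\,\bigl(\tfrac1{n_T}\textstyle\sum_{i\in T} w_i^2\bigr)^{1/2}.
\]
The last factor is $O_P(1)$ by the finite fourth moment (or by Assumption~\ref{as:scale}).

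For $\max_i|g_i|$ I would use a sup-norm bound rather than the empirical-norm rate. The split $\widehat f-f^*=Hw+(H-I)f^*$ gives $g=Hw+(H-I)f^*$. The coordinates $(Hw)_i$ are sub-Gaussian with variance proxy $\sigma^2(H^\top H)_{ii}\le \sigma^2C_Dd_n/n$ by Assumptions~\ref{as:scale} and~\ref{as:decay}. A union bound then gives $\max_i|(Hw)_i|=O_P(\sqrt{d_n\log n/n})\to0$.

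For the bias, the reproducing property gives $|((H-I)f^*)_i|\le\kappa\norm{(H-I)f^*}_\HH$-type control. Failing that, I can bound it coordinatewise using $\norm{f^*}_\HH\le B$ and the leverage bound. Either route should make the bias term $O(1)$ and in fact small under the penalty choice.

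If the sup-norm control of the bias proves awkward, I would fall back on a weaker route. That route bounds $\operatorname{empsd}\{w_ig_i\}\le\bigl(\tfrac1{n_T}\sum w_i^4\bigr)^{1/4}\bigl(\tfrac1{n_T}\sum g_i^4\bigr)^{1/4}$ and uses $\tfrac1{n_T}\sum g_i^4\le\max g_i^2\cdot\norm{g}_{n}^2\, n/n_T$ with only a crude $O_P(1)$ bound on $\max|g_i|$. Combined with $\norm{g}_n^2=O_P(n^{-2s/(2s+1)})$ from Lemma~\ref{lem:krr-rate}, this is already $o_P(1)$. The term $\operatorname{empsd}\{g_i^2\}$ is handled the same way.

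This gives $\operatorname{empsd}\{\ell_i\}\ge\sqrt{\kappa_0}/2$ with probability tending to one. Hence
\[
z_{1-\alpha}\widehat{\mathrm{se}}\ge z_{1-\alpha}\sqrt{\kappa_0}/(2\sqrt{n_T})=\Omega_P(n^{-1/2}),
\]
and the constant depends only on the noise and not on the fit. For the ratio, Lemma~\ref{lem:krr-rate} gives $\mathcal{E}(\widehat f)=O_P(n^{-2s/(2s+1)})$, so
\[
\frac{z_{1-\alpha}\widehat{\mathrm{se}}}{\mathcal{E}(\widehat f)}=\Omega_P\bigl(n^{-1/2+2s/(2s+1)}\bigr)=\Omega_P\bigl(n^{(2s-1)/(2(2s+1))}\bigr).
\]
This diverges because $s>\tfrac12$. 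Combining the two stochastic orders is routine: an $\Omega_P$ numerator over an $O_P$ denominator is $\Omega_P$ of the ratio of rates.

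The main obstacle I expect is showing that the fit-dependent terms in $\operatorname{empsd}\{\ell_i\}$ are negligible without any sup-norm rate for $\widehat f-f^*$. That step is where I would spend effort, first trying the leverage-based sub-Gaussian maximum bound above. A secondary subtlety arises if the hold-out points were also used in fitting. In that case $g_i$ and $w_i$ are dependent, but the argument above never uses independence; it only uses Cauchy--Schwarz and the bounds on $g$.
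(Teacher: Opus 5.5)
Your fallback route is essentially the paper's proof. Both use the reverse triangle inequality for the empirical standard deviation, the floor $\operatorname{empvar}_i\{w_i^2\}\ge\kappa_0$, and an $O_P(1)$ sup bound on the fit error (the paper uses $\kappa(\norm{\widehat f^{\mathrm{tr}}}_\HH+B)$) times the empirical-norm rate of Lemma~\ref{lem:krr-rate} to kill the fit-dependent terms. The only difference is that the paper controls the cross term through $\max_i w_i^2=O_P(\sigma^2\log m)$, while you use Cauchy--Schwarz with fourth moments, which avoids the logarithm.

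Lead with that route and drop the leverage-based one, which would not close as written. The bound $\max_i|g_i|\,(\tfrac1{n_T}\sum_{i\in T}w_i^2)^{1/2}$ needs $\max_i|((H-I)f^*)_i|\to0$. The reproducing property gives only $O(\kappa B)$, and the leverage condition of Assumption~\ref{as:decay} controls the noise part, not the bias. Moreover, in cross-validation the held-out errors come from the training-fold fit $\widehat f^{\mathrm{tr}}$, not from $Hw+(H-I)f^*$ with the full-data smoother. So all you legitimately have is a crude sup bound plus a held-out empirical-norm rate, which the paper, too, imports from Lemma~\ref{lem:krr-rate}.
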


The mechanism fits in one display. With $\delta_i$ the fit's error at a held-out point and $m\asymp n$ the hold-out size, a held-out loss decomposes as
\[
\ell_i=(w_i+\delta_i)^2=w_i^2+2w_i\delta_i+\delta_i^2,
\qquad
z_{1-\alpha}\,\widehat{\mathrm{se}}
\;\asymp\;\frac{\operatorname{sd}_i\{\ell_i\}}{\sqrt m}\;\gtrsim\;n^{-1/2},
\]
a floor set by the spread of the leading term $w_i^2$, which no quality of fit can reduce: the fit enters only through $\delta_i$, and $\delta_i\to0$ drives the margin to the floor, not past it.
The prediction error, by contrast, contracts at $n^{-2s/(2s+1)}$ (Lemma~\ref{lem:krr-rate}), so the ratio in \eqref{eq:cvdiverge} diverges at exponent $(2s-1)/(2(2s+1))$: for $s>\tfrac12$ the fit converges faster than the parametric rate at which a noise level can be learned; for $s\le\tfrac12$, and in the bias-dominated regimes of \appref{app:sim-scope}, the error remains at the noise scale and cross-validation keeps pace.
Neither dimension nor signal-to-noise ratio enters: under homoscedastic Gaussian noise the noise level cancels from the ratio, which is driven by the smoothness and the sample size alone.
The failure concerns levels, not comparisons: differences of fold losses cancel the common noise average, which is why cross-validation remains consistent for model selection \citep{wager2020cv,lei2020cvc} while its level bound diverges.
The finite fourth moment of Assumption~\ref{as:spread} makes $\widehat{\mathrm{se}}$ well defined, and the lower bound $\operatorname{empvar}_i\{w_i^2\}\ge\kappa_0$ fails only when $w_i^2$ is constant across observations, the one case in which the margin keeps pace.
The proof is in \appref{app:proof-cv}.

\begin{corollary}
\label{cor:iid}
If $w_1,\dots,w_n$ are independent and sub-Gaussian with $\Var(w_i)\ge\tau_0^2>0$ and $\Var(w_i^2)\ge\kappa_0>0$, and are independent of the design, then Assumption~\ref{as:scale} holds by definition, Assumptions~\ref{as:energy} and~\ref{as:spread} hold with probability tending to one, and the conclusions \eqref{eq:efficiency} and \eqref{eq:cvdiverge} hold unconditionally.
\end{corollary}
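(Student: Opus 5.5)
The corollary asks us to check that, under independent sub-Gaussian noise, the three conditional assumptions hold, and then to conclude \eqref{eq:efficiency} and \eqref{eq:cvdiverge} from Theorem~\ref{thm:rate} and Proposition~\ref{prop:cv}. I would take the assumptions one at a time and then remove the conditioning by a standard intersection-of-events argument.

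\emph{Assumption~\ref{as:scale}.} This is immediate. Sub-Gaussianity with a uniform parameter $\sigma$ is by definition the tail bound $\Pr(|w_i|>t)\le2\exp\{-t^2/(2\sigma^2)\}$. I would fix the convention so that the constant is uniform in $i$, and if necessary enlarge $\sigma$ to absorb the constant $2$.

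\emph{Assumption~\ref{as:energy}.} Set $h_i:=(H^\top H)_{ii}\ge0$, which is deterministic given the design. Independence of noise and design then gives
\[
\E\Bigl[\tfrac1n\textstyle\sum_i |w_i|^2h_i\Bigr]\;\ge\;\tau_0^2\,\tfrac1n\textstyle\sum_i h_i\;=\;\tau_0^2\,\tfrac1n\operatorname{tr}(H^2).
\]
Under Assumption~\ref{as:decay}, $\operatorname{tr}(H^2)=\sum_j\mu_j^2/(\mu_j+\lambda)^2\asymp d_n$, since the terms with $\mu_j\ge\lambda$ contribute order $d_n$ terms each at least $1/4$. So the mean is at least $c\,\tau_0^2 d_n/n$.

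For concentration, the summands $|w_i|^2h_i$ are independent sub-exponential variables with $\psi_1$-norm at most $C\sigma^2h_i$. Bernstein's inequality controls the deviation of the sum by
\[
C\sigma^2\Bigl(\sqrt{\textstyle\sum h_i^2\,t}+\max_i h_i\,t\Bigr).
\]
The bounded-leverage part of Assumption~\ref{as:decay} gives $\max_i h_i\le C_D d_n/n$, so $\sum_i h_i^2\le (C_Dd_n/n)\sum_i h_i\lesssim d_n^2/n$. The deviation of $\sum_i |w_i|^2h_i$ is therefore $O_P(d_n/\sqrt n)$, which is $o(d_n)$ against a mean of order $d_n$. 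Assumption~\ref{as:energy} thus holds with $c_0=c\tau_0^2/2$ with probability tending to one.

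\emph{Assumption~\ref{as:spread}.} Sub-Gaussianity gives all moments, so the fourth moment is finite. For the lower bound, the variables $w_i^2$ are independent with $\Var(w_i^2)\ge\kappa_0$ and uniformly bounded fourth moments of $w_i^2$ (again from sub-Gaussianity). Chebyshev's inequality applied to the sample mean and to the sample second moment shows that the empirical variance over a hold-out set of size $m\asymp n$ is at least its average population variance minus $o_P(1)$. The one subtlety is that the variances need not be identical; I would handle this by comparing the empirical variance with the average of $\Var(w_i^2)$ plus the nonnegative spread of the means $\E w_i^2$. The lower bound then holds with $\kappa_0/2$ with probability tending to one.

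\emph{Removing the conditioning.} Theorem~\ref{thm:rate} and Proposition~\ref{prop:cv} are statements given the realized magnitudes. Let $G_n$ be the event on which the energy and spread conditions hold, so $\Pr(G_n^c)\to0$. Stochastic boundedness and divergence of a ratio on $G_n$ then transfer to the unconditional statements, via $\Pr(X_n>K)\le\Pr(X_n>K,G_n)+\Pr(G_n^c)$.

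\emph{Main obstacle.} The delicate step is the concentration in Assumption~\ref{as:energy}. It needs $\operatorname{tr}(H^2)\gtrsim d_n$ together with the leverage bound, so that no single coordinate dominates. If Assumption~\ref{as:decay} does not directly give $\operatorname{tr}(H^2)\asymp d_n$, I would derive it from the eigendecay by counting the indices $j$ with $\mu_j\ge\lambda$, of which there are $\asymp\lambda^{-1/(2s)}\asymp d_n$.
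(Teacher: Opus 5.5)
Your proposal is correct and follows the paper's route: verify Assumptions~\ref{as:scale}, \ref{as:energy} and~\ref{as:spread}, then apply Theorem~\ref{thm:rate}(ii) and Proposition~\ref{prop:cv} on an event of probability tending to one. It is more careful than the paper's proof in two places: that proof lower-bounds only the \emph{expected} energy $\tfrac1n\sum_i\sigma_i^2(H^\top H)_{ii}\ge\tau_0^2\operatorname{tr}(H^2)/n$ and never passes to the realized $|w_i|^2$, which your concentration step supplies (Chebyshev with $(H^\top H)_{ii}\le1$, hence $\sum_i(H^\top H)_{ii}^2\le d_n$ and a deviation of order $\sqrt{d_n}=o(d_n)$, already suffices without bounded leverage), and it obtains Assumption~\ref{as:spread} from the law of large numbers toward $\Var(w_1^2)$ as if the $w_i$ were identically distributed, which your argument avoids. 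The one ingredient the paper has that you leave implicit is Assumption~\ref{as:noise}, required by both Theorem~\ref{thm:rate} and Proposition~\ref{prop:cv}: its proof assumes each $w_i$ symmetric, so that independence gives fair signs independent of the magnitudes, and you should either add that hypothesis or declare Assumption~\ref{as:noise} standing.
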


Proposition~\ref{prop:cv} concerns one interval, and the interval can be repaired: the standard error can be estimated by the nested construction, or the normal approximation abandoned for an order statistic of held-out errors over repeated splits.
The next result shows that no repair escapes, because the floor is informational rather than distributional: a held-out loss reveals the noise and the fit's error only through the square $(w_i+\delta_i)^2$, and a problem with slightly larger fit error and slightly smaller noise generates held-out losses statistically indistinguishable from the original, so a bound valid for both problems must sit above the larger error.

\begin{proposition}[Holdout floor]
\label{prop:floor}
Condition on the training responses, so that the held-out error profile $\delta_i:=f^*(x_i)-\widehat f^{\mathrm{tr}}(x_i)$, $i\in V$, $|V|=m\asymp n$, is fixed, write $\norm{\delta}_m^2:=m^{-1}\sum_{i\in V}\delta_i^2$, and let the held-out noise be Gaussian, $w_i\sim\mathcal N(0,\sigma^2)$.
Let $\widehat U:=g(\ell_1,\dots,\ell_m)$ be any measurable function of the held-out losses $\ell_i:=(y_i-\widehat f^{\mathrm{tr}}(x_i))^2$ that is valid over a neighbourhood of problems,
\begin{equation}
\Pr_{\delta,\sigma}\bigl(\widehat U\ge\norm{\delta}_m^2\bigr)\;\ge\;1-\alpha
\quad\text{for all }(\delta,\sigma)\text{ with }
\norm{\delta}_m^2\le\sigma_0^2,\;
\norm{\delta}_\infty\le C_\infty\sigma_0,\;
\sigma\in[\tfrac12\sigma_0,2\sigma_0].
\label{eq:floor-class}
\end{equation}
Then for every $\kappa>0$ and every sequence $\rho_n\to0$ there is a sequence $\tau_n(\kappa)\to0$ such that, for every problem in the class with $\norm{\delta}_m^2\le\rho_n$, $\norm{\delta}_\infty\le\tfrac12 C_\infty\sigma_0$ and $\sigma\in[\sigma_0,\tfrac32\sigma_0]$,
\begin{equation}
\Pr\bigl(\widehat U\ge\kappa\,\sigma_0^2\,m^{-1/2}\bigr)\;\ge\;1-\alpha-\tau_n(\kappa) .
\label{eq:floor}
\end{equation}
\end{proposition}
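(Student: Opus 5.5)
We prove the holdout floor by a two-point (Le Cam) argument that moves the target $\norm{\delta}_m^2$ while leaving the law of the held-out losses nearly unchanged. Fix a problem $(\delta,\sigma)$ in the restricted subclass, so that $\norm{\delta}_m^2\le\rho_n$, $\norm{\delta}_\infty\le\tfrac12C_\infty\sigma_0$ and $\sigma\in[\sigma_0,\tfrac32\sigma_0]$. We build an alternative $(\delta',\sigma')$ in the larger class \eqref{eq:floor-class} with $\norm{\delta'}_m^2\ge\kappa\sigma_0^2m^{-1/2}$. We then show that the total variation between the laws of $(\ell_1,\dots,\ell_m)$ under the two problems is at most $\tau_n(\kappa)\to0$. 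Since $\widehat U=g(\ell)$, validity at the alternative gives $\Pr_{\delta',\sigma'}(\widehat U\ge\kappa\sigma_0^2m^{-1/2})\ge1-\alpha$. Transferring this event through the total variation bound yields \eqref{eq:floor}.

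The construction uses that $\ell_i=(\delta_i+w_i)^2$ with $w_i\sim\mathcal N(0,\sigma^2)$, so $\ell_i/\sigma^2$ is noncentral chi-square with one degree of freedom and noncentrality $\delta_i^2/\sigma^2$. Its mean is $\sigma^2+\delta_i^2$, so the first-order effect of raising $\delta_i^2$ can be offset by lowering $\sigma^2$. We take the constant profile shift $\delta_i'^2=\delta_i^2+\eta$ with $\eta:=\kappa\sigma_0^2m^{-1/2}$, keeping the signs of $\delta_i$ (any sign if $\delta_i=0$), and set $\sigma'^2=\sigma^2-\eta$. For large $n$ we have $\eta\to0$, so $\sigma'\in[\tfrac12\sigma_0,2\sigma_0]$ and $\norm{\delta'}_\infty\le C_\infty\sigma_0$. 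Also $\norm{\delta'}_m^2\le\rho_n+\eta\le\sigma_0^2$, so the alternative lies in \eqref{eq:floor-class}, and $\norm{\delta'}_m^2\ge\eta$ as required. Each coordinate's loss law is now a one-parameter family indexed by $(\delta_i^2,\sigma^2)$, and we perturb along the direction $(+\eta,-\eta)$ that keeps the mean fixed.

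The main step, and the expected obstacle, is bounding the divergence. We bound the total variation by Hellinger distance, or by $\sqrt{\tfrac12\mathrm{KL}}$, which tensorizes over the $m$ independent coordinates. Per coordinate, the squared Hellinger distance between the two noncentral chi-square laws is second order in the perturbation: $\eta^2\,I_i$, where $I_i$ is the Fisher information along the mean-preserving direction. One would first hope $I_i=0$, but it is not, because the variance $2\sigma^4+4\sigma^2\delta_i^2$ still changes. We handle this by a direct second-order expansion around the smooth density of $\ell_i$. Note that $\ell_i$ itself is used, not $\sqrt{\ell_i}$, since the sign of $\delta_i+w_i$ is lost. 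This gives $I_i\le C/\sigma_0^4$ uniformly over $|\delta_i|\le C_\infty\sigma_0$ and $\sigma\in[\tfrac12\sigma_0,2\sigma_0]$, with the third-order remainder controlled uniformly on this compact parameter range.

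Summing, $H^2\lesssim m\eta^2/\sigma_0^4=\kappa^2$, which is only $O(1)$, not vanishing. This is where the dependence on $\kappa$ and the neighbourhood must be exploited more carefully. The expansion shows that the deviation of the log-likelihood ratio is dominated by the shift in the variance of $\ell_i$. When $\delta\to0$ that variance is $2\sigma^4+4\sigma^2\delta_i^2\approx2\sigma^4$, and our alternative changes it by $-4\sigma^2\eta+4\sigma^2\eta+O(\eta^2)$: the two first-order variance effects cancel exactly at $\delta=0$. Hence the per-coordinate information along this direction is $O(\delta_i^2+\eta)/\sigma_0^4$ rather than a constant. Summing gives $H^2\lesssim m\eta^2(\norm{\delta}_m^2+\eta)/\sigma_0^6\lesssim\kappa^2(\rho_n+\kappa m^{-1/2})$, which tends to zero. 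This defines $\tau_n(\kappa)$, completing the argument. Verifying this cancellation rigorously, including the higher-order terms of the noncentral chi-square density near zero noncentrality, is the delicate calculation I expect to dominate the proof.
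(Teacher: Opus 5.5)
Your outline is the paper's proof. It uses the same ridge $\delta_i^2+\sigma^2=\mathrm{const}$ with a shift of order $\kappa\sigma_0^2m^{-1/2}$ (the paper takes $\sigma^2\varepsilon$, $\varepsilon=\kappa m^{-1/2}$), the same membership check, the same Le Cam transfer, and a tensorized divergence with leading term $\kappa^2\norm{\delta}_m^2/\sigma_0^2$.

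The gap is the per-coordinate estimate. It is the whole technical content of the paper's argument (Lemma~\ref{lem:mixture-chisq}), and you assert it rather than prove it. Your reason does not establish it. The fact that $\Var(\ell_i)$ moves only by $-4\eta\delta_i^2+O(\eta^2)$ along the ridge controls only the part of the ridge score along quadratic functions of $\ell_i$. Matching two moments does not make the Fisher information along a direction small, because the rest of the score could survive. What must be shown is that the \emph{entire} first-order perturbation vanishes at $\delta_i=0$.

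Here is how to show it. Write the law of $\ell_i$ as the image under squaring of the symmetric mixture $\tfrac12\mathcal N(\delta,\sigma^2)+\tfrac12\mathcal N(-\delta,\sigma^2)$. Its density is $(2\pi v)^{-1/2}e^{-(x^2+\theta)/(2v)}\cosh(\sqrt\theta\,x/v)$ with $\theta=\delta^2$ and $v=\sigma^2$. Expanding $\log\cosh u=u^2/2-u^4/12+\cdots$ gives two facts:
\begin{itemize}
\item the mixture agrees with $\mathcal N(0,v+\theta)$ to first order in $\theta$;
\item along $\theta+v=c$ its log-density is $\log\phi_c(x)-\tfrac{\theta^2}{12c^4}(x^4-6cx^2+3c^2)$, up to $O(\theta^3)$ terms of polynomial growth in $x$, where $\phi_c$ is the $\mathcal N(0,c)$ density.
\end{itemize}
So the ridge score is, to leading order, $\theta$ times a quadratic polynomial in $\ell=x^2$. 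This is exactly the polynomial carrying your variance shift, which is why your heuristic picks out the right leading term. In the normalized parametrization ($\lambda_i=\delta_i^2/\sigma^2$, $p_t$ the mixture at $(\lambda_i+t,\,1-t)$, $S_t=\partial_t\log p_t$), this yields $\E_{p_t}S_t^2\le C(\lambda_i+t)^2$ uniformly on the parameter range. The paper's Step~C proves exactly this by expanding $\tanh$ in the score.

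Given that bound, your Hellinger route closes more cheaply than the paper's $\chi^2$ route. Since $\sqrt{p_\varepsilon}-\sqrt{p_0}=\tfrac12\int_0^\varepsilon S_t\sqrt{p_t}\,dt$, Minkowski's inequality in $L^2(dx)$ gives $\norm{\sqrt{p_\varepsilon}-\sqrt{p_0}}_{L^2}\le\tfrac12\int_0^\varepsilon(\E_{p_t}S_t^2)^{1/2}\,dt\lesssim\varepsilon(\lambda_i+\varepsilon)$. Squared Hellinger distance tensorizes, and Hellinger distance dominates total variation. You therefore need no analogue of the likelihood-ratio bound $\sup_x p_t/p_0\le\sqrt2\,e^{2\Lambda+2}$ (the paper's Step~B); that step is forced only because the paper interpolates in $L^2(1/p_0)$.

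Your stated order for the information, linear in $\delta_i^2+\eta$, is weaker than the true quadratic order $(\delta_i^2+\eta)^2$, but it suffices. The paper itself discards the extra factor through $\sum_i(\lambda_i^0)^2\le4C_\infty^2\sum_i\lambda_i^0$ and lands on the same leading term $\kappa^2\rho_n/\sigma_0^2$.

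One small correction to your wording: $\sqrt{\ell_i}=|\delta_i+w_i|$ would serve exactly as well as $\ell_i$. What must be avoided is the signed residual $\delta_i+w_i$, whose law separates the two problems at first order.
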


Under the assumptions of Lemma~\ref{lem:krr-rate} the training-fold profile satisfies $\norm{\delta}_m^2=O_P(n^{-2s/(2s+1)})$, so a sequence $\rho_n\to0$ can be chosen slowly enough that the restriction $\norm{\delta}_m^2\le\rho_n$ holds with probability tending to one, and dividing \eqref{eq:floor} by $\mathcal{E}(\widehat f)=O_P(n^{-2s/(2s+1)})$ recovers the divergence rate of \eqref{eq:cvdiverge} for the whole class at once.
The interval \eqref{eq:cvbound}, its nested refinement, and the order-statistic repair evaluated in \S\ref{sec:simulation} are all functions of held-out losses alone, so each faces the dichotomy: it either fails the validity requirement \eqref{eq:floor-class} or obeys the floor \eqref{eq:floor}.
The proof (\appref{app:proof-floor}) is a two-point argument along the moment ridge $\delta_i^2+\sigma^2=\mathrm{const}$: the law of the held-out losses is nearly invariant along the ridge, because squaring erases the sign information that would separate the two problems, while the target $\norm{\delta}_m^2$ moves by $\kappa\sigma_0^2m^{-1/2}$; the machinery is that of the quadratic-functional testing bounds of \citet{caiLow2006adaptive}.
The Gaussian refit lies outside the class \eqref{eq:floor-class}: it reads not held-out losses but the movement of the fit under injected perturbations, and along the ridge the movement separates the two problems, which is why the floor does not bind it.
Note also that \eqref{eq:floor} is stated under Gaussian noise: the floor needs no heavy tails, so the fourth-moment requirement of Assumption~\ref{as:spread} is a second, separate deficiency of the interval \eqref{eq:cvbound}, not the source of the divergence.

\section{Related work}
\label{sec:related}

Across the literatures adjacent to our question one pattern recurs: the centre of the error distribution can be estimated under weak assumptions, but every existing tail statement for the realized error is purchased either with Gaussian noise or with a known noise scale.
No prior method delivers a finite-sample, level-$\alpha$, computable upper confidence bound on the realized empirical-norm error of the same-data kernel fit under conditional symmetry alone; the three groups below account for the near misses.

\paragraph{Resampling and held-out methods.}
The wild bootstrap perturbs residuals by independent multipliers to reproduce heteroscedastic noise, with classical validity for regression \citep{wu1986jackknife,liu1988bootstrap,mammen1992when,mammen1993bootstrap,davidson2008wild}. \citet{wainwright2025wild} turn the device into an error bound for a firmly non-expansive fit, with extensions to Bregman losses, asymmetric noise, and subsampled refits \citep{huSimchiLevi2025bregman,huSimchiLevi2025doubly}; its mechanism, the degeneracy at kernel ridge, and the Gaussian completion that repairs it are the subject of \S\ref{sec:wr-inequality}--\S\ref{sec:completion} (Corollary~\ref{cor:w25krr}). For cross-validation \citep{stone1974cross,geisser1975predictive}, the across-fold standard error is biased, honest intervals need the nested construction of \citet{batesHastieTibshirani2024cv}, the noise level is hard to estimate from within the sample \citep{bengioGrandvalet2004no}, and the refinements of \citet{bayle2020cv} and \citet{austernZhou2025cv} sharpen the variance estimate but keep the mean-plus-standard-error form, whose floor is Propositions~\ref{prop:cv} and~\ref{prop:floor}. \citet{wager2020cv} observes that the leading fluctuation of cross-validation is model-independent, cancelling in comparisons but not in levels, which is why model selection survives \citep{lei2020cvc} while level bounds do not, the distinction drawn after Proposition~\ref{prop:cv}.

\paragraph{Risk estimation and confidence sets.}
Stein's unbiased risk estimate \citep{stein1981estimation} recovers the risk of a weakly differentiable estimator; \citet{li1989honest} inverted it into honest confidence balls for the realized loss, \citet{bellecZhang2021sure} quantify its fluctuation, and both are tied to Gaussian noise with known or estimable variance. A confidence ball centred at the fit is an upper confidence bound on its realized error, the classical antecedent of our question \citep{li1989honest,beranDumbgen1998modulation,juditskyLambertLacroix2003,baraud2004confidence,caiLow2006adaptive,robinsVanDerVaart2006adaptive}; every construction assumes Gaussian or moment-bounded noise with known, or interval-known, variance, and even \citet{robinsVanDerVaart2006adaptive}, with an arbitrary centring estimator, require sample splitting and a known variance, their radius carrying the $n^{-1/2}$ term of Proposition~\ref{prop:floor}. Two boundaries locate our contribution: with unknown noise level and no shape restriction, honest balls of nontrivial radius do not exist \citep{baraud2004confidence}, and conditional symmetry is the structural assumption we show suffices; honesty over a smoothness scale caps the radius at $n^{-1/4}$ \citep{li1989honest,caiLow2006adaptive}, while our bound contracts at the minimax rate (Theorem~\ref{thm:rate}), its radius pinned to a known reproducing-kernel ball. A control-oriented literature likewise certifies $|\widehat f(x)-f^*(x)|$ from a known kernel-norm bound and a known noise envelope, sub-Gaussian \citep{abbasiYadkori2011improved,chowdhuryGopalan2017kernelized,fiedler2021practical}, bounded \citep{maddalena2021deterministic}, or energy-bounded \citep{lahr2025optimal}; heavy-tailed noise admits no such envelope, and our bound is calibrated from the data, with the noise scale nowhere an input.

\paragraph{Adjacent inference targets.}
Exactness of the sign distribution for symmetric noise also powers the sign-perturbed-sums method \citep{csajiCampiWeyer2015sps} and its kernel extensions \citep{csajiKis2019kernel}, whose exact, distribution-free regions cover regression parameters or ideal noise-free representations under structural assumptions such as a known input law or finite variance, whereas we bound the realized empirical-norm error of the deployed fit, with neither. Conformal prediction \citep{vovk2005algorithmic,lei2018distribution,barber2021predictive} is finite-sample valid for a future response under exchangeability; its target is the next observation, not the error of the fitted regression function, so the two are complementary. Kernel-ridge bands via multiplier bootstrap \citep{singhVijaykumar2023kernel} are asymptotic, need bounded residuals, and target the function rather than the realized error.

The proof ingredients have their own lineage: the minimax benchmark over a reproducing-kernel ball is classical \citep{stone1982optimal,yang1999information}, with sharp kernel-ridge rates in the effective-dimension parametrization \citep{caponnettoDeVito2007optimal,linCevherRosasco2018optimal}, and Theorem~\ref{thm:rate} shows our bound attains it. The completion rests on Gaussian comparison \citep{anderson1955integral} and Berry--Esseen bounds for quadratic forms \citep{nourdinPeccatiReinert2010,doeblerPeccati2017}, and the lower-bound machinery of Proposition~\ref{prop:floor} is that of quadratic-functional testing \citep{laurentMassart2000,caiLow2006adaptive}.

\section{Simulation: accuracy and coverage of the data-driven bound}
\label{sec:simulation}

The theory of \S\ref{sec:theory} certifies the worst-case envelope; here we test the sharper data-driven envelope, whose coverage is empirical. We ask two questions, one per subsection: does the bound stay accurate and covered as the sample grows (\S\ref{sec:sim-main}); and, as a closing stress test, does it survive noise with no moments (\S\ref{sec:sim-moment}). The boundary of the method, slower eigendecay and misspecification, is charted in \appref{app:sim-scope} and summarized at the end of \S\ref{sec:sim-main}.

\paragraph{Design.}
The data follow the model \eqref{eq:model} on a fixed uniform design,
\[
x_i=\frac{i-1}{n-1},\qquad
f^*(x)=\sin(2\pi x)+\tfrac12\sin(6\pi x),\qquad
n\in\{500,\,2000,\,8000\},
\]
with radial basis kernel of bandwidth $\gamma=50$ and ridge penalty $\lambda=10^{-3}$; the bound uses $L=199$ Gaussian refits and the order statistic at $k^*=190$, with $R$ replicates per cell. \appref{app:sim-details} gives the calibration in full.

\paragraph{Comparators.}
Cross-validation is the comparator of record, the tool in common use for this question, and enters in two forms: the interval \eqref{eq:cvbound} with the noise level $\bar\sigma^2$ known, the form Proposition~\ref{prop:cv} analyses, and with the noise level estimated by the df-corrected residual variance
\[
\widehat\sigma^2:=\frac{\norm{y-\widehat f}^2}{n-2\operatorname{tr}(H)+\operatorname{tr}(H^2)},
\]
the form a practitioner computes. The remaining comparators are SURE \citep{stein1981estimation} and the Rademacher wild-refit statistic \citep{wainwright2025wild}, both point estimates. In the small-sample cells of \S\ref{sec:sim-main} we add the Bates--Hastie--Tibshirani nested cross-validation \citep{batesHastieTibshirani2024cv}, the strongest cross-validation interval available, where its finite-sample advantage over the plain hold-out bound is largest and where cross-validation is most competitive.

We summarise each method by the metrics of \eqref{eq:metrics}: accuracy, the median bound over the true $95\%$ prediction-error quantile, and coverage, the fraction of replicates on which the bound exceeds the realized error. A confidence bound at level $0.95$ has coverage at least $0.95$ and, ideally, accuracy near one.

\begin{equation}
\mathrm{accuracy} \;:=\; \frac{\operatorname{med}(\widehat U)}{q_{0.95}},
\qquad
\mathrm{coverage} \;:=\; \frac1R\sum_{r=1}^R \ind\bigl\{\widehat U^{(r)}\ge\mathcal{E}^{(r)}\bigr\},
\label{eq:metrics}
\end{equation}

\subsection{Accuracy and coverage across sample sizes}
\label{sec:sim-main}

Table~\ref{tab:sim} reports the comparison. The Gaussian refit sits at $1.7$--$1.9\times$ the true $95\%$ quantile with full coverage on every cell. Cross-validation's margin diverges in both forms, exactly as Propositions~\ref{prop:cv} and~\ref{prop:floor} predict: with the noise level known it grows from $4.3\times$ at $n=500$ to $10.5$--$21\times$ at $n=8000$, and its coverage falls from $0.96$ toward $0.85$ as the noise loses moments; with the noise level estimated the margin is unchanged, growing to $10$--$28\times$, while coverage is full, because the residual estimate tracks the realized noise energy and the subtraction self-centres. The dichotomy of Proposition~\ref{prop:floor} is visible in the data: the level of the interval is repairable, its noise-scale margin is not. SURE and the Rademacher statistic are point estimates, with accuracy below one, at times negative, and coverage below $0.58$; they are not confidence bounds.

\begin{table}[t]
\centering
\caption{Accuracy and coverage \eqref{eq:metrics} across sample sizes and noise laws, radial basis kernel, $R$ replicates; the Cauchy row reports the median of bound over realized error (\S\ref{sec:sim-moment}), since the quantile is outlier-dominated there. Cross-validation appears in its known-level, estimated-level, and quantile-repair forms (\S\ref{sec:simulation}); SURE and the Rademacher statistic are point estimates, not bounds. The tightest method with coverage at least $0.95$ is in bold.}
\label{tab:sim}
\setstretch{1}\footnotesize
\setlength{\tabcolsep}{3.5pt}
\begin{tabular}{@{}llcccccc@{}}
\toprule
noise & $n$ & Gaussian refit & CV ($\bar\sigma^2$) & CV ($\widehat\sigma^2$) & quantile CV & SURE & Rademacher \\
\midrule
Gaussian & $500$ & $\bm{1.73\ (1.00)}$ & $4.33\ (0.92)$ & $4.56\ (1.00)$ & $7.60\ (1.00)$ & $0.37\ (0.49)$ & $0.23\ (0.36)$ \\
Gaussian & $2000$ & $\bm{1.83\ (1.00)}$ & $7.61\ (0.93)$ & $7.07\ (1.00)$ & $13.5\ (1.00)$ & $0.95\ (0.58)$ & $0.31\ (0.36)$ \\
Gaussian & $8000$ & $\bm{1.83\ (1.00)}$ & $10.5\ (0.96)$ & $10.1\ (1.00)$ & $19.1\ (1.00)$ & $-0.14\ (0.44)$ & $0.42\ (0.39)$ \\
Laplace & $500$ & $\bm{1.88\ (1.00)}$ & $6.08\ (0.91)$ & $6.25\ (1.00)$ & $11.1\ (1.00)$ & $0.36\ (0.45)$ & $0.33\ (0.33)$ \\
Laplace & $2000$ & $\bm{1.89\ (1.00)}$ & $11.3\ (0.93)$ & $11.1\ (1.00)$ & $20.2\ (1.00)$ & $0.30\ (0.46)$ & $0.30\ (0.31)$ \\
Laplace & $8000$ & $\bm{1.92\ (1.00)}$ & $18.2\ (0.98)$ & $18.9\ (1.00)$ & $31.1\ (1.00)$ & $0.36\ (0.49)$ & $0.36\ (0.32)$ \\
Student-$t_4$ & $500$ & $\bm{1.88\ (1.00)}$ & $6.67\ (0.88)$ & $7.61\ (1.00)$ & $13.3\ (1.00)$ & $0.11\ (0.43)$ & $0.27\ (0.32)$ \\
Student-$t_4$ & $2000$ & $\bm{1.76\ (1.00)}$ & $13.6\ (0.90)$ & $14.8\ (1.00)$ & $26.6\ (1.00)$ & $-0.40\ (0.43)$ & $0.28\ (0.30)$ \\
Student-$t_4$ & $8000$ & $\bm{1.84\ (1.00)}$ & $21.3\ (0.85)$ & $28.4\ (1.00)$ & $50.7\ (1.00)$ & $-3.93\ (0.39)$ & $0.27\ (0.29)$ \\
\addlinespace
Cauchy & $2000$ & $\bm{4.64\ (1.00)}$ & $503\ (1.00)$ & $286\ (1.00)$ & $547\ (1.00)$ & $204\ (1.00)$ & $0.61\ (0.22)$ \\
\bottomrule
\end{tabular}
\end{table}

\begin{figure}[t]
\centering
\includegraphics[width=\linewidth]{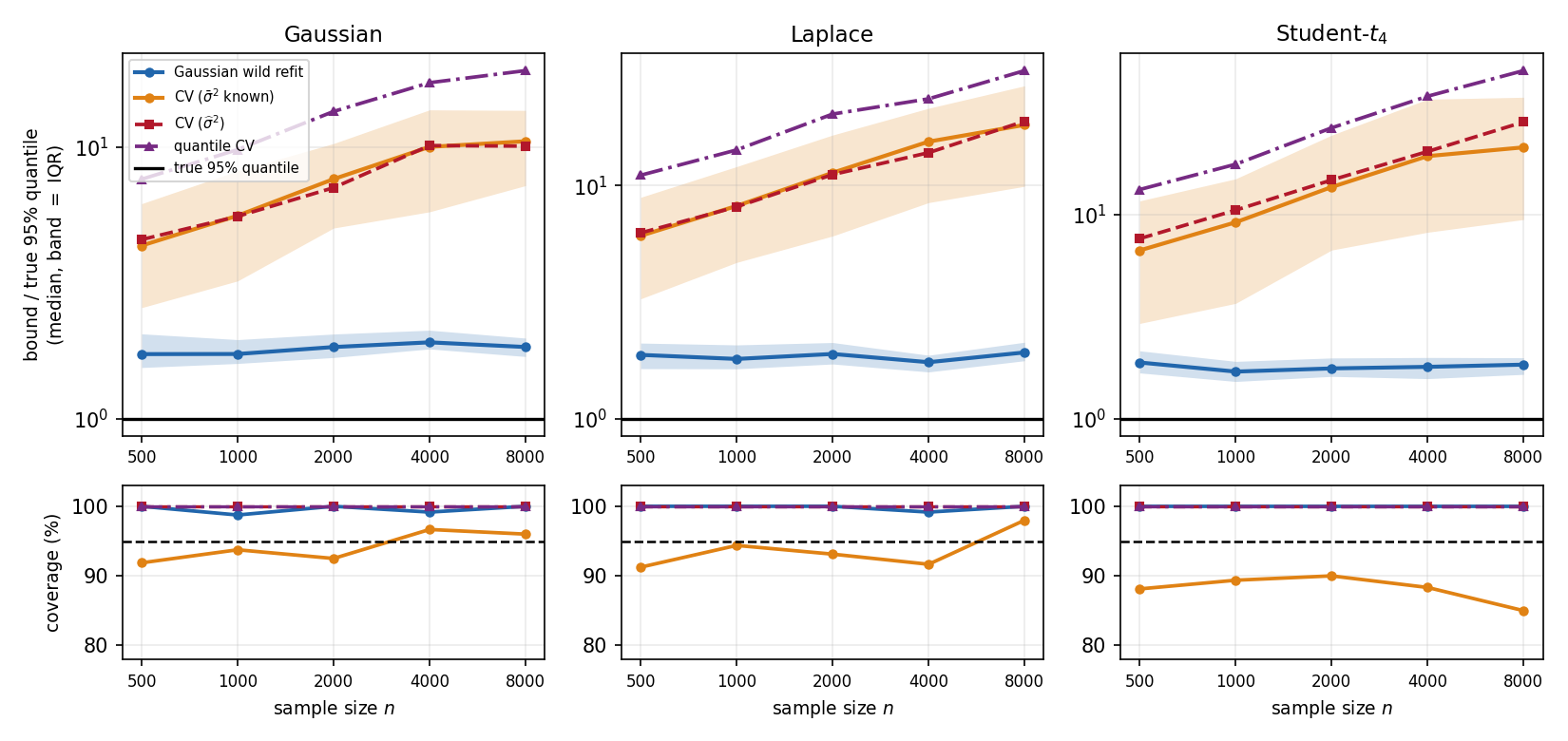}
\caption{The Gaussian refit tracks the target while cross-validation drifts away. \emph{Top:} the bound as a multiple of the true $95\%$ prediction-error quantile (median; bands, where shown, the interquartile range); the line at one is the target. \emph{Bottom:} empirical coverage against the stated $95\%$ (dashed). Cross-validation appears in the three forms of Table~\ref{tab:sim}; the known and estimated noise levels nearly coincide in the top panels.}
\label{fig:sim}
\end{figure}

\paragraph{The strongest cross-validation baseline.}
Against the nested cross-validation of \citet{batesHastieTibshirani2024cv} the picture is unchanged: on the small-sample cells it is close to the plain hold-out bound, at $4.5$--$12\times$ the target quantile with coverage $0.88$--$0.97$, so the Gaussian refit at $1.7$--$1.9\times$ remains several times tighter against the best cross-validation offers.

\paragraph{Repairing the interval does not help.}
Beyond the estimated noise level, the remaining repair abandons the normal approximation altogether: the quantile-CV column of Table~\ref{tab:sim}, an order statistic over $59$ independent holdout splits, the analogue of our calibration and free of moment assumptions.
It too behaves as Proposition~\ref{prop:floor} predicts: coverage is full on every cell, but the bound widens and inherits the divergence, $7.6\times$, $13\times$, $19\times$ across $n=500,2000,8000$ under heteroscedastic Gaussian noise, $13\times$ to $51\times$ under Student-$t_4$, and $547\times$ under Cauchy on the scale of \S\ref{sec:sim-moment}.
Across all three repairs the pattern is the same: the level of the interval is repairable, its noise-scale margin is not.

\paragraph{Portability to a nonlinear smoother.}
The movement is read by querying the fitted procedure, so it applies beyond the linear smoother. We test this on the constrained fit \eqref{eq:constrained}: the movement $m(a)$ is queried from $\check f$, while the envelope's leverage term and the bias input $b$, which have no closed form for a nonlinear fit, are computed from a linear kernel-ridge stand-in at the same penalty. Table~\ref{tab:port} reports the result: the bound holds coverage $0.99$--$1.00$ at $1.4$--$2.2\times$ the target quantile across $n\le4000$, while cross-validation runs $4.5$--$17\times$ and under-covers on half the cells; at $n=8000$ the heteroscedastic cell loosens to $3.0\times$, still covering. The guarantee of \S\ref{sec:theory} does not cover this estimator, so the coverage is empirical.

\begin{table}[t]
\centering
\caption{Portability to the constrained fit \eqref{eq:constrained}: the movement is queried from the nonlinear refit, the envelope and bias inputs from a linear kernel-ridge stand-in; accuracy and coverage as in \eqref{eq:metrics}. Cross-validation appears in its known-level form, since the estimated-level forms require degrees of freedom with no closed form for this fit. The tightest method with coverage at least $0.95$ is in bold.}
\label{tab:port}
\setstretch{1}\footnotesize
\setlength{\tabcolsep}{4pt}
\begin{tabular}{@{}llcccc@{}}
\toprule
noise & & $n{=}500$ & $1000$ & $2000$ & $4000$ \\
\midrule
Gaussian & ours & $\bm{1.76\ (0.99)}$ & $\bm{1.84\ (1.00)}$ & $\bm{1.85\ (0.99)}$ & $\bm{2.24\ (1.00)}$ \\
 & CV ($\bar\sigma^2$) & $4.49\ (0.96)$ & $5.77\ (0.95)$ & $6.76\ (0.95)$ & $11.8\ (0.98)$ \\
 & SURE & $0.30\ (0.46)$ & $0.65\ (0.51)$ & $0.09\ (0.45)$ & $0.34\ (0.47)$ \\
 & Rademacher & $0.27\ (0.41)$ & $0.22\ (0.38)$ & $0.29\ (0.42)$ & $0.48\ (0.48)$ \\
\addlinespace
Laplace & ours & $\bm{1.66\ (1.00)}$ & $\bm{1.71\ (1.00)}$ & $\bm{1.82\ (1.00)}$ & $\bm{1.78\ (1.00)}$ \\
 & CV ($\bar\sigma^2$) & $6.53\ (0.96)$ & $7.60\ (0.88)$ & $11.7\ (0.93)$ & $13.4\ (0.97)$ \\
 & SURE & $0.69\ (0.51)$ & $-0.16\ (0.39)$ & $0.01\ (0.46)$ & $1.36\ (0.53)$ \\
 & Rademacher & $0.29\ (0.31)$ & $0.27\ (0.33)$ & $0.30\ (0.33)$ & $0.37\ (0.40)$ \\
\addlinespace
Student-$t_4$ & ours & $\bm{1.75\ (1.00)}$ & $\bm{1.43\ (0.99)}$ & $\bm{1.74\ (1.00)}$ & $\bm{1.58\ (1.00)}$ \\
 & CV ($\bar\sigma^2$) & $6.49\ (0.89)$ & $6.97\ (0.88)$ & $12.0\ (0.91)$ & $17.1\ (0.91)$ \\
 & SURE & $-0.25\ (0.41)$ & $-0.81\ (0.39)$ & $-0.71\ (0.41)$ & $-0.80\ (0.43)$ \\
 & Rademacher & $0.31\ (0.34)$ & $0.30\ (0.36)$ & $0.27\ (0.36)$ & $0.21\ (0.29)$ \\
\bottomrule
\end{tabular}
\end{table}

\paragraph{Scope.}
The boundary of the method is charted in \appref{app:sim-scope}. As eigendecay slows the bound stays valid at full coverage but loses its margin: on the Mat\'ern-$3/2$ kernel it holds $2.7$--$3.3\times$, on the Mat\'ern-$1/2$ kernel $5$--$9\times$, no longer tighter than cross-validation at moderate $n$, the transition Proposition~\ref{prop:cv} anticipates through its exponent. Under a misspecified target a flexible fit stays covered, the residuals carrying the bias into the envelope, but a rigid over-smoothed fit is bias-dominated and coverage fails, the regime in which cross-validation is the appropriate tool.

\subsection{Robustness without moment assumptions}
\label{sec:sim-moment}

A stress test outside the main comparison: under standard Cauchy noise the variance is infinite and the cross-validation standard error estimates a quantity that does not exist. Measured against each dataset's own realized error, the Cauchy row of Table~\ref{tab:sim}, the Gaussian refit reads $4.6\times$ while cross-validation reads over $500\times$ with the noise level known and $286\times$ with it estimated, all at full coverage; the order-statistic repair of \S\ref{sec:sim-main} reads $547\times$. The bound depends only on the conditional symmetry of the signs and requires no moment of the noise.

\section{Real data}
\label{sec:realdata}

We test the certified bound on a real spatial field. Kernel ridge is at home on a smooth signal sampled over a domain, so we take as the regression function a real digital elevation model: the Jacksboro fault-zone terrain tile,\footnote{A U.S.\ Geological Survey surface-elevation grid, distributed as sample data with Matplotlib.} a $344\times403$ grid of surface heights.

\paragraph{Design.}
Lightly smoothed, the elevation $f^*$ lies in the radial-basis reproducing-kernel Hilbert space, so the fit is well specified; the design is a fixed random subsample of $n\in\{500,\,2000,\,8000\}$ sites in $[0,1]^2$, and the response $y_i=f^*(x_i)+w_i$ carries noise of the laws of \S\ref{sec:simulation}. Because the field is known, the excess risk $\mathcal{E}(\widehat f)=\norm{\widehat f-f^*}_n^2$ is measured exactly, and its $95\%$ quantile over $R$ noise draws is the yardstick \eqref{eq:metrics}, as in the synthetic study; the estimators and comparators are unchanged. The only synthetic ingredient is the noise, as in image denoising.

\begin{table}[t]
\centering
\caption{Real elevation field: accuracy and coverage \eqref{eq:metrics} on the digital elevation model, radial basis kernel, $R$ noise draws; the Cauchy row reports the median of bound over realized error (\S\ref{sec:sim-moment}). Cross-validation appears in the three forms of Table~\ref{tab:sim}; SURE and the Rademacher statistic are point estimates. The tightest method with coverage at least $0.95$ is in bold.}
\label{tab:realdata}
\setstretch{1}\footnotesize
\setlength{\tabcolsep}{2.6pt}
\begin{tabular}{@{}llcccccc@{}}
\toprule
noise & $n$ & Gaussian refit & CV ($\bar\sigma^2$) & CV ($\widehat\sigma^2$) & quantile CV & SURE & Rademacher \\
\midrule
Gaussian & $500$ & $3.03\ (1.00)$ & $2.05\ (0.94)$ & $\bm{2.09\ (1.00)}$ & $3.05\ (1.00)$ & $0.63\ (0.42)$ & $0.10\ (0.06)$ \\
Gaussian & $2000$ & $\bm{2.81\ (1.00)}$ & $3.40\ (0.94)$ & $3.04\ (1.00)$ & $4.75\ (1.00)$ & $0.96\ (0.56)$ & $0.36\ (0.15)$ \\
Gaussian & $8000$ & $\bm{2.55\ (1.00)}$ & $4.48\ (0.98)$ & $4.27\ (1.00)$ & $7.11\ (1.00)$ & $0.52\ (0.45)$ & $0.38\ (0.13)$ \\
\addlinespace
Laplace & $500$ & $3.37\ (1.00)$ & $\bm{2.81\ (0.95)}$ & $2.84\ (1.00)$ & $4.42\ (1.00)$ & $0.70\ (0.46)$ & $0.11\ (0.04)$ \\
Laplace & $2000$ & $\bm{2.75\ (1.00)}$ & $4.27\ (0.95)$ & $4.22\ (1.00)$ & $6.82\ (1.00)$ & $0.67\ (0.47)$ & $0.36\ (0.16)$ \\
Laplace & $8000$ & $\bm{2.43\ (1.00)}$ & $6.29\ (0.97)$ & $6.26\ (1.00)$ & $10.2\ (1.00)$ & $0.46\ (0.47)$ & $0.45\ (0.14)$ \\
\addlinespace
Student-$t_4$ & $500$ & $3.27\ (1.00)$ & $2.91\ (0.94)$ & $\bm{3.21\ (1.00)}$ & $5.03\ (1.00)$ & $0.55\ (0.40)$ & $0.16\ (0.05)$ \\
Student-$t_4$ & $2000$ & $\bm{2.72\ (1.00)}$ & $4.88\ (0.91)$ & $5.36\ (1.00)$ & $8.96\ (1.00)$ & $0.38\ (0.44)$ & $0.35\ (0.16)$ \\
Student-$t_4$ & $8000$ & $\bm{2.56\ (1.00)}$ & $7.59\ (0.86)$ & $9.69\ (1.00)$ & $17.2\ (1.00)$ & $-0.65\ (0.39)$ & $0.37\ (0.13)$ \\
\addlinespace
Cauchy & $2000$ & $\bm{6.09\ (1.00)}$ & $123\ (1.00)$ & $70.9\ (1.00)$ & $131\ (1.00)$ & $49.8\ (1.00)$ & $0.46\ (0.22)$ \\
\bottomrule
\end{tabular}
\end{table}

Table~\ref{tab:realdata} tells the same story as the synthetic design, now on a real field. The Gaussian refit holds full coverage on every cell, at $2.4$--$3.4\times$ the true quantile, and tightens as $n$ grows. Cross-validation's margin grows with $n$ in both forms, from $2.1$--$3.2\times$ at $n=500$ to $4.3$--$9.7\times$ at $n=8000$, the divergence of Propositions~\ref{prop:cv} and~\ref{prop:floor} on real data; with the noise level known, coverage also erodes to $0.86$ under Student-$t_4$ at $n=8000$, while the estimated level self-centres and covers at the same diverging margin. At $n=500$ cross-validation is the tighter valid bound on three of four noise laws; from $n=2000$ onward the Gaussian refit is tighter on every row, and the gap grows with $n$. Under Cauchy noise, measured against realized error, cross-validation reads $123\times$ known and $71\times$ estimated against the bound's $6.1\times$; every statistic is inflated by the realized noise, and the bound's single-digit factor is the informative one. Elsewhere SURE and the Rademacher statistic are point estimates and do not cover.

\section{Discussion}
\label{sec:discussion}

Extending the Rademacher wild refit to a Gaussian refit makes two tools available: Anderson's inequality licenses the replacement of the unobservable noise by a computable envelope, and the refit statistic acquires a known weighted-chi-square law, so an order statistic over repeated refits supplies the unobservable terms of a proven bound, one that holds its stated level, contracts at the minimax rate, and requires no moment of the noise.

The conditions delimiting the guarantee are those listed at the end of \S\ref{sec:intro}: validity is conservative and contingent on delocalized noise, the worst-case envelope is loose on slowly decaying spectra, and the proofs cover the well-specified regime. In practice the data-driven envelope stays within a small constant of the truth on the synthetic and real designs (\S\ref{sec:simulation}, \S\ref{sec:realdata}) and loosens as the eigendecay slows (\appref{app:sim-scope}), though its substitutions are estimates rather than upper bounds; it is robust to mild misspecification, since the residuals carry the bias into the envelope, and degrades only when the error becomes bias-dominated, where cross-validation is the appropriate tool. Extending the calibration to two-sample kernel losses is left to future work.

\section*{Data and code availability}
The elevation data of \S\ref{sec:realdata} are the U.S.\ Geological Survey Jacksboro fault-zone sample grid distributed with Matplotlib; all other data are simulated by the accompanying code. Python code reproducing the tables and figures, with fixed random seeds, is available from the authors; a DOI-stamped public release will accompany the journal version.

\section*{Acknowledgements}
Ni is supported by the Robert Goodell Ph.D.\ Student Fellowship for Research Excellence at Georgia Institute of Technology. Huo is partially supported by a subcontract of NSF grant 2229876, the A.\ Russell Chandler III Professorship at Georgia Institute of Technology, an NIH-sponsored Georgia Clinical \& Translational Science Alliance, and the Georgia Department of Transportation. The authors used a large language model (Anthropic's Claude) to assist with editing the manuscript text and developing the reproduction code; all technical content, proofs, and final wording were verified by the authors.

\appendix
\section{Proofs}
\label{app:proofs}

Throughout, the setting is that of \S\ref{sec:setup}: fixed design, $y=f^*+w$ with
conditional symmetry (given $|w|$, the sign vector $\varepsilon$ is uniform on
$\{-1,1\}^n$ and thus has independent coordinates), $f^*\in\HH$ with
$\norm{f^*}_\HH\le B$ and $\sup_x k(x,x)\le\kappa^2$, and the kernel ridge smoother
$H=K(K+n\lambda I)^{-1}$ of \eqref{eq:krr}, which is symmetric positive
semidefinite with $\norm{H}_{\mathrm{op}}\le1$, eigenvalues $h_j:=\mu_j/(\mu_j+\lambda)$,
and effective dimension $d_n=\operatorname{tr}(H)=\sum_j h_j$. We write $g:=\widehat f-f^*$,
so $g=(H-I)f^*+Hw$ and $\mathcal{E}(\widehat f)=\norm{g}_n^2$, and $\widetilde w=y-\widehat f$
for the residuals. For symmetric $H$ we use $H^\top H=H^2$ and
$(H^\top H)_{jj}=\sum_i H_{ij}^2=(H^2)_{jj}$ interchangeably.

The rate results of \S\ref{sec:theory} use, in addition to Assumption~\ref{as:decay},
the standing bounded-leverage regularity condition
\begin{equation}
\text{(D)}\qquad \max_i (H^\top H)_{ii}\;\le\; C_D\, d_n/n
\label{eq:leverage}
\end{equation}
for a constant $C_D$. Condition~\eqref{eq:leverage} holds whenever the kernel
eigenfunctions are uniformly bounded, since then
$(H^\top H)_{ii}=\sum_j h_j^2\varphi_j(x_i)^2\le C_\varphi^2\sum_j h_j^2\asymp C_\varphi^2 d_n$
up to the $1/n$ normalization; it is the standard delocalization condition of the
smoothing-spline literature and is stated as part of Assumption~\ref{as:decay}.

\subsection{Probabilistic tools}
\label{app:prob-tools}

\begin{lemma}[Anderson's inequality; \citealp{anderson1955integral}]
\label{lem:anderson-tool}
Let $X_1\sim\mathcal N(0,\Sigma_1)$ and $X_2\sim\mathcal N(0,\Sigma_2)$ with
$\Sigma_1\preceq\Sigma_2$ in the positive-semidefinite order. For every convex set
$C\subset\R^d$ that is symmetric about the origin, $\Pr(X_1\in C)\ge\Pr(X_2\in C)$.
\end{lemma}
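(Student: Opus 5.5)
The plan is the classical two-step argument: first reduce the comparison of covariances to a comparison of \emph{shifts}, then prove that shifting a centred Gaussian measure away from the origin can only decrease the mass of a symmetric convex set.

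\emph{Reduction to shifts.} Since $\Sigma_2-\Sigma_1\succeq0$, I would let $Z\sim\mathcal N(0,\Sigma_2-\Sigma_1)$ be independent of $X_1$, so that $X_1+Z\overset{d}{=}X_2$. Conditioning on $Z$ and using Fubini gives
\[
\Pr(X_2\in C)=\E_Z\bigl[\Pr(X_1\in C-Z\mid Z)\bigr]=\E_Z\bigl[\mu_1(C-Z)\bigr],
\]
where $\mu_1$ denotes the law of $X_1$. It therefore suffices to show the shift inequality $\mu_1(C+z)\le\mu_1(C)$ for every fixed $z\in\R^d$. Measurability is not an issue, because convex sets are Lebesgue measurable and translates of $C$ are again convex.

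\emph{Shift inequality via log-concavity.} The key input is that a centred Gaussian measure is log-concave:
\[
\mu_1\bigl(\theta A+(1-\theta)B\bigr)\ge\mu_1(A)^\theta\,\mu_1(B)^{1-\theta}
\]
for measurable $A,B$ and $\theta\in[0,1]$. When $\Sigma_1\succ0$ this is the Prékopa--Leindler inequality applied to the log-concave Gaussian density. Take $A=C+z$, $B=C-z$ and $\theta=\tfrac12$. By convexity of $C$,
\[
\tfrac12(C+z)+\tfrac12(C-z)=\tfrac12C+\tfrac12C\subseteq C,
\]
so
\[
\mu_1(C)\ge\sqrt{\mu_1(C+z)\,\mu_1(C-z)}.
\]
Symmetry of both $C$ and $\mu_1$ then gives
\[
\mu_1(C-z)=\Pr(-X_1\in -C+z)=\Pr(X_1\in C+z)=\mu_1(C+z),
\]
and the right-hand side collapses to $\mu_1(C+z)$. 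Substituting this into the conditional expectation above yields $\Pr(X_2\in C)\le\Pr(X_1\in C)$.

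\emph{Degenerate $\Sigma_1$.} If $\Sigma_1$ is singular, $\mu_1$ lives on the subspace $V=\operatorname{range}(\Sigma_1)$ and has a nondegenerate Gaussian density there. I would repeat the Prékopa--Leindler step inside $V$, using the sets $(C+z)\cap V$. These are still convex, they are not symmetric, but symmetry is only used through the identity $\mu_1(C-z)=\mu_1(C+z)$, which holds verbatim. The inclusion $\tfrac12\bigl((C+z)\cap V\bigr)+\tfrac12\bigl((C-z)\cap V\bigr)\subseteq C\cap V$ also holds, because $V$ is a linear subspace. Nothing in the reduction step requires $\Sigma_2-\Sigma_1$ to be nonsingular.

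\emph{Main obstacle.} The only nontrivial ingredient is the log-concavity of the Gaussian measure, which I will quote from Prékopa--Leindler rather than reprove. Besides that, the care needed is in the degenerate case, handled by the subspace restriction above; the rest is bookkeeping.

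Finally, Lemma~\ref{lem:anderson} follows by taking $C=\{x:\norm{Ax}\le t\}$, which is convex and symmetric, with $X_1=\xi\circ v\sim\mathcal N(0,\operatorname{diag}(v)^2)$ and $X_2=\xi\circ u\sim\mathcal N(0,\operatorname{diag}(u)^2)$. Here $\operatorname{diag}(v)^2\preceq\operatorname{diag}(u)^2$ because $u_i\ge v_i\ge0$, and passing to complements of $C$ gives the stated inequality between the tail probabilities.
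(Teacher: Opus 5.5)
Your proof is correct, but it is not the paper's proof, because the paper gives none: it imports this lemma by citation to \citet{anderson1955integral}. The only hint of an argument is the remark after Lemma~\ref{lem:anderson} that the result rests on a variance-increment decomposition with no two-point counterpart.

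Your reduction is exactly that decomposition: you write $X_2\overset{d}{=}X_1+Z$ with an independent $Z\sim\mathcal N(0,\Sigma_2-\Sigma_1)$ and condition on $Z$. The genuine difference from the cited argument is how you obtain the shift inequality $\mu_1(C+z)\le\mu_1(C)$.
\begin{itemize}
\item Anderson proves it for every symmetric unimodal density $f$. He works layer by layer, applying Brunn--Minkowski to the sets $(C+kz)\cap\{f\ge u\}$, and gets monotonicity of $k\mapsto\mu_1(C+kz)$ on $[0,1]$.
\item You apply Pr\'ekopa--Leindler to the log-concave Gaussian law with $A=C+z$, $B=C-z$, $\theta=\tfrac12$, and let symmetry collapse the geometric mean.
\end{itemize}
Your route is shorter and fully sufficient for Gaussians. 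The price is that it needs log-concavity rather than mere unimodality. It also recovers the ray monotonicity if you want it: the same Minkowski inclusion shows that $k\mapsto\mu_1(C+kz)$ is even and log-concave, hence nonincreasing on $[0,\infty)$.

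Your treatment of singular $\Sigma_1$ inside $V=\operatorname{range}(\Sigma_1)$ is necessary, not a side remark. In your application, $\operatorname{diag}(v^2)$ is singular as soon as some $v_i=0$, and $A\operatorname{diag}(v^2)A^\top$ in the paper's application can be singular as well. In that case your measurability remark should also be read inside $V$: $C\cap V$ is convex, hence Lebesgue measurable, in $V$. This is moot for the closed sets actually used.

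Finally, you apply the lemma directly to $\xi\circ u$ and $\xi\circ v$ with the preimage set $\{x:\norm{Ax}\le t\}$. This is an equivalent alternative to the paper's route, which uses the congruence $A\operatorname{diag}(v^2)A^\top\preceq A\operatorname{diag}(u^2)A^\top$ with a ball.
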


\begin{lemma}[Hanson--Wright; \citealp{rudelsonVershynin2013hansonwright}]
\label{lem:hw}
Let $\varepsilon=(\varepsilon_1,\dots,\varepsilon_n)$ have independent, mean-zero,
unit sub-Gaussian coordinates and let $A$ be an $n\times n$ matrix. Then for every $t>0$,
\[
\Pr\bigl(|\varepsilon^\top A\varepsilon - \E\,\varepsilon^\top A\varepsilon| > t\bigr)
\;\le\; 2\exp\!\Bigl(-c\,\min\bigl\{t^2/\norm{A}_F^2,\; t/\norm{A}_{\mathrm{op}}\bigr\}\Bigr)
\]
for a universal constant $c>0$.
\end{lemma}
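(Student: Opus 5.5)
The plan is to follow the standard route of \citet{rudelsonVershynin2013hansonwright}: split the quadratic form into its diagonal and off-diagonal parts and bound each tail separately. Writing $A=D+O$ with $D:=\operatorname{diag}(A)$, we have
\[
\varepsilon^\top A\varepsilon-\E\,\varepsilon^\top A\varepsilon
=\sum_i A_{ii}(\varepsilon_i^2-\E\varepsilon_i^2)+\sum_{i\ne j}A_{ij}\varepsilon_i\varepsilon_j .
\]
Since $\norm{D}_F\le\norm{A}_F$, $\norm{O}_F\le\norm{A}_F$, $\norm{D}_{\mathrm{op}}\le\norm{A}_{\mathrm{op}}$ and $\norm{O}_{\mathrm{op}}\le2\norm{A}_{\mathrm{op}}$, it suffices to prove the claimed tail for each part at level $t/2$. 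A union bound then adjusts the constants $2$ and $c$.

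\emph{Diagonal part.} Each $\varepsilon_i^2-\E\varepsilon_i^2$ is centred and sub-exponential with $\psi_1$-norm $O(1)$, because $\varepsilon_i$ is unit sub-Gaussian. Bernstein's inequality for independent sub-exponential sums gives
\[
2\exp\!\bigl(-c\min\{t^2/\textstyle\sum_iA_{ii}^2,\ t/\max_i|A_{ii}|\}\bigr),
\]
and $\max_i|A_{ii}|\le\norm{A}_{\mathrm{op}}$ yields the required form. In the Rademacher case used in this paper, $\varepsilon_i^2\equiv1$ and this part vanishes identically.

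\emph{Off-diagonal part.} Here I would work with the moment generating function. For $|\lambda|\le c/\norm{A}_{\mathrm{op}}$, the goal is
\[
\E\exp(\lambda S)\le\exp(C\lambda^2\norm{A}_F^2),\qquad S:=\sum_{i\ne j}A_{ij}\varepsilon_i\varepsilon_j .
\]
Optimizing Markov's inequality over such $\lambda$ then gives the mixed sub-Gaussian/sub-exponential tail, and the same argument applied to $-S$ covers the other side. The steps are as follows.
\begin{enumerate}
\item \emph{Decoupling.} Bound $\E\exp(\lambda S)\le\E\exp(4\lambda\,\varepsilon^\top A\varepsilon')$ with $\varepsilon'$ an independent copy. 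This uses the usual random-subset trick: introduce i.i.d.\ selectors $\eta_i\in\{0,1\}$ with mean $\tfrac12$, write $S=4\E_\eta\sum_{i,j}\eta_i(1-\eta_j)A_{ij}\varepsilon_i\varepsilon_j$, apply Jensen, and then replace $\varepsilon_j$ on the complementary set by $\varepsilon'_j$.
\item \emph{Gaussian comparison.} Condition on $\varepsilon'$. The variable $\varepsilon^\top(A\varepsilon')$ is sub-Gaussian with variance proxy $\lesssim\norm{A\varepsilon'}_2^2$, and this equals $\E_g\exp(C\lambda\, g^\top A\varepsilon')$ for $g\sim\mathcal N(0,I_n)$. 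Reversing the roles, conditioning on $g$ and using sub-Gaussianity of $\varepsilon'$, reduces everything to the Gaussian chaos $\E\exp(C\lambda\, g^\top Ag')$.
\item \emph{Diagonalization.} By rotation invariance and the singular value decomposition $A=U\Sigma V^\top$, this chaos equals $\prod_j\E\exp(C\lambda s_j g_jg'_j)=\prod_j(1-C^2\lambda^2s_j^2)^{-1/2}$. This is at most $\exp(C'\lambda^2\sum_js_j^2)=\exp(C'\lambda^2\norm{A}_F^2)$ whenever $|\lambda|\max_js_j\le c$, that is, whenever $|\lambda|\le c/\norm{A}_{\mathrm{op}}$.
\end{enumerate}

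The main obstacle is the decoupling and comparison step, which has to be carried out at the level of moment generating functions with constants uniform in $n$. The delicate point is keeping the restriction on $\lambda$ tied to $\norm{A}_{\mathrm{op}}$ rather than $\norm{A}_F$ through both conditioning stages. I would first try the subset-selector decoupling above, since it avoids the general decoupling theorem of de la Pe\~na--Montgomery-Smith.
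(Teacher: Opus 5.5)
Your proposal is correct and follows the Rudelson--Vershynin argument that the paper cites without reproducing: Bernstein on the diagonal, and off the diagonal decoupling, the two-sided sub-Gaussian-to-Gaussian comparison, and the SVD formula $\prod_j(1-\mu^2 s_j^2)^{-1/2}$ for the Gaussian-chaos moment generating function; the restriction $|\lambda|\lesssim 1/\norm{A}_{\mathrm{op}}$ enters only at that last step, so the conditioning stages are not actually delicate. The one step to fill in is the decoupling: the selector trick and substitution give only the block sum $\sum_{i\in\Lambda,\,j\in\Lambda^c}A_{ij}\varepsilon_i\varepsilon'_j$ with $\Lambda=\{i:\eta_i=1\}$, so to reach $\varepsilon^\top A\varepsilon'$ you need one more conditional Jensen step (the added terms have conditional mean zero given $(\varepsilon_i)_{i\in\Lambda}$ and $(\varepsilon'_j)_{j\in\Lambda^c}$), or else carry the submatrix $A_{\Lambda\times\Lambda^c}$ throughout, whose Frobenius and operator norms are dominated by those of $A$.
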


\begin{lemma}[Berry--Esseen for independent summands, Lyapunov form]
\label{lem:be}
Let $S:=\sum_{j}Y_j$ with $Y_j$ independent, mean zero, $\sum_j\Var(Y_j)=1$. Then
$\sup_t|\Pr(S\le t)-\Phi(t)|\le C_{\mathrm{BE}}\sum_j\E|Y_j|^3$ for a universal
constant $C_{\mathrm{BE}}$.
\end{lemma}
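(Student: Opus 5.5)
This is the classical Berry--Esseen theorem, and I would prove it by Esseen's smoothing inequality. Write $\sigma_j^2:=\Var(Y_j)$, $\gamma_j:=\E|Y_j|^3$, $\gamma:=\sum_j\gamma_j$, $\varphi_j(t):=\E e^{itY_j}$ and $\varphi_S(t)=\prod_j\varphi_j(t)$.

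\emph{Reduction.} Since $\sup_t|\Pr(S\le t)-\Phi(t)|\le1$, the claim is trivial when $\gamma\ge c$ for any fixed $c>0$, once $C_{\mathrm{BE}}\ge1/c$. So assume $\gamma$ is small. By Lyapunov's inequality $\sigma_j^3\le\gamma_j$, hence $\max_j\sigma_j\le\gamma^{1/3}$, which is also small; every summand is then individually negligible.

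\emph{Smoothing.} Esseen's inequality gives, for every $T>0$,
\[
\sup_t\bigl|\Pr(S\le t)-\Phi(t)\bigr|\;\le\;\frac1\pi\int_{-T}^{T}\frac{|\varphi_S(t)-e^{-t^2/2}|}{|t|}\,dt\;+\;\frac{24}{\pi\sqrt{2\pi}\,T}.
\]
I would take $T:=1/(4\gamma)$. The last term is then $O(\gamma)$, and it remains to show
\[
|\varphi_S(t)-e^{-t^2/2}|\le C\gamma|t|^3e^{-t^2/3}\qquad\text{for }|t|\le T.
\]
Dividing by $|t|$ and integrating gives $O(\gamma)$, which proves the lemma.

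\emph{Characteristic-function estimate.} A third-order Taylor expansion gives
\[
\varphi_j(t)=1-\tfrac12\sigma_j^2t^2+\theta_j\,\gamma_j|t|^3/6,\qquad|\theta_j|\le1.
\]
I would use the telescoping identity
\[
\prod_j\varphi_j-\prod_je^{-\sigma_j^2t^2/2}=\sum_j\Bigl(\prod_{k<j}\varphi_k\Bigr)\bigl(\varphi_j-e^{-\sigma_j^2t^2/2}\bigr)\Bigl(\prod_{k>j}e^{-\sigma_k^2t^2/2}\Bigr).
\]
Each middle factor is bounded by $\gamma_j|t|^3/6+\sigma_j^4t^4/8$, and $\sigma_j^4t^4\le\sigma_j^3|t|^3\cdot\sigma_j|t|\le\gamma_j|t|^3\cdot\gamma^{1/3}|t|$. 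So the whole sum is bounded by $C\gamma|t|^3$ times a uniform bound on the outer products.

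\emph{Main obstacle.} The hard step is showing that the partial products $\prod_{k\ne j}|\varphi_k(t)|$ decay like $e^{-t^2/3}$ on the whole range $|t|\le 1/(4\gamma)$, not just for bounded $t$. I would first try the symmetrization bound
\[
|\varphi_k(t)|^2\le1-\sigma_k^2t^2+\tfrac43\gamma_k|t|^3\le\exp\bigl(-\sigma_k^2t^2+\tfrac43\gamma_k|t|^3\bigr).
\]
Multiplying over $k\ne j$ gives
\[
\prod_{k\ne j}|\varphi_k(t)|^2\le\exp\bigl(-(1-\sigma_j^2)t^2+\tfrac43\gamma|t|^3\bigr).
\]
For $|t|\le1/(4\gamma)$ we have $\tfrac43\gamma|t|^3\le t^2/3$, and $\sigma_j^2\le\gamma^{2/3}$ is small. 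The exponent is therefore at most $-t^2/3$ up to a harmless constant, which gives the needed Gaussian-type decay.

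\emph{Alternative.} If tracking the constants here becomes delicate, I would fall back on the Stein-method proof. It uses the leave-one-out sums $S^{(j)}=S-Y_j$ and the identity $\E[Sf(S)]=\sum_j\E[Y_j(f(S)-f(S^{(j)}))]$. The non-smooth indicator test functions are handled there by the Chen--Shao concentration inequality $\Pr(a\le S^{(j)}\le b)\le\sqrt2(b-a)+C\gamma$, and that concentration inequality would then be the main difficulty instead.
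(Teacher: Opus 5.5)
The paper does not prove this lemma. It is listed in Appendix~\ref{app:prob-tools} as the classical Berry--Esseen theorem and used as a black box in Step~2 of the proof of Theorem~\ref{thm:validity}, applied to the independent summands $Y_j=\ell_j(Z_j^2-1)/(\sqrt2\norm{M}_F)$. So there is no competing argument to compare against. What you supply is the standard Esseen-smoothing proof, and it goes through.

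Your organization is slightly nicer than the usual textbook one. You combine the telescoping identity with the symmetrization bound $|\varphi_k(t)|^2\le\exp(-\sigma_k^2t^2+\tfrac43\gamma_k|t|^3)$ for the outer products. This covers the whole range $|t|\le 1/(4\gamma)$ at once. The classical route instead splits at $|t|\asymp\gamma^{-1/3}$: a logarithmic expansion for small $t$ and a crude bound $|\varphi_S|+e^{-t^2/2}$ for large $t$.

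Two bookkeeping points need fixing; neither affects the conclusion.

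First, $\sigma_j^4t^4\le\gamma_j|t|^3\cdot\gamma^{1/3}|t|$ does not give a bound $C\gamma|t|^3$ on the sum of the middle factors. The reason is that $\gamma^{1/3}|t|$ can be as large as $\gamma^{-2/3}/4$ on $|t|\le 1/(4\gamma)$. Keep this term separately instead: $\sum_j\sigma_j^4t^4\le\gamma^{1/3}\sum_j\gamma_j t^4=\gamma^{4/3}t^4$. After multiplying by the Gaussian factor and integrating against $dt/|t|$, it contributes $O(\gamma^{4/3})$. Equivalently, absorb it via $t^4e^{-t^2/3}\le C|t|^3e^{-t^2/4}$.

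Second, your symmetrization estimate gives the outer products decay $\exp\{-(\tfrac13-\tfrac12\sigma_j^2)t^2\}$, not $e^{-t^2/3}$. Since $\sigma_j^2\le\gamma^{2/3}$, this is at most $e^{-t^2/4}$ once $\gamma^{2/3}\le\tfrac16$. So the target characteristic-function inequality should be stated with $e^{-t^2/4}$ in place of $e^{-t^2/3}$. The integral $\frac1\pi\int_{-T}^{T}|\varphi_S(t)-e^{-t^2/2}|\,|t|^{-1}\,dt$ is then still $O(\gamma)$, and together with the $O(\gamma)$ smoothing term and your reduction to small $\gamma$, the lemma follows.
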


\begin{lemma}[Fourth-moment Berry--Esseen for a degree-two Rademacher chaos;
{\citealp{nourdinPeccatiReinert2010,doeblerPeccati2017}}]
\label{lem:chaos}
Let $N$ be a symmetric $n\times n$ matrix with zero diagonal and let
$F:=\varepsilon^\top N\varepsilon/(\sqrt2\,\norm{N}_F)$ for $\varepsilon$ a Rademacher
vector, so $\E F=0$ and $\Var F=1$. Then
$\sup_t|\Pr(F\le t)-\Phi(t)|\le C_{\mathrm{ch}}\,\norm{N}_{\mathrm{op}}/\norm{N}_F$
for a universal constant $C_{\mathrm{ch}}$.
\end{lemma}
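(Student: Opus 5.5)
The plan is to reduce the lemma to the quantitative fourth-moment theorems for Rademacher chaos of \citet{nourdinPeccatiReinert2010} and \citet{doeblerPeccati2017}, and then to bound every quantity those theorems involve by the single ratio $\norm{N}_{\mathrm{op}}/\norm{N}_F$.

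\emph{Step 1: normalization.} Since $N$ is symmetric with zero diagonal,
\[
F=\frac{2}{\sqrt2\,\norm{N}_F}\sum_{i<j}N_{ij}\varepsilon_i\varepsilon_j .
\]
This is a homogeneous sum of degree two in independent Rademacher variables. It has $\E F=0$, and $\Var F=\tfrac{2}{\norm{N}_F^2}\sum_{i<j}N_{ij}^2=1$, because the off-diagonal products $\varepsilon_i\varepsilon_j$ are orthonormal.

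\emph{Step 2: invoke the Kolmogorov-distance theorem.} For a degree-two homogeneous Rademacher chaos with unit variance, the de Jong type theorem of \citet{doeblerPeccati2017} gives
\[
\sup_t|\Pr(F\le t)-\Phi(t)|\le C\Bigl(\sqrt{|\E F^4-3|}+\sqrt{\max_i\operatorname{Inf}_i}\Bigr),
\qquad \operatorname{Inf}_i:=\frac{2\sum_j N_{ij}^2}{\norm{N}_F^2}.
\]
In the Rademacher setting, the fourth-moment bound alone does not control the distance; the maximal influence is the extra term that must be added. I will use this form as the cited black box.

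\emph{Step 3: bound both terms by the operator-to-Frobenius ratio.}

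\textbf{Influence.} Each row satisfies $\sum_jN_{ij}^2=\norm{Ne_i}^2\le\norm{N}_{\mathrm{op}}^2$. Hence $\max_i\operatorname{Inf}_i\le 2\norm{N}_{\mathrm{op}}^2/\norm{N}_F^2$, and its square root is at most $\sqrt2\,\norm{N}_{\mathrm{op}}/\norm{N}_F$.

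\textbf{Fourth cumulant.} Expand $\E(\varepsilon^\top N\varepsilon)^4$ by counting index pairings, using $\varepsilon_i^2=1$. Compared with the Gaussian case, the Rademacher fourth moment has an extra diagonal correction that enters with a negative sign and involves $\sum_{i,j}N_{ij}^4$. The resulting identity is
\[
\E F^4-3=\frac{c_1\operatorname{tr}(N^4)-c_2\sum_{i,j}N_{ij}^4}{\norm{N}_F^4}
\]
for absolute constants $c_1,c_2>0$. I intend to verify these constants by the standard contraction computation for the Wiener–Rademacher chaos, that is, the contraction $N\star_1 N=N^2$ with its diagonal removed. Both terms on the right are then controlled as follows:
\[
\operatorname{tr}(N^4)=\norm{N^2}_F^2\le\norm{N}_{\mathrm{op}}^2\norm{N}_F^2,
\qquad
\sum_{i,j}N_{ij}^4\le\max_i\Bigl(\sum_jN_{ij}^2\Bigr)\norm{N}_F^2\le\norm{N}_{\mathrm{op}}^2\norm{N}_F^2 .
\]
Therefore $|\E F^4-3|\le C\,\norm{N}_{\mathrm{op}}^2/\norm{N}_F^2$, and its square root is at most $C\,\norm{N}_{\mathrm{op}}/\norm{N}_F$. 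Adding the two contributions gives the claim with a universal constant $C_{\mathrm{ch}}$.

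\emph{Main obstacle.} The delicate point is Step 2, namely obtaining a bound in Kolmogorov distance rather than in a smooth metric. Passing from Wasserstein to Kolmogorov by a smoothing argument would cost a square root and yield only $(\norm{N}_{\mathrm{op}}/\norm{N}_F)^{1/2}$. I would therefore rely on the Kolmogorov-distance version of \citet{doeblerPeccati2017}, which is obtained via the discrete Malliavin–Stein method with their carré-du-champ bound. If that version applies only with the fourth cumulant plus the maximal influence, as I expect, Step 3 already covers both terms.
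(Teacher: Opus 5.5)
Your route is the paper's: invoke a quantitative fourth-moment (de Jong) theorem for the Rademacher chaos, then bound the maximal influence and the standardized fourth cumulant by $\norm{N}_{\mathrm{op}}/\norm{N}_F$. The paper also bounds the third cumulant $8\operatorname{tr}(N^3)$, which your square-root form does not need. Your insistence on a genuine Kolmogorov-distance statement is, if anything, more careful than the paper's ``total-variation (hence Kolmogorov)'' phrasing. Since $F$ is discrete, its total-variation distance to $\mathcal N(0,1)$ is $1$.

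One correction: the fourth-cumulant identity is not of the form $c_1\operatorname{tr}(N^4)-c_2\sum_{i,j}N_{ij}^4$. The pairing count gives $\kappa_4(\varepsilon^\top N\varepsilon)=48\operatorname{tr}(N^4)-96\sum_i(N^2)_{ii}^2+32\sum_{i,j}N_{ij}^4$, and $\E F^4-3$ is this divided by $4\norm{N}_F^4$. For a star with $k$ unit edges this equals $-2/k$, which your form cannot reproduce with $c_1>0$. The conclusion is unaffected: the extra term obeys $\sum_i(N^2)_{ii}^2\le\max_i(N^2)_{ii}\operatorname{tr}(N^2)\le\norm{N}_{\mathrm{op}}^2\norm{N}_F^2$, so $|\E F^4-3|\lesssim\norm{N}_{\mathrm{op}}^2/\norm{N}_F^2$ still holds and your argument goes through.
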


Lemma~\ref{lem:chaos} is the specialization to a symmetric second-order Rademacher
form of the fourth-moment theorem: the total-variation (hence Kolmogorov) distance to
the normal is controlled by the maximal-influence and third/fourth standardized
cumulants of the chaos, and each of these is $O(\norm{N}_{\mathrm{op}}/\norm{N}_F)$.
Indeed $\kappa_3(\varepsilon^\top N\varepsilon)=8\operatorname{tr}(N^3)$ and
$\kappa_4=48\operatorname{tr}(N^4)-96\sum_i(N^2)_{ii}^2+32\sum_{ij}N_{ij}^4$, and each
invariant obeys $|\operatorname{tr}(N^m)|\le\norm{N}_{\mathrm{op}}^{m-2}\norm{N}_F^2$ for
$m\ge2$, so the standardized third and fourth cumulants are
$O(\norm{N}_{\mathrm{op}}/\norm{N}_F)$ and $O(\norm{N}_{\mathrm{op}}^2/\norm{N}_F^2)$
respectively, while the maximal influence is
$\max_i(N^2)_{ii}/\norm{N}_F^2\le\norm{N}_{\mathrm{op}}^2/\norm{N}_F^2$.

\subsection{Proof of Lemma~\ref{lem:anderson} (envelope domination)}
\label{app:proof-anderson}

Let $u,v\in\R^n$ be fixed with $u_i\ge v_i\ge0$ for every $i$, and $\xi\sim\mathcal N(0,I_n)$. Then
$A(\xi\circ u)\sim\mathcal N(0,\,A\operatorname{diag}(u^2)A^\top)$ and
$A(\xi\circ v)\sim\mathcal N(0,\,A\operatorname{diag}(v^2)A^\top)$. Because
$u_i\ge v_i\ge0$, the diagonal matrix $\operatorname{diag}(u^2-v^2)$ is positive
semidefinite, so
\[
A\operatorname{diag}(u^2)A^\top - A\operatorname{diag}(v^2)A^\top
= A\operatorname{diag}(u^2-v^2)A^\top \succeq 0
\]
by congruence, i.e.\ $A\operatorname{diag}(v^2)A^\top\preceq A\operatorname{diag}(u^2)A^\top$.
The set $C_t:=\{z:\norm{z}\le t\}$ is convex and symmetric about the origin. Applying
Lemma~\ref{lem:anderson-tool} with $\Sigma_1=A\operatorname{diag}(v^2)A^\top$,
$\Sigma_2=A\operatorname{diag}(u^2)A^\top$ and $C=C_t$ gives
$\Pr(\norm{A(\xi\circ v)}\le t)\ge\Pr(\norm{A(\xi\circ u)}\le t)$; taking complements
yields the claim. \qed

\medskip\noindent\emph{Nonnegativity is necessary.} With $u=(1,0)$, $v=(-3,0)$ the
matrix $\operatorname{diag}(u^2-v^2)=\operatorname{diag}(-8,0)$ is not positive
semidefinite and the conclusion can fail; the hypothesis $u_i\ge v_i\ge0$ is used exactly
here.

\medskip\noindent\emph{Pathwise application (used in \S\ref{app:proof-validity}).} When
the envelope $a=a(\varepsilon)$ is data-dependent, Lemma~\ref{lem:anderson} is applied
conditionally on $(\varepsilon,|w|)$: for each fixed realization, $a(\varepsilon)$ is a
fixed vector with $a_i(\varepsilon)\ge|w_i|$ for every $i$, and $\xi\perp(\varepsilon,|w|)$,
so the inequality holds verbatim and the resulting quantile of $\norm{H(\xi\circ|w|)}_n$
is a function of $|w|$ alone.

\medskip\noindent\emph{No Rademacher analogue.} For a two-point multiplier the statement
is false: with $u=(1,1)$, $v=(1,0)$, $A=[1,1]$ one has
$\Pr(|\varepsilon_1+\varepsilon_2|>\tfrac12)=\tfrac12<1=\Pr(|\varepsilon_1|>\tfrac12)$.
The two-point law is not infinitely divisible, so the variance-increment decomposition
underlying Lemma~\ref{lem:anderson-tool} has no two-point counterpart.

\subsection{The wild-refit propositions and the degeneracy \texorpdfstring{(Corollary~\ref{cor:w25krr})}{}}
\label{app:degeneracy}

We first restate the two results of \citet{wainwright2025wild} that Corollary~\ref{cor:w25krr} specializes, stated with the optimism $\Opt^\star(\widehat f):=\tfrac1n\sum_i w_i\bigl(\widehat f(x_i)-f^*(x_i)\bigr)$ and the wild refit and wild optimism of \eqref{eq:wild-refit}.

\begin{proposition}[Wild-refit inequality; \citealp{wainwright2025wild}]
\label{prop:w25}
Under Assumption~\ref{as:noise}, for a radius $r\ge\norm{\widehat f-f^\dagger}_n$ with $f^\dagger$ of \eqref{eq:err-split}, and the scale $\rho$ at which the wild refit \eqref{eq:wild-refit} satisfies $\norm{\widehat f^{\bullet}_\rho-\widehat f}_n=2r$, the optimism obeys, for any $t>0$ and with probability at least $1-4\exp(-t^2)$,
\[
\Opt^\star(\widehat f) \;\le\; \Optw^{\bullet}_\rho \;+\; \bigl(3r+\norm{f^\dagger-f^*}_n\bigr)\frac{2\norm{w}_\infty\,t}{\sqrt n} \;+\; A_n ,
\]
where $\norm{w}_\infty:=\max_i|w_i|$ and the pilot-approximation term $A_n$ is
\[
A_n \;:=\; \sup_{\norm{f-\widehat f}_n\le 2r}\ \tfrac1n\textstyle\sum_i \varepsilon_i\bigl(\widehat f(x_i)-f^*(x_i)\bigr)\bigl(f(x_i)-\widehat f(x_i)\bigr) .
\]
\end{proposition}

The rate is a second, separate ingredient, obtained from the same cross term over the fit's neighbourhood: maximized over an $r$-ball in the wild form, it is the wild complexity
\begin{equation}
W_n(r) \;:=\; \sup_{\norm{f-\widehat f}_n\le r}\ \tfrac1n\textstyle\sum_i\varepsilon_i\widetilde w_i\bigl(f(x_i)-\widehat f(x_i)\bigr).
\label{eq:wildcomplexity-sm}
\end{equation}

\begin{proposition}[Wild-refit rate; \citealp{wainwright2025wild}]
\label{prop:w25rate}
Let $r^*$ be the critical radius, the fixed point $W_n(r^*)\asymp(r^*)^2$ of the wild complexity \eqref{eq:wildcomplexity-sm}. With probability at least $1-4\exp(-t^2)$, and up to the remainder of Proposition~\ref{prop:w25}, the estimation error obeys $\norm{\widehat f-f^\dagger}_n\lesssim r^*$, so by \eqref{eq:err-split}, $\mathcal{E}(\widehat f)\lesssim(r^*)^2+\norm{f^\dagger-f^*}_n^2$. A wild complexity growing sublinearly in $r$ has a critical radius that shrinks with $n$; a linear one leaves $r^*$ fixed.
\end{proposition}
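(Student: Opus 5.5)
The plan is to run the classical localized basic-inequality argument, with the wild cross term in place of the true one and the remainder of Proposition~\ref{prop:w25} carried along. Set $r:=\norm{\widehat f-f^\dagger}_n$. First, firm non-expansiveness of $\mathcal{M}$, applied to the pair of responses $y$ and $f^*$ (so that $\mathcal{M}(y)=\widehat f$, $\mathcal{M}(f^*)=f^\dagger$ and $y-f^*=w$), gives
\[
r^2\le\langle\widehat f-f^\dagger,w\rangle_n .
\]
I would rewrite the right side as
\[
\langle\widehat f-f^\dagger,w\rangle_n=\Opt^\star(\widehat f)-\langle f^\dagger-f^*,w\rangle_n .
\]
Here $f^\dagger-f^*$ is deterministic given $|w|$, so the subtracted term is a Rademacher sum in $\varepsilon$. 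By Hoeffding it is at most $\norm{f^\dagger-f^*}_n\norm{w}_\infty t/\sqrt n$ with probability at least $1-2e^{-t^2}$. This is already of the form of the remainder in Proposition~\ref{prop:w25}, so it merges into that remainder.

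Second, I would invoke Proposition~\ref{prop:w25} at this radius $r$, with $\rho$ chosen so that $\norm{\widehat f^{\bullet}_\rho-\widehat f}_n=2r$; a union bound keeps the total failure probability at $4e^{-t^2}$. The computable term is then controlled by the wild complexity. By \eqref{eq:wild-refit},
\[
\Optw^{\bullet}_\rho=\langle\varepsilon\circ\widetilde w,\,\widehat f^{\bullet}_\rho-\widehat f\rangle_n ,
\]
and $\widehat f^{\bullet}_\rho$ lies in the $2r$-ball around $\widehat f$, so $\Optw^{\bullet}_\rho\le W_n(2r)$ by the definition \eqref{eq:wildcomplexity-sm}. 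Collecting terms,
\[
r^2\;\le\;W_n(2r)+\mathrm{Rem},
\]
where $\mathrm{Rem}$ is the remainder of Proposition~\ref{prop:w25}. This is precisely the sense of ``up to the remainder'' in the statement.

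Third, comes the fixed-point step. I would show that $r\mapsto W_n(r)/r$ is nonincreasing. The reason is that $W_n$ is a supremum of a linear functional over balls centred at $\widehat f$ within a star-shaped class, so rescaling a feasible direction by $r'/r\le1$ stays feasible; the same holds trivially for the unconstrained linear-smoother case. Then either $r\le r^*$, or $r>r^*$, in which case
\[
W_n(2r)\le 2r\,W_n(r^*)/r^*\lesssim r\,r^* ,
\]
so $r^2\lesssim r\,r^*+\mathrm{Rem}$ and hence $r\lesssim r^*+\sqrt{\mathrm{Rem}}$. Squaring and combining with the split \eqref{eq:err-split} gives $\mathcal{E}(\widehat f)\lesssim(r^*)^2+\norm{f^\dagger-f^*}_n^2$.

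The closing dichotomy is then read off the fixed-point equation $W_n(r^*)\asymp(r^*)^2$. If $W_n(r)\le c\,r^{\beta}$ with $\beta<1$ and $c$ small in $n$, then $r^*\asymp c^{1/(2-\beta)}$, which shrinks with $n$. If instead $W_n(r)=c\,r$, then $r^*\asymp c$ does not shrink; in the kernel-ridge case $c=\norm{\widetilde w}_n$, which is the situation of Corollary~\ref{cor:w25krr}(ii).

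The main obstacle is the self-referential choice of radius. The radius $r=\norm{\widehat f-f^\dagger}_n$ is random and unobservable, while Proposition~\ref{prop:w25} is stated for a fixed $r$ with its own matched $\rho$. I would first try applying it pathwise at the realized $r$, since it is stated for any $r\ge\norm{\widehat f-f^\dagger}_n$. If the probability statement does not survive that data-dependent choice, I would fall back to a peeling argument over a geometric grid $r_k=2^kr^*$. Each shell would receive failure probability $4e^{-t_k^2}$ with $t_k^2=t^2+k$, and the $r_k$-dependence of the remainder would absorb the resulting $\sqrt{k}$ factor into the constant.
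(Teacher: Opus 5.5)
The paper gives no proof of this proposition. It quotes it from \citet{wainwright2025wild} and uses it only as a black box in deriving Corollary~\ref{cor:w25krr}(ii). Your argument therefore supplies something the paper omits: a reduction of the rate statement to Proposition~\ref{prop:w25} along the standard localization route. The steps are firm non-expansiveness at the pair $(y,f^*)$, the split $\langle\widehat f-f^\dagger,w\rangle_n=\Opt^\star(\widehat f)-\langle f^\dagger-f^*,w\rangle_n$, the bound $\Optw^{\bullet}_\rho\le W_n(2r)$, and the star-shaped monotonicity of $W_n(r)/r$ compared at the fixed point. At the precision of the statement this is sound. It also makes explicit what the paper only asserts. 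Star-shapedness about $\widehat f$ is what drives the fixed-point comparison, and in the unconstrained linear case the supremum collapses to $W_n(r)=r\norm{\widetilde w}_n$, the non-shrinking critical radius used in Corollary~\ref{cor:w25krr}(ii).

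Three bookkeeping points need repair.

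First, probability accounting. Only a one-sided Hoeffding bound is needed, at threshold $\sqrt2\,\norm{w}_\infty\norm{f^\dagger-f^*}_n t/\sqrt n$ for failure probability $e^{-t^2}$. It is still an additional event, so the union bound gives $5e^{-t^2}$, not $4e^{-t^2}$; rescale $t$ or accept the constant.

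Second, the random radius. The remainder of Proposition~\ref{prop:w25}, a radius times $\norm{w}_\infty t/\sqrt n$, has the form of a concentration bound for a supremum localized at a fixed radius. Nothing in the statement licenses the pathwise use at $r=\norm{\widehat f-f^\dagger}_n$, so your peeling fallback is needed. Do not anchor it at $r^*$, however. That radius is built from the wild signs and from the residuals, which carry the noise signs, so it is random in exactly the way you are trying to avoid. Anchor instead at a level fixed under the conditioning on $|w|$:
\begin{itemize}
\item set $r_0:=\norm{w}_\infty/\sqrt n$ and $r_k:=2^kr_0$;
\item on the shell $r_{k-1}<\norm{\widehat f-f^\dagger}_n\le r_k$, apply Proposition~\ref{prop:w25} at $r_k$ with $t_k^2=t^2+k$, and use $\norm{\widehat f-f^\dagger}_n^2\ge r_k^2/4$.
\end{itemize}
The comparison $W_n(2r_k)\le 2r_kW_n(r^*)/r^*$ is pathwise, so it may still use the random $r^*$. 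The extra term is $6r_k\norm{w}_\infty\sqrt k/\sqrt n=6\cdot 2^{-k}\sqrt k\,r_k^2$. It is eventually a small multiple of $r_k^2$ and moves to the left side, which is what actually justifies absorbing $\sqrt k$ into the constant. The failure probability grows only by the factor $e/(e-1)$.

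Third, a limitation inherited from the statement. In the kernel-ridge instance, the only one the paper uses, $A_n=2r\norm{\widehat f-f^*}_n\ge 2r\bigl(r-\norm{f^\dagger-f^*}_n\bigr)$. Your inequality $r^2\le W_n(2r)+\mathrm{Rem}$ is therefore vacuous there. This comes from the phrase ``up to the remainder'' in the statement, not from a defect in your argument.
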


For part (i) of the corollary, the wild optimism at the linear smoother is the explicit quadratic form $\Optw^{\bullet}_\rho=\tfrac{\rho}{n}(\varepsilon\circ\widetilde w)^\top H(\varepsilon\circ\widetilde w)-\tfrac1n(\varepsilon\circ\widetilde w)^\top H\widetilde w$, computable from the data and the drawn signs, while every term of the remainder of Proposition~\ref{prop:w25} involves $w$ or $f^*$; the display \eqref{eq:w25krr} is Proposition~\ref{prop:w25} with the kernel-ridge bias $\norm{f^\dagger-f^*}_n=\norm{(H-I)f^*}_n$ substituted.

For part (ii), for the linear smoother \eqref{eq:krr} with pilot equal to the fit, $\widetilde w=(I-H)y$, and the smoother places no constraint on $f$ beyond $\norm{f-\widehat f}_n\le r$. Writing $g=f-\widehat f$, the wild complexity \eqref{eq:wildcomplexity-sm} is a linear maximization,
\[
W_n(r)=\sup_{\norm{g}_n\le r}\langle\varepsilon\circ\widetilde w,\,g\rangle_n
=r\,\norm{\varepsilon\circ\widetilde w}_n=r\,\norm{\widetilde w}_n ,
\]
by Cauchy--Schwarz (attained at $g=r\,\varepsilon\circ\widetilde w/\norm{\varepsilon\circ\widetilde w}_n$) and $|\varepsilon_i|=1$. The critical radius of Proposition~\ref{prop:w25rate} solves $W_n(r^*)\asymp(r^*)^2$, that is $r^*\norm{\widetilde w}_n\asymp(r^*)^2$, so
\[
r^*\asymp\norm{\widetilde w}_n=\norm{(I-H)y}_n .
\]
The residual norm $\norm{(I-H)y}_n$ does not vanish as $n\to\infty$ (the residuals carry the noise), so $r^*$ is of constant order. By \eqref{eq:err-split} and $\norm{\widehat f-f^\dagger}_n\lesssim r^*$,
\[
\norm{\widehat f-f^*}_n\le\norm{\widehat f-f^\dagger}_n+\norm{f^\dagger-f^*}_n\lesssim r^*+\norm{(H-I)f^*}_n\asymp\norm{(I-H)y}_n ,
\]
while $\norm{\widehat f-f^*}_n=O_P(n^{-s/(2s+1)})$ by Lemma~\ref{lem:krr-rate}. The wild-refit bound thus exceeds the truth by a factor of order at least $\norm{(I-H)y}_n/n^{-s/(2s+1)}\asymp n^{s/(2s+1)}\to\infty$.

\subsection{The error decomposition \texorpdfstring{\eqref{eq:wr-ineq}}{}}
\label{app:wr-inequality}

For the linear smoother $\widehat f=Hy=Hf^*+Hw$, the triangle inequality in the empirical norm gives
\[
\norm{\widehat f-f^*}_n=\norm{(H-I)f^*+Hw}_n\le\norm{Hw}_n+\norm{(H-I)f^*}_n .
\]
The bias $\norm{(H-I)f^*}_n$ is bounded over the reproducing-kernel-ball by $b$ in
\appref{app:envelope}, giving \eqref{eq:wr-ineq}. This is a deterministic bound. The completion and calibration of \S\ref{sec:completion} bound the noise term $\norm{Hw}_n$ by a computable refit quantile, with the domination established probabilistically in \appref{app:proof-validity}. For a general firmly non-expansive $\mathcal{M}$ the linear
decomposition is replaced by the contraction argument of \citet{wainwright2025wild}, which
yields the same bound with $\norm{Hw}_n$ the norm of the procedure's response to the noise.

\subsection{The exact refit law}
\label{app:exact-law}

For the linear smoother, the squared refit movement at envelope $a$ is the quadratic
form
\[
m(a)^2=\norm{H(\xi\circ a)}_n^2=\tfrac1n\,\xi^\top\operatorname{diag}(a)H^\top H\operatorname{diag}(a)\,\xi
=\tfrac1n\sum_{j=1}^n\ell_j(a)\,Z_j^2,
\]
where $\ell_1(a)\ge\cdots\ge\ell_n(a)\ge0$ are the eigenvalues of
$\operatorname{diag}(a)H^\top H\operatorname{diag}(a)$ and $Z_j$ are independent standard
normals (spectral decomposition of the positive-semidefinite form). Thus $n\,m(a)^2$ is a
weighted sum of independent $\chi^2_1$ variables, whose distribution function is available
in closed form by numerical inversion of the characteristic function
\mbox{$t\mapsto\prod_j(1-2it\ell_j(a)/n)^{-1/2}$} \citep{imhof1961computing}. The order statistic
\eqref{eq:orderstat} estimates the quantiles of this law directly, which is why the
Gaussian multiplier makes the calibration step exact up to Monte Carlo error.

\subsection{The worst-case envelope}
\label{app:envelope}

Under Assumption~\ref{as:model} we construct a data-measurable envelope $a$ with $a_i\ge|w_i|$ for every $i$ and bias input
$b\ge\norm{(H-I)f^*}_n$, so that Theorem~\ref{thm:validity} certifies the resulting
$\widehat U_\alpha$.

\paragraph{Envelope.} The residual is $\widetilde w_i=y_i-\widehat f(x_i)=w_i-g_i$, so
$|w_i|\le|\widetilde w_i|+|g_i|$. By the reproducing property and Cauchy--Schwarz,
\[
|g_i|=|\langle \widehat f-f^*,\,k(x_i,\cdot)\rangle_\HH|
\le \norm{\widehat f-f^*}_\HH\,\sqrt{k(x_i,x_i)}
\le \bigl(\norm{\widehat f}_\HH+B\bigr)\sqrt{k(x_i,x_i)}=:S_i,
\]
where $\norm{\widehat f}_\HH$ is computable from the fit and $\sqrt{k(x_i,x_i)}\le\kappa$.
Setting $a_i:=|\widetilde w_i|+S_i$ gives $a_i\ge|w_i|$ pointwise, as required by
Lemma~\ref{lem:anderson}.

\paragraph{Bias input.} In the eigenbasis of $H$, $(I-H)$ has eigenvalues
$\lambda/(\mu_j+\lambda)$, so with $\langle f^*,\phi_j\rangle$ the coordinates of $f^*$ and
$\norm{f^*}_\HH^2=\sum_j\langle f^*,\phi_j\rangle^2/\mu_j\le B^2$,
\[
\norm{(H-I)f^*}_n^2=\sum_j\Bigl(\tfrac{\lambda}{\mu_j+\lambda}\Bigr)^2\langle f^*,\phi_j\rangle^2
=\sum_j\frac{\lambda^2\mu_j}{(\mu_j+\lambda)^2}\cdot\frac{\langle f^*,\phi_j\rangle^2}{\mu_j}
\le \sup_{\mu\ge0}\frac{\lambda^2\mu}{(\mu+\lambda)^2}\,B^2=\frac{\lambda}{4}B^2,
\]
the supremum being attained at $\mu=\lambda$. Hence $b:=\tfrac12 B\sqrt{\lambda}$ satisfies
$b\ge\norm{(H-I)f^*}_n$. Both $a$ and $b$ are computable given the known radius $B$ of
Assumption~\ref{as:model}, and with these choices every term of $\widehat U_\alpha$ is a
valid upper bound.

\subsection{Proof of Theorem~\ref{thm:validity} (validity)}
\label{app:proof-validity}

Fix the magnitudes $|w|$ and recall $M=\operatorname{diag}(|w|)H^\top H\operatorname{diag}(|w|)$,
$N=M-\operatorname{diag}(M)$, and the functionals \eqref{eq:deloc}. Write
$\mu_M:=\operatorname{tr}(M)/n$, $\sigma_M:=\sqrt2\,\norm{M}_F/n$, and standardize the two
quadratic forms
\[
G:=\frac{\xi^\top M\xi/n-\mu_M}{\sigma_M},\qquad
R:=\frac{\varepsilon^\top M\varepsilon/n-\mu_M}{\sigma_M},
\]
so $\E G=\E R=0$, $\Var G=1$, and $\Var R=1-\rho^2$ (the diagonal of $M$ contributes no
fluctuation to the Rademacher form because $\varepsilon_i^2\equiv1$).

\emph{Step 1 (reduction).} On the event $\{\mathcal{E}(\widehat f)>\widehat U_\alpha\}$,
$\sqrt{\mathcal{E}(\widehat f)}=\norm{g}_n\le\norm{(H-I)f^*}_n+\norm{H(\varepsilon\circ|w|)}_n\le b+T$
with $T:=\norm{H(\varepsilon\circ|w|)}_n$, using $b\ge\norm{(H-I)f^*}_n$, while
$\sqrt{\widehat U_\alpha}=q_{1-\alpha}(a)+b$. By Lemma~\ref{lem:anderson} applied
pathwise in $\varepsilon$ (\S\ref{app:proof-anderson}), the population $(1-\alpha)$-quantile
of $\norm{H(\xi\circ a)}_n$ dominates that of $\norm{H(\xi\circ|w|)}_n=:Q_0$, a function
of $|w|$ alone; the order statistic \eqref{eq:orderstat} estimates the former with error
$O_P(L^{-1/2})$ by the Dvoretzky--Kiefer--Wolfowitz inequality. Hence
$q_{1-\alpha}(a)\ge Q_0$ up to $C_2/\sqrt L$, and
\[
\{\mathcal{E}(\widehat f)>\widehat U_\alpha\}\subseteq\{b+T>b+Q_0\}=\{T>Q_0\},
\]
so that
\[
\Pr(\mathcal{E}(\widehat f)>\widehat U_\alpha\mid|w|)\le\Pr(T>Q_0\mid|w|)+\tfrac{C_2}{\sqrt L}.
\]

\emph{Step 2 (Gaussian threshold).} Since $Q_0$ is the $(1-\alpha)$-quantile of
$\norm{H(\xi\circ|w|)}_n$ and $n\,T^2=\varepsilon^\top M\varepsilon$,
$n\,Q_0^2$ is the $(1-\alpha)$-quantile of $\xi^\top M\xi$, so
$\{T>Q_0\}=\{R>g_{1-\alpha}\}$ with $g_{1-\alpha}$ the $(1-\alpha)$-quantile of $G$. In
the eigenbasis of $M$, $\xi^\top M\xi=\sum_j\ell_j Z_j^2$ with $\ell_j\ge0$ (as $M\succeq0$),
so $G=\sum_j Y_j$ with $Y_j:=\ell_j(Z_j^2-1)/(\sqrt2\norm{M}_F)$ independent, mean zero,
$\sum_j\Var Y_j=1$, and $\sum_j\E|Y_j|^3\le C\,\lambda_{\max}(M)/\norm{M}_F=C\delta$ using
$\sum_j\ell_j^3\le\lambda_{\max}(M)\sum_j\ell_j^2$. Lemma~\ref{lem:be} gives
$\sup_t|\Pr(G\le t)-\Phi(t)|\le C_0\delta$, and quantile inversion (with $\phi$ bounded
below near $z_{1-\alpha}$) yields $g_{1-\alpha}\ge z_{1-\alpha}-C_0\delta/\phi(z_{1-\alpha})$.

\emph{Step 3 (Rademacher tail).} The centered part of $\varepsilon^\top M\varepsilon$ equals
$\varepsilon^\top N\varepsilon$, so $R$ has the law of $\sqrt{1-\rho^2}\,F$ with $F$ the
standardized chaos of Lemma~\ref{lem:chaos}. Therefore, for any threshold $s$,
$\Pr(R>s)\le 1-\Phi\bigl(s/\sqrt{1-\rho^2}\bigr)+C_1\delta_N$. Taking
$s=g_{1-\alpha}>0$ (as $\alpha<\tfrac12$) and using $\sqrt{1-\rho^2}\le1$ so
$g_{1-\alpha}/\sqrt{1-\rho^2}\ge g_{1-\alpha}\ge z_{1-\alpha}-C_0\delta/\phi(z_{1-\alpha})$,
a one-term Taylor bound on $1-\Phi$ gives
\[
\Pr(T>Q_0\mid|w|)=\Pr(R>g_{1-\alpha})
\le 1-\Phi(z_{1-\alpha})+\frac{C_0\delta}{\phi(z_{1-\alpha})}+C_1\delta_N
=\alpha+\frac{C_0\delta}{\phi(z_{1-\alpha})}+C_1\delta_N .
\]
Since $\phi(z_{1-\alpha})\le1$ we have $C_1\delta_N\le C_1\delta_N/\phi(z_{1-\alpha})$, so the
two delocalization terms combine into
$\Pr(T>Q_0\mid|w|)\le\alpha+(\max(C_0,C_1)/\phi(z_{1-\alpha}))(\delta+\delta_N)$. Adding the
Monte Carlo term $C_2/\sqrt L$ of Step 1 and relabelling the two universal constants as
$\max(C_0,C_1)\mapsto C_0$ and $C_2\mapsto C_1$ gives \eqref{eq:validity}.
Finally, the two-sided Weyl inequalities $\lambda_{\max}(N)\le\lambda_{\max}(M)\le\delta\norm{M}_F$
and $|\lambda_{\min}(N)|\le\max_i M_{ii}\le\rho\norm{M}_F$, together with
$\norm{N}_F=\norm{M}_F\sqrt{1-\rho^2}$, give
$\delta_N\le(\delta+\rho)/\sqrt{1-\rho^2}$, so the right-hand side of \eqref{eq:validity}
tends to $\alpha$ whenever $\delta\to0$, $\rho\to0$, and $L\to\infty$. \qed

\medskip\noindent\emph{Conservativeness.} The variance deficit $\Var R=1-\rho^2<1$ means
the true threshold statistic is less dispersed than its Gaussian surrogate, so the
Gaussian quantile is conservative; the numerically observed miscoverage is nonpositive
throughout the delocalized regime.

\subsection{Proof of Theorem~\ref{thm:rate} (rate optimality)}
\label{app:proof-rate}

We first record the risk order (used again in \S\ref{app:proof-cv}).

\begin{lemma}[Risk order]
\label{lem:risk}
Under Assumptions~\ref{as:model}, \ref{as:noise}, \ref{as:decay}, and~\ref{as:scale},
$\mathcal{E}(\widehat f)=O_P\bigl(n^{-2s/(2s+1)}\bigr)$. This is Lemma~\ref{lem:krr-rate}
of the main text.
\end{lemma}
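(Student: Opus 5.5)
The plan is to use the deterministic split $\mathcal{E}(\widehat f)=\norm{(H-I)f^*+Hw}_n^2\le 2\norm{(H-I)f^*}_n^2+2\norm{Hw}_n^2$ and bound the two pieces separately. For the bias term, \appref{app:envelope} already gives $\norm{(H-I)f^*}_n^2\le\tfrac14\lambda B^2$. Under Assumption~\ref{as:decay}, $\lambda\asymp n^{-2s/(2s+1)}$, so this piece is deterministically of the target order.

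For the noise term, I would condition on the magnitudes $|w|$ and write $n\norm{Hw}_n^2=\varepsilon^\top M\varepsilon$ with $M$ as in \eqref{eq:M}. Its conditional mean is
\[
\operatorname{tr}(M)=\sum_i|w_i|^2(H^2)_{ii}.
\]
By the bounded-leverage condition \eqref{eq:leverage}, $\operatorname{tr}(M)\le C_D\,(d_n/n)\sum_i|w_i|^2$. Assumption~\ref{as:scale} gives $\E|w_i|^2\le C\sigma^2$ uniformly in $i$, so $\tfrac1n\sum_i|w_i|^2=O_P(1)$ by Markov's inequality.

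For the fluctuation I would apply Hanson--Wright (Lemma~\ref{lem:hw}) conditionally on $|w|$. The relevant norms satisfy $\norm{M}_F\le\norm{M}_{\mathrm{op}}^{1/2}\operatorname{tr}(M)^{1/2}$ and $\norm{M}_{\mathrm{op}}\le\max_i|w_i|^2\,\norm{H}_{\mathrm{op}}^2\le\max_i|w_i|^2$. Alternatively, one can skip the concentration step and use conditional Markov: $\E[\varepsilon^\top M\varepsilon\mid|w|]=\operatorname{tr}(M)$. This gives $\norm{Hw}_n^2=O_P(d_n/n)$. Since $d_n\asymp n^{1/(2s+1)}$, this is $O_P(n^{-2s/(2s+1)})$, which proves the lemma.

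The main obstacle is keeping the argument free of any independence assumption across the $|w_i|$. Assumption~\ref{as:scale} is only marginal, which is why I would route the argument through first moments, via Markov's inequality on $\sum_i|w_i|^2(H^2)_{ii}$, rather than through a law of large numbers. With that route the bound holds under arbitrary dependence among the magnitudes.

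One check is needed: $\sum_j h_j^2\le d_n$ must be consistent with the leverage normalization in \eqref{eq:leverage}, which reads $\max_i (H^2)_{ii}\le C_D d_n/n$. This is compatible because $\sum_i (H^2)_{ii}=\operatorname{tr}(H^2)\le d_n$.
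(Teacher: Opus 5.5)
Your proof is correct and follows the paper's argument: the same split $\mathcal{E}(\widehat f)\le 2\norm{(H-I)f^*}_n^2+2\norm{Hw}_n^2$, the bias bound $\tfrac14\lambda B^2$, the diagonal conditional covariance $\operatorname{diag}(|w|^2)$ coming from conditional symmetry, and Markov's inequality. The one difference is that the paper never uses the leverage condition \eqref{eq:leverage}: it takes the full expectation, $\E\norm{Hw}_n^2=\tfrac1n\sum_i\E|w_i|^2(H^2)_{ii}\le 4\sigma^2\operatorname{tr}(H^2)/n\le 4\sigma^2 d_n/n$, which is your closing observation $\operatorname{tr}(H^2)\le d_n$ applied directly, so (D), the intermediate step $\norm{w}_n^2=O_P(1)$, and Hanson--Wright are all unnecessary.
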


\begin{proof}
$\mathcal{E}(\widehat f)=\norm{(H-I)f^*+Hw}_n^2\le2\norm{(H-I)f^*}_n^2+2\norm{Hw}_n^2$.
The bias term is at most $B^2\lambda/2$ by the computation in \S\ref{app:envelope}. For the
variance term, $\E[\norm{Hw}_n^2\mid|w|]=\tfrac1n\operatorname{tr}(H^2\Sigma_w)$ with
$\Sigma_w:=\operatorname{Cov}(w\mid|w|)$; under conditional symmetry
$\E[w_iw_j\mid|w|]=|w_i||w_j|\,\E[\varepsilon_i\varepsilon_j\mid|w|]=0$ for $i\ne j$, so
$\Sigma_w$ is diagonal with entries $|w_i|^2$; averaging over the magnitudes, the
sub-Gaussian tail of Assumption~\ref{as:scale} gives $\E w_i^2\le4\sigma^2$, so
$\E\norm{Hw}_n^2\le4\sigma^2\operatorname{tr}(H^2)/n\le4\sigma^2 d_n/n$. Both terms are
$O(n^{-2s/(2s+1)})$ since $\lambda\asymp n^{-2s/(2s+1)}$ and $d_n\asymp n^{1/(2s+1)}$;
Markov's inequality gives the claim. No independence is used; diagonal conditional
covariance suffices.
\end{proof}

\emph{Part (i): $\widehat U_\alpha=O_P(n^{-2s/(2s+1)})$.} With the worst-case envelope,
$\norm{a}_n^2=O_P(1)$: the residual energy satisfies
$\norm{\widetilde w}_n^2\le2\norm{w}_n^2+2\mathcal{E}(\widehat f)=O_P(\sigma^2)$ by
Assumption~\ref{as:scale} and Lemma~\ref{lem:risk}, and the summand
$S_i\le(\norm{\widehat f}_\HH+B)\kappa$ is $O_P(1)$ because
$\norm{\widehat f}_\HH\le\norm{Hf^*}_\HH+\norm{Hw}_\HH\le B+O_P\bigl(\sigma\sqrt{d_n/(n\lambda)}\bigr)$,
the first term by the variational characterization \eqref{eq:krr} (at noiseless input the
minimizer beats $f^*$ itself, so $\norm{Hf^*}_\HH\le\norm{f^*}_\HH\le B$), the second from
$\E\norm{Hw}_\HH^2\le4\sigma^2\sum_j\mu_j/\bigl(n(\mu_j+\lambda)^2\bigr)\le4\sigma^2 d_n/(n\lambda)$,
with $d_n\asymp n\lambda$ under Assumption~\ref{as:decay}. Also
$b^2=B^2\lambda/4=O(n^{-2s/(2s+1)})$. The refit movement has conditional mean
\[
\begin{aligned}
\E_\xi\norm{H(\xi\circ a)}_n^2
&=\tfrac1n\sum_j a_j^2(H^\top H)_{jj}
\le\bigl(\max_j(H^\top H)_{jj}\bigr)\norm{a}_n^2\\
&\le C_D\,\tfrac{d_n}{n}\,O_P(1)
=O_P\bigl(n^{-2s/(2s+1)}\bigr),
\end{aligned}
\]
using condition~\eqref{eq:leverage} and $d_n/n\asymp n^{-2s/(2s+1)}$. The map
$\xi\mapsto\norm{H(\xi\circ a)}_n$ is Lipschitz with constant
$\norm{a}_\infty\norm{H}_{\mathrm{op}}/\sqrt n\le\norm{a}_\infty/\sqrt n$, so by
Borell--TIS concentration its $(1-\alpha)$-quantile exceeds its mean by at most
$O_P(\norm{a}_\infty\sqrt{\log(1/\alpha)}/\sqrt n)=O_P(\sqrt{\log n}/\sqrt n)$, which is
$o(n^{-s/(2s+1)})$ for fixed $s$ since $n^{1/(2s+1)}$ dominates $\log n$. Squaring,
$\widehat U_\alpha=(q_{1-\alpha}(a)+b)^2=O_P(n^{-2s/(2s+1)})$, the minimax rate over
$\{f\in\HH:\norm{f}_\HH\le B\}$ \citep{stone1982optimal,yang1999information}.

\emph{Part (ii): $\widehat U_\alpha/\mathcal{E}(\widehat f)=O_P(1)$ under
Assumption~\ref{as:energy}.} It suffices to lower-bound $\mathcal{E}(\widehat f)$ by the same
rate. Write $\mathcal{E}(\widehat f)=\norm{Hw}_n^2+2\langle Hw,(H-I)f^*\rangle_n+\norm{(H-I)f^*}_n^2$.
Conditional on $|w|$, $\norm{Hw}_n^2=\tfrac1n\varepsilon^\top D H^2 D\varepsilon$ with
$D:=\operatorname{diag}(|w|)$, whose mean is
$\tfrac1n\sum_i|w_i|^2(H^\top H)_{ii}\ge c_0 d_n/n$ by Assumption~\ref{as:energy}. Since
$\norm{DH^2D}_{\mathrm{op}}\le\norm{w}_\infty^2=O_P(\sigma^2\log n)$ by the sub-Gaussian
tail of Assumption~\ref{as:scale} and
$\norm{DH^2D}_F^2\le\norm{DH^2D}_{\mathrm{op}}\operatorname{tr}(DH^2D)\asymp\norm{w}_\infty^2 c_0 d_n$,
Hanson--Wright (Lemma~\ref{lem:hw}) with $t=\tfrac12\operatorname{tr}(DH^2D)$ gives deviation
probability $\to0$ provided $d_n/\log n\to\infty$, which holds under
Assumption~\ref{as:decay}. Hence $\norm{Hw}_n^2\ge\tfrac{c_0}{2}d_n/n$ with high
probability. The cross term is a mean-zero Rademacher sum
$2\langle Hw,(H-I)f^*\rangle_n=\tfrac2n\sum_i\varepsilon_i|w_i|\,\eta_i$ with
$\eta:=H(H-I)f^*$; its conditional standard deviation is at most
$\tfrac2n\norm{w}_\infty\norm{\eta}_2\le\norm{w}_\infty B\sqrt{\lambda}/\sqrt n
=O_P(\sqrt{d_n\log n}/n)=o_P(d_n/n)$ since $d_n\gg\log n$, using
$\norm{\eta}_2^2\le\norm{(H-I)f^*}_2^2=n\norm{(H-I)f^*}_n^2\le nB^2\lambda/4$. Therefore
\[
\mathcal{E}(\widehat f)\ge\tfrac{c_0}{2}\tfrac{d_n}{n}-o_P\bigl(\tfrac{d_n}{n}\bigr)+0
\ge\tfrac{c_0}{4}\tfrac{d_n}{n}\quad\text{w.h.p.,}
\]
and combining with Part (i), $\widehat U_\alpha/\mathcal{E}(\widehat f)=O_P(1)$. \qed

\medskip\noindent\emph{Exponential-decay kernels.} If $\mu_j$ decays faster than any
polynomial (e.g.\ the Gaussian RBF), $d_n\asymp\log n$ and the Borell--TIS
$\sqrt{\log n/n}$ term is no longer dominated, so Part (i) carries an extra factor:
$\widehat U_\alpha=O_P(\log n/n)$ and $\widehat U_\alpha/\mathcal{E}(\widehat f)=O_P(\log n)$.
This is the case reported empirically in \S\ref{sec:simulation}.

\subsection{Proof of Proposition~\ref{prop:cv} (cross-validation margin)}
\label{app:proof-cv}

Consider the hold-out bound \eqref{eq:cvbound} with hold-out size $m\asymp n$ and losses
$\ell_i:=(y_i-\widehat f^{\mathrm{tr}}(x_i))^2$, $\widehat{\mathrm{se}}^2:=s_\ell^2/m$ with
$s_\ell^2$ the sample variance of $\{\ell_i\}$. Decompose $\ell_i=w_i^2+b_i$ with
$b_i:=2w_i\Delta_i+\Delta_i^2$ and $\Delta_i:=f^*(x_i)-\widehat f^{\mathrm{tr}}(x_i)$. By the
reverse triangle inequality for the empirical standard-deviation seminorm,
$s_\ell\ge s_{w^2}-s_b$. Assumption~\ref{as:spread} gives
$s_{w^2}^2=\operatorname{empvar}_i\{w_i^2\}\ge\kappa_0$. For the perturbation,
\[
s_b^2\le\operatorname{empmean}(b_i^2)\le 8\bigl(\max_i w_i^2\bigr)\norm{\Delta}_n^2
+2\norm{\Delta}_n^2\norm{\Delta}_\infty^2 .
\]
Under Assumption~\ref{as:scale}, $\max_i w_i^2=O_P(\sigma^2\log m)$, and
$\norm{\Delta}_\infty\le\kappa(\norm{\widehat f^{\mathrm{tr}}}_\HH+B)=O_P(1)$ by
Assumption~\ref{as:model} and the fitted-norm bound in the proof of
Theorem~\ref{thm:rate}; the fit-rate condition
$\norm{\Delta}_n^2\log m\to_P0$, delivered automatically by Lemma~\ref{lem:risk} since
$\norm{\Delta}_n^2=O_P(n^{-2s/(2s+1)})$, then gives $s_b^2\to_P0$. Hence for large $m$,
$s_\ell\ge\tfrac12\sqrt{\kappa_0}$ with high probability, so
$\widehat{\mathrm{se}}^2\ge\kappa_0/(4m)$ and
\[
z_{1-\alpha}\,\widehat{\mathrm{se}}\ge\frac{z_{1-\alpha}}{2}\frac{\sqrt{\kappa_0}}{\sqrt m}
=\Omega_P\bigl(n^{-1/2}\bigr),
\]
independent of the fit. Dividing by $\mathcal{E}(\widehat f)=O_P(n^{-2s/(2s+1)})$
(Lemma~\ref{lem:risk}) gives the exponent $2s/(2s+1)-1/2=(2s-1)/(2(2s+1))>0$ for $s>1/2$,
which is \eqref{eq:cvdiverge}. \qed

\medskip\noindent\emph{Degenerate case and rate-only claim.} If $w_i^2$ is constant across
observations, $\operatorname{empvar}_i\{w_i^2\}=0$, Assumption~\ref{as:spread} fails, and
the margin floor vanishes, the single case in which cross-validation keeps pace. The
constant in $\Omega_P$ depends on the finer structure of $\{w_i^2\}$, so the statement is a
rate-level, not constant-level, claim. Under $K$-fold rather than hold-out splitting the
fold losses are dependent and no unbiased standard-error estimator exists
\citep{bengioGrandvalet2004no}; the hold-out case is proved and the $K$-fold case matches
empirically.

\subsection{Proof of Proposition~\ref{prop:floor} (holdout floor)}
\label{app:proof-floor}

Throughout, condition on the training responses, so the profile $\delta=(\delta_i)_{i\in V}$ is
deterministic and the held-out losses are independent with
$\ell_i\sim\sigma^2\chi_1'^2(\lambda_i)$, $\lambda_i:=\delta_i^2/\sigma^2$. Write
$P_{\delta,\sigma}$ for the joint law of $(\ell_i)_{i\in V}$.

\medskip\noindent\emph{Step 1: ridge construction.} Fix a problem $(\delta^0,\sigma)$ in the
interior class of Proposition~\ref{prop:floor} and set $\varepsilon:=\kappa m^{-1/2}$. Define
the paired problem
\[
\sigma_1^2:=\sigma^2(1-\varepsilon),
\qquad
\delta_i^1:=\bigl(\,(\delta_i^0)^2+\sigma^2\varepsilon\,\bigr)^{1/2},
\]
so that $(\delta_i^1)^2+\sigma_1^2=(\delta_i^0)^2+\sigma^2$ for every $i$: the per-point
first moments of the losses are matched exactly. The target moves up by
$\norm{\delta^1}_m^2-\norm{\delta^0}_m^2=\sigma^2\varepsilon\ge\kappa\sigma_0^2m^{-1/2}$.
For $n$ large the paired problem lies in the validity class \eqref{eq:floor-class}:
$\norm{\delta^1}_m^2\le\tfrac12\sigma_0^2+\tfrac94\sigma_0^2\varepsilon\le\sigma_0^2$,
$\norm{\delta^1}_\infty\le(\tfrac14C_\infty^2\sigma_0^2+\tfrac94\sigma_0^2\varepsilon)^{1/2}
\le C_\infty\sigma_0$, and
$\sigma_1^2\ge\sigma_0^2(1-\varepsilon)\ge\tfrac14\sigma_0^2$.

\medskip\noindent\emph{Step 2: the loss laws are nearly indistinguishable along the ridge.}
The map $x\mapsto x^2$ erases signs, so the loss law lifts to a symmetric mixture: if
$S_i$ are independent Rademacher signs, then $(S_iZ_i)^2$ with
$Z_i\sim\mathcal N(\delta_i,\sigma^2)$ has law $\sigma^2\chi_1'^2(\lambda_i)$, and the same
holds with the mixture $M(\delta_i,\sigma):=\tfrac12\mathcal N(\delta_i,\sigma^2)
+\tfrac12\mathcal N(-\delta_i,\sigma^2)$ in place of $\mathcal N(\delta_i,\sigma^2)$. Since a
measurable map can only decrease total variation,
\[
\mathrm{TV}\bigl(P_{\delta^0,\sigma},\,P_{\delta^1,\sigma_1}\bigr)
\;\le\;
\mathrm{TV}\Bigl(\textstyle\bigotimes_{i\in V}M(\delta_i^0,\sigma),\;
\bigotimes_{i\in V}M(\delta_i^1,\sigma_1)\Bigr).
\]
The symmetric mixtures have matched second moments along the ridge,
$\E X^2=\delta_i^2+\sigma^2$ invariant, and all odd cumulants vanish; the leading mismatch
is in the fourth cumulant, $k_4(M(\delta,\sigma))=-2\delta^4$, whence
$\lvert\Delta k_4\rvert=2\lvert(\delta_i^1)^4-(\delta_i^0)^4\rvert
=2\sigma^2\varepsilon\,(2(\delta_i^0)^2+\sigma^2\varepsilon)
\le C\sigma^4\varepsilon(\lambda_i^0+\varepsilon)$,
and the higher even cumulants carry the same factor $\varepsilon(\lambda_i^0+\varepsilon)$.
Lemma~\ref{lem:mixture-chisq} below then gives the per-coordinate chi-square divergence
\[
\chi^2\bigl(M(\delta_i^1,\sigma_1)\,\big\|\,M(\delta_i^0,\sigma)\bigr)
\;\le\;C_2\,\varepsilon^2\bigl(\lambda_i^0+\varepsilon\bigr)^2 ,
\]
for a constant $C_2=C_2(C_\infty)$, valid while $\lambda_i^0\le4C_\infty^2$ and
$\varepsilon\le\tfrac12$; the lemma plays the role of the mixture chi-square
computations in the lower-bound literature for quadratic functionals
\citep{caiLow2006adaptive}. Tensorizing and using $\mathrm{KL}\le\log(1+\chi^2)\le\chi^2$,
\begin{align*}
\mathrm{KL}\bigl(P_{\delta^1,\sigma_1}\,\big\|\,P_{\delta^0,\sigma}\bigr)
&\;\le\;C_2\,\varepsilon^2\Bigl[\sum_{i\in V}(\lambda_i^0)^2
+2\varepsilon\sum_{i\in V}\lambda_i^0+m\varepsilon^2\Bigr]\\
&\;\le\;C_3\Bigl[\kappa^2\,\frac{\norm{\delta^0}_m^2}{\sigma_0^2}\,
+\kappa^3m^{-1/2}+\kappa^4m^{-1}\Bigr],
\end{align*}
where $\sum_i(\lambda_i^0)^2\le4C_\infty^2\sum_i\lambda_i^0$ and
$\sum_i\lambda_i^0\le 4m\norm{\delta^0}_m^2/\sigma_0^2$ were used. The first bracketed
term is the binding one: it vanishes only along shrinking profiles, which is why the
conclusion of Proposition~\ref{prop:floor} is restricted to $\norm{\delta^0}_m^2\le\rho_n$
with $\rho_n\to0$. Under that restriction the right-hand side is
$C_3[\kappa^2\rho_n/\sigma_0^2+\kappa^3m^{-1/2}+\kappa^4m^{-1}]\to0$ for fixed $\kappa$,
and by Pinsker $\mathrm{TV}(P_{\delta^0,\sigma},P_{\delta^1,\sigma_1})\le\tau_n(\kappa)\to0$.
(The restriction is the relevant regime: under Assumption~\ref{as:decay} the training-fold
profile satisfies $\norm{\delta^0}_m^2=O_P(n^{-2s/(2s+1)})$ by Lemma~\ref{lem:risk}, so it
holds with probability tending to one. For a profile fixed at the noise scale the same
argument still yields $\mathrm{TV}\le C\kappa<1-\alpha$ for $\kappa$ small, a weaker,
non-vanishing floor.)

\medskip\noindent\emph{Step 3: transfer.} Validity \eqref{eq:floor-class} at the paired
problem gives
$P_{\delta^1,\sigma_1}(\widehat U\ge\norm{\delta^1}_m^2)\ge1-\alpha$, and
$\norm{\delta^1}_m^2\ge\kappa\sigma_0^2m^{-1/2}$, so
\[
P_{\delta^0,\sigma}\bigl(\widehat U\ge\kappa\sigma_0^2m^{-1/2}\bigr)
\;\ge\;
P_{\delta^1,\sigma_1}\bigl(\widehat U\ge\kappa\sigma_0^2m^{-1/2}\bigr)-\tau_n(\kappa)
\;\ge\;1-\alpha-\tau_n(\kappa),
\]
which is \eqref{eq:floor}. Dividing by
$\mathcal{E}(\widehat f)=O_P(n^{-2s/(2s+1)})$ and using $m\asymp n$ gives the rate
\eqref{eq:cvdiverge}. \qed

\medskip\noindent\emph{Remarks.} (i) The proposition is stated conditionally on the
training fold; repeated splits are handled by conditioning on all split assignments,
which are independent of the data. (ii) The bound $\widehat U$ may depend arbitrarily on
the training responses and the split structure, since these are fixed by the
conditioning; only the access to the held-out noise through the losses is restricted.
(iii) The per-coordinate chi-square bound in Step 2 is the technical heart;
Lemma~\ref{lem:mixture-chisq} proves it by interpolation along the ridge.

\begin{lemma}[Mixture chi-square along the ridge]
\label{lem:mixture-chisq}
Let $\Lambda\ge1$, $0\le\lambda\le\Lambda$, and $\varepsilon\in(0,\tfrac12]$. For
$t\in[0,\varepsilon]$ let $p_t$ be the density of the symmetric Gaussian mixture
$\tfrac12\mathcal N(b_t,s_t^2)+\tfrac12\mathcal N(-b_t,s_t^2)$ with
$b_t:=\sqrt{\lambda+t}$ and $s_t^2:=1-t$. There is a constant $C(\Lambda)$ with
\[
\chi^2\bigl(p_\varepsilon\,\big\|\,p_0\bigr)\;\le\;C(\Lambda)\,\varepsilon^2(\lambda+\varepsilon)^2 .
\]
Since chi-square divergence is invariant under the common rescaling $x\mapsto x/\sigma$,
which maps the pair $\bigl(M(\delta_i^1,\sigma_1),\,M(\delta_i^0,\sigma)\bigr)$ of Step 2
onto $(p_\varepsilon,p_0)$ with $\lambda=\lambda_i^0$, the display of Step 2 holds with
$C_2:=C(4C_\infty^2)$.
\end{lemma}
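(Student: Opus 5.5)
The plan is to interpolate along the ridge and control the $t$-derivative of $p_t$ pointwise, with the factor $(\lambda+t)$ coming from a second-order Taylor cancellation. Write $\chi^2(p_\varepsilon\|p_0)=\int(p_\varepsilon-p_0)^2/p_0\,dx$ and $p_\varepsilon-p_0=\int_0^\varepsilon\partial_tp_t\,dt$. Cauchy--Schwarz in $t$ then gives
\[
\chi^2(p_\varepsilon\|p_0)\;\le\;\varepsilon\int_0^\varepsilon\!\!\int\frac{(\partial_tp_t(x))^2}{p_0(x)}\,dx\,dt ,
\]
so it suffices to prove the uniform bound $\int(\partial_tp_t)^2/p_0\le C(\Lambda)(\lambda+t)^2$ for $t\in[0,\tfrac12]$. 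Integrating in $t$ gives $\varepsilon\cdot\varepsilon\,C(\Lambda)(\lambda+\varepsilon)^2$, which is the claim.

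\emph{Step 1: an exact formula for $\partial_tp_t$.} Write $p_t(x)=\tfrac12[\phi_{s_t}(x-b_t)+\phi_{s_t}(x+b_t)]$, where $\phi_s$ is the $\mathcal N(0,s^2)$ density. Two facts are used:
\begin{itemize}[leftmargin=*, itemsep=2pt, topsep=2pt]
\item the heat equation, $\partial_{s^2}\phi_s=\tfrac12\phi_s''$;
\item the chain rule along the ridge, $db_t/dt=1/(2b_t)$ and $ds_t^2/dt=-1$.
\end{itemize}
Together they give
\[
\partial_tp_t(x)=\tfrac12\Bigl[\frac{\phi_s'(x+b)-\phi_s'(x-b)}{2b}-\frac{\phi_s''(x+b)+\phi_s''(x-b)}{2}\Bigr],\qquad b=b_t,\ s=s_t .
\]
Both bracketed terms are approximations of $\phi_s''(x)$, so their zeroth-order parts cancel. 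This cancellation is the matched-second-moment property of the ridge.

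\emph{Step 2: the $(\lambda+t)$ factor.} Taylor-expand both terms about $x$ with integral remainder. The centered difference quotient equals $\phi_s''(x)+\tfrac{b^2}{6}\phi_s^{(4)}(\zeta_1)$, and the average equals $\phi_s''(x)+\tfrac{b^2}{2}\phi_s^{(4)}(\zeta_2)$, with $\zeta_1,\zeta_2\in[x-b,x+b]$. The odd-order terms cancel by symmetry. Hence
\[
|\partial_tp_t(x)|\le C\,b_t^2\sup_{|u|\le b_t}|\phi_{s_t}^{(4)}(x+u)|\le C(\Lambda)\,(\lambda+t)\,(1+x^4)\,e^{-(|x|-b_t)_+^2/(2s_t^2)} .
\]
This uses $s_t^2\in[\tfrac12,1]$ and $b_t^2\le\Lambda+\tfrac12$. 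The point of this route is that the bound is uniform as $\lambda\to0$: no division by $b$ survives.

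\emph{Step 3: integrability against $1/p_0$.} This is the step I expect to be the main obstacle, because at $t=0$ the variances coincide ($s_0^2=1$), so there is no Gaussian slack. Use the lower bound $p_0(x)\ge\tfrac12\phi_1(|x|+\sqrt\lambda)\ge c\,e^{-(|x|+\sqrt\Lambda)^2/2}$. Then the integrand is at most $C(\Lambda)(\lambda+t)^2(1+x^8)\exp\{-(|x|-b_t)_+^2/s_t^2+(|x|+\sqrt\Lambda)^2/2\}$. Since $s_t^2\le1$, the leading quadratic in the exponent is at most $-x^2+x^2/2=-x^2/2$. The linear terms are bounded by $C(\Lambda)|x|$, and the polynomial factor is harmless. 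The integral is therefore bounded by $C(\Lambda)(\lambda+t)^2$, uniformly in $t\le\tfrac12$ and $\lambda\le\Lambda$.

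Combining Steps 1--3 with the display above proves the lemma. The rescaling remark in the statement then transfers the bound to Step~2 of the proof of Proposition~\ref{prop:floor}.
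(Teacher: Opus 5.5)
Your proof is correct, and it takes a genuinely different route from the paper's.

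\textbf{The paper's route.} It applies Minkowski's integral inequality in $L^2(1/p_0)$ and factors $\int(\partial_tp_t)^2/p_0\le\sup_x(p_t/p_0)\,\E_{p_t}S_t^2$, where $S_t=\partial_t\log p_t$ is the score. It then proves the likelihood-ratio bound $\sup_xp_t/p_0\le\sqrt2\,e^{2\Lambda+2}$. The factor $(\lambda+t)^2$ comes from $\E_{p_t}S_t^2$: the paper writes $S_t$ through $\tanh(hb_tx)$ and splits into two cases. For $b_t^2\ge1$ it uses a crude bound. For $b_t^2\le1$ it uses a third-order $\tanh$ expansion that cancels the $x^2$ term, followed by moment bounds up to order twelve.

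\textbf{Your route.} You use Cauchy--Schwarz in $t$ instead of Minkowski; this is marginally weaker but gives the same order. You then bound $\partial_tp_t$ pointwise. The heat equation shows it is half the difference between a centred difference quotient of $\phi_s'$ and a symmetric average of $\phi_s''$. Both equal $\phi_s''(x)+O(b_t^2\phi_s^{(4)})$, so the matched-second-moment cancellation is built into the formula. It holds uniformly as $b_t\to0$ and needs no case split.

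Your Step~3 replaces the likelihood-ratio bound with a crude lower bound on $p_0$ together with the fact that squaring doubles the Gaussian exponent. This is exactly why the missing ``Gaussian slack'' at $t=0$ that you worried about is harmless. It also means your argument needs only $s_t^2<2$. The paper's $\sup_xp_t/p_0<\infty$, by contrast, requires the ridge to shrink the variance, $s_t^2\le s_0^2$.

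\textbf{What each buys.} Yours is more elementary and more robust to the direction of the variance change. The paper's factorization gives an explicit ratio constant and stays in the score/likelihood-ratio format standard for mixture lower bounds.

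One detail deserves a sentence in a write-up. When $\lambda=0$, $db_t/dt$ blows up at $t=0$. However, $p_t$ depends on $b_t$ only through $b_t^2=\lambda+t$, so $t\mapsto p_t(x)$ is smooth and $p_\varepsilon-p_0=\int_0^\varepsilon\partial_tp_t\,dt$ is justified.
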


\begin{proof}
Write $h=h_t:=1/s_t^2=1/(1-t)\in[1,2]$, so that
\[
p_t(x)=(h/2\pi)^{1/2}\exp\{-h(x^2+b_t^2)/2\}\cosh(hb_tx),
\]
jointly smooth in $(t,x)$ with
Gaussian decay. Differentiating with $\tfrac{d}{dt}\log h=h$, $\tfrac{d}{dt}b_t^2=1$,
and $\tfrac{d}{dt}(hb_t)=h^2b_t+h/(2b_t)$ gives $\partial_tp_t=S_tp_t$ with score
\[
S_t(x)\;=\;-\frac{h^2}{2}\bigl(x^2+b_t^2\bigr)
+\Bigl(h^2b_t+\frac{h}{2b_t}\Bigr)\,x\tanh(hb_tx) .
\]

\emph{Step A (interpolation).} Since $p_\varepsilon-p_0=\int_0^\varepsilon S_tp_t\,dt$,
Minkowski's integral inequality in $L^2(1/p_0)$ gives
\[
\chi^2(p_\varepsilon\|p_0)^{1/2}
=\Bigl(\int\frac{(p_\varepsilon-p_0)^2}{p_0}\Bigr)^{1/2}
\le\int_0^\varepsilon\Bigl(\int S_t^2\,\frac{p_t^2}{p_0}\Bigr)^{1/2}dt
\le\int_0^\varepsilon\Bigl(\sup_x\frac{p_t}{p_0}\Bigr)^{1/2}
\bigl(\E_{p_t}S_t^2\bigr)^{1/2}dt .
\]

\emph{Step B (ratio bound): $\sup_xp_t/p_0\le\sqrt2\,\exp(2\Lambda+2)$.} From the closed
form,
\[
\log\frac{p_t(x)}{p_0(x)}
=\tfrac12\log h-\frac{(h-1)x^2+(hb_t^2-\lambda)}{2}
+\log\cosh(hb_tx)-\log\cosh(\sqrt\lambda\,x) .
\]
Since $hb_t\ge\sqrt\lambda$ and $\lvert\tfrac{d}{du}\log\cosh u\rvert\le1$, the last
difference is at most $(hb_t-\sqrt\lambda)|x|$, and $hb_t^2-\lambda\ge0$, so maximizing
the quadratic in $|x|$,
\[
\log\frac{p_t}{p_0}\le\tfrac12\log2-\frac{(h-1)x^2}{2}+(hb_t-\sqrt\lambda)|x|
\le\tfrac12\log2+\frac{(hb_t-\sqrt\lambda)^2}{2(h-1)} .
\]
Now $h-1=th\ge t$, and
$hb_t-\sqrt\lambda=(h-1)b_t+(b_t-\sqrt\lambda)
\le2t\sqrt{\lambda+t}+t/\sqrt{\lambda+t}\le2t\sqrt{\lambda+t}+\sqrt t$,
so $(hb_t-\sqrt\lambda)^2\le8t^2(\lambda+t)+2t$ and
$(hb_t-\sqrt\lambda)^2/(2(h-1))\le4t(\lambda+t)+1\le2\Lambda+2$ for $t\le\tfrac12$,
$\lambda\le\Lambda$.

\emph{Step C (score moment): $\E_{p_t}S_t^2\le C_0(\lambda+t)^2$ for an absolute
constant $C_0$.} If $b_t^2=\lambda+t\ge1$, then crudely
$|S_t|\le2(x^2+b_t^2)+(4b_t+1)|x|$ using $h\le2$, $\tanh\le1$, and
$1/(2b_t)\le\tfrac12$; under $p_t$, $\E X^2=b_t^2+s_t^2\le2b_t^2$ and
$\E X^4\le10b_t^4$, so $\E S_t^2\le Cb_t^4=C(\lambda+t)^2$. If $b_t^2\le1$, the
third-order expansion $\tanh u=u-u^3/3+r(u)$ with $|r(u)|\le\tfrac2{15}|u|^5$,
substituted at $u=hb_tx$, cancels the $(h^2/2)x^2$ term of $S_t$ exactly and leaves
\[
S_t=h^2b_t^2\Bigl(hx^2-\tfrac12\Bigr)
-\Bigl(\frac{h^5b_t^4}{3}+\frac{h^4b_t^2}{6}\Bigr)x^4+\rho(x),
\qquad|\rho(x)|\le Cb_t^4x^6 ,
\]
the bound on $\rho$ using $b_t\le1$ and $h\le2$. Under $p_t$ with $b_t\le1$ and
$s_t^2\in[\tfrac12,1]$, all moments of $X$ up to order twelve are bounded by absolute
constants, so
\[
\E_{p_t}S_t^2
\le3\Bigl[h^4b_t^4\,\E\bigl(hX^2-\tfrac12\bigr)^2
+Cb_t^4\,\E X^8+C b_t^8\,\E X^{12}\Bigr]
\le C_0\,b_t^4=C_0(\lambda+t)^2 .
\]

\emph{Step D (combining).} By Steps A--C and
$\int_0^\varepsilon(\lambda+t)\,dt=\lambda\varepsilon+\varepsilon^2/2\le\varepsilon(\lambda+\varepsilon)$,
\[
\chi^2(p_\varepsilon\|p_0)
\le\sqrt2\,\exp(2\Lambda+2)\,C_0\Bigl[\int_0^\varepsilon(\lambda+t)\,dt\Bigr]^2
\le\sqrt2\,\exp(2\Lambda+2)\,C_0\,\varepsilon^2(\lambda+\varepsilon)^2 . \qedhere
\]
\end{proof}

\subsection{Proof of Corollary~\ref{cor:iid} (i.i.d.\ noise)}
\label{app:proof-iid}

Let $w_1,\dots,w_n$ be independent, symmetric, sub-Gaussian, independent of the design,
with $\Var(w_i)\ge\tau_0^2$ and $\Var(w_i^2)\ge\kappa_0$. Symmetry gives conditional
symmetry of the signs. For Assumption~\ref{as:energy}: in the homoscedastic case
$\sigma_i\equiv\sigma$, $\E_\varepsilon\norm{Hw}_n^2=\sigma^2\operatorname{tr}(H^2)/n
=\sigma^2 c(s)d_n/n$ exactly with $c(s):=(2s-1)/(2s)$; in the heteroscedastic case with
$\sigma_i\ge\tau_0$,
$\tfrac1n\sum_i\sigma_i^2(H^\top H)_{ii}\ge\tau_0^2\operatorname{tr}(H^2)/n=\tau_0^2 c(s)d_n/n$
regardless of any coupling between $\sigma_i$ and the leverage, so
Assumption~\ref{as:energy} holds with $c_0=\tau_0^2 c(s)$. For
Assumption~\ref{as:spread}, the law of large numbers gives
$\operatorname{empvar}_i\{w_i^2\}\to\Var(w_1^2)\ge\kappa_0$ almost surely. Both assumptions
holding with probability tending to one, Theorem~\ref{thm:rate}(ii) and
Proposition~\ref{prop:cv} apply, giving the two unconditional conclusions. Under two-point
noise $w_i^2\equiv\sigma^2$ the variance $\Var(w_i^2)=0$, Assumption~\ref{as:spread} fails,
and cross-validation is rate-efficient, the single disclosed exception. \qed

\section{Additional experimental details}
\label{app:sim-details}

This section records the data-generating process, the calibration settings, and the
three-regime boundary summarized in \S\ref{sec:simulation}.

\subsection{Data-generating process}

The design is fixed, $x_i=(i-1)/(n-1)$ on $[0,1]$, with regression function
$f^*(x)=\sin(2\pi x)+\tfrac12\sin(6\pi x)$ and radial basis kernel
$k(x,x')=\exp(-\gamma(x-x')^2)$, $\gamma=50$, ridge penalty $\lambda=10^{-3}$. Four noise
laws are used, all conditionally symmetric: heteroscedastic Gaussian with
$\sigma(x)=0.1+0.9x$; Laplace of unit scale; Student-$t_4$; and standard Cauchy, whose
variance is infinite. Each cell is repeated over $R$ replicates, $R=160$ for $n\le2000$,
$R=120$ at $n=4000$, and $R=100$ at $n=8000$; the true $95\%$ prediction-error quantile
$q_{0.95}$ is the empirical quantile of $\{\mathcal{E}^{(r)}\}_{r=1}^R$.

\subsection{Calibration}

Each bound is calibrated with $L=199$ Gaussian draws and the order statistic
\eqref{eq:orderstat} at $k^*=\lceil(L+1)\cdot0.95\rceil=190$. The data-driven envelope is
$a_i=|\widetilde w_i|/(1-h_{ii})+|\widehat b_i|$, the leverage correction of the wild
bootstrap with $h_{ii}$ the $i$th diagonal of $H$, and $\widehat b$ the bias vector of an
undersmoothed pilot fit at penalty $\lambda/20$. Cross-validation is the five-fold
hold-out bound \eqref{eq:cvbound} with the $t_4$ critical value; SURE and the Rademacher
wild-refit statistic \citep{wainwright2025wild} are computed on the same replicate stream.
The strongest cross-validation baseline of \S\ref{sec:sim-main} is the nested
cross-validation of \citet{batesHastieTibshirani2024cv} on the small-sample cells, drawn
on the same stream; the portability run of \S\ref{sec:sim-main} replaces the linear
smoother by the reproducing-kernel-ball-constrained least-squares fit, re-solved at each
refit draw and cross-validation fold. The rigid-fit misspecification of
\S\ref{app:sim-scope} uses the same triangle-wave target with penalty $10^{-2}$.

\subsection{Metrics}

We report the median-based summaries \eqref{eq:metrics}. Median and interquartile range
replace the mean and coefficient of variation because the cross-validation bound has no
finite variance under the heavy-tailed laws, where a single replicate can move a mean-based
summary by tens of percent; the median and interquartile range are stable under the same
draws. The reported accuracy is $\operatorname{med}(\widehat U)/q_{0.95}$ and the reported
coverage is the fraction of replicates with $\widehat U^{(r)}\ge\mathcal{E}^{(r)}$.

Table~\ref{tab:fullgrid} reports the full grid underlying Figure~\ref{fig:sim}, in the
six-method scheme of the main text's Table~2. The Gaussian refit holds accuracy near
$1.8\times$ with coverage at or above the stated level across every sample size and noise
law. The cross-validation margin grows with $n$ in all three forms; with the noise level
known its coverage falls below $0.95$, most sharply under Student-$t_4$, while the
estimated level and the quantile repair cover fully at the same or greater width. Under
standard Cauchy noise ($n=2000$), whose $95\%$ quantile is dominated by extreme
realizations, the ratio to each dataset's own realized error is $4.6\times$ for the
Gaussian refit, over $500\times$ for cross-validation with the noise level known,
$286\times$ with it estimated, and $547\times$ for the quantile repair, all at full
coverage.

\begin{table}[t]
\centering
\caption{Full simulation grid: accuracy (median bound divided by the true $95\%$
prediction-error quantile) with empirical coverage in parentheses, radial basis kernel,
kernel ridge. Cross-validation appears in the three forms of the main text's Table~2;
the tightest method with coverage at least $0.95$ is in bold.}
\label{tab:fullgrid}
\footnotesize
\setlength{\tabcolsep}{3pt}
\begin{tabular}{@{}llcccccc@{}}
\toprule
noise & $n$ & Gaussian refit & CV ($\bar\sigma^2$) & CV ($\widehat\sigma^2$) & quantile CV & SURE & Rademacher \\
\midrule
Gaussian & $500$ & $\bm{1.73\ (1.00)}$ & $4.33\ (0.92)$ & $4.56\ (1.00)$ & $7.60\ (1.00)$ & $0.37\ (0.49)$ & $0.23\ (0.36)$ \\
Gaussian & $1000$ & $\bm{1.73\ (0.99)}$ & $5.58\ (0.94)$ & $5.55\ (1.00)$ & $9.76\ (1.00)$ & $0.46\ (0.49)$ & $0.31\ (0.44)$ \\
Gaussian & $2000$ & $\bm{1.83\ (1.00)}$ & $7.61\ (0.93)$ & $7.07\ (1.00)$ & $13.5\ (1.00)$ & $0.95\ (0.58)$ & $0.31\ (0.36)$ \\
Gaussian & $4000$ & $\bm{1.91\ (0.99)}$ & $10.0\ (0.97)$ & $10.1\ (1.00)$ & $17.2\ (1.00)$ & $0.55\ (0.50)$ & $0.43\ (0.48)$ \\
Gaussian & $8000$ & $\bm{1.83\ (1.00)}$ & $10.5\ (0.96)$ & $10.1\ (1.00)$ & $19.1\ (1.00)$ & $-0.14\ (0.44)$ & $0.42\ (0.39)$ \\
\addlinespace
Laplace & $500$ & $\bm{1.88\ (1.00)}$ & $6.08\ (0.91)$ & $6.25\ (1.00)$ & $11.1\ (1.00)$ & $0.36\ (0.45)$ & $0.33\ (0.33)$ \\
Laplace & $1000$ & $\bm{1.80\ (1.00)}$ & $8.16\ (0.94)$ & $8.09\ (1.00)$ & $14.2\ (1.00)$ & $0.45\ (0.49)$ & $0.31\ (0.34)$ \\
Laplace & $2000$ & $\bm{1.89\ (1.00)}$ & $11.3\ (0.93)$ & $11.1\ (1.00)$ & $20.2\ (1.00)$ & $0.30\ (0.46)$ & $0.30\ (0.31)$ \\
Laplace & $4000$ & $\bm{1.74\ (0.99)}$ & $15.4\ (0.92)$ & $13.8\ (1.00)$ & $23.5\ (1.00)$ & $1.23\ (0.53)$ & $0.28\ (0.28)$ \\
Laplace & $8000$ & $\bm{1.92\ (1.00)}$ & $18.2\ (0.98)$ & $18.9\ (1.00)$ & $31.1\ (1.00)$ & $0.36\ (0.49)$ & $0.36\ (0.32)$ \\
\addlinespace
Student-$t_4$ & $500$ & $\bm{1.88\ (1.00)}$ & $6.67\ (0.88)$ & $7.61\ (1.00)$ & $13.3\ (1.00)$ & $0.11\ (0.43)$ & $0.27\ (0.32)$ \\
Student-$t_4$ & $1000$ & $\bm{1.70\ (1.00)}$ & $9.16\ (0.89)$ & $10.5\ (1.00)$ & $17.6\ (1.00)$ & $-0.85\ (0.37)$ & $0.32\ (0.36)$ \\
Student-$t_4$ & $2000$ & $\bm{1.76\ (1.00)}$ & $13.6\ (0.90)$ & $14.8\ (1.00)$ & $26.6\ (1.00)$ & $-0.40\ (0.43)$ & $0.28\ (0.30)$ \\
Student-$t_4$ & $4000$ & $\bm{1.79\ (1.00)}$ & $19.3\ (0.88)$ & $20.4\ (1.00)$ & $38.0\ (1.00)$ & $-2.35\ (0.44)$ & $0.23\ (0.28)$ \\
Student-$t_4$ & $8000$ & $\bm{1.84\ (1.00)}$ & $21.3\ (0.85)$ & $28.4\ (1.00)$ & $50.7\ (1.00)$ & $-3.93\ (0.39)$ & $0.27\ (0.29)$ \\
\bottomrule
\end{tabular}
\end{table}

\subsection{The three regimes}
\label{app:sim-scope}

The scope of \S\ref{sec:theory} is delimited by the spectral decay and by whether $f^*$
lies in the reproducing kernel Hilbert space. Table~\ref{tab:regimefull} gives the
heteroscedastic-Gaussian grid across sample sizes; the pattern is the same under the other
noise laws.

\paragraph{Fast decay, well specified.} For the radial basis design with $f^*\in\HH$ the
Gaussian refit holds full coverage at $1.7$--$1.9\times$ the true quantile while the
cross-validation ratio grows with $n$; this is the regime of Table~\ref{tab:sim} and
Figure~\ref{fig:sim}.

\paragraph{Slow decay.} As the eigenvalues decay more slowly the bound stays valid but
loses its margin. For the Mat\'ern-$3/2$ kernel ($s=2$) it holds $2.7$--$3.3\times$ at full
coverage, still inside cross-validation's range; for the Mat\'ern-$1/2$ kernel ($s=1$) it is
valid but no longer tighter than cross-validation at moderate $n$. Proposition~\ref{prop:cv}
accounts for the transition: the cross-validation ratio diverges at exponent
$(2s-1)/(2(2s+1))$, which is $1/6$ at $s=1$, so slowly that it overtakes the bound only at
the largest $n$.

\paragraph{Misspecified.} When $f^*\notin\HH$ the outcome depends on how rigidly the target
is fit. The data-driven envelope is built from the residuals, which carry the pointwise
bias, so a flexible fit (penalty $10^{-3}$) keeps the bound covered, eroding only to $0.83$
at $n=8000$. An over-smoothed fit (penalty $10^{-2}$) cannot track the target: the error
becomes bias-dominated, the envelope no longer captures it, and coverage falls to $0.10$.
In this bias-dominated regime cross-validation, which estimates the realized loss directly,
is the appropriate tool.

\begin{table}[t]
\centering
\caption{Regime grid, heteroscedastic Gaussian noise: accuracy with coverage in
parentheses, the full sample-size sweep of the regime study. Cross-validation appears
in the three forms of the main text's Table~2; per column the tightest method with
coverage at least $0.95$ is in bold, and the misspecified rows have no comparator.
As eigendecay slows the Gaussian refit keeps full coverage but loses its margin:
on the Mat\'ern-$1/2$ kernel the practical cross-validation forms are tighter at
moderate $n$, and the bound overtakes only at the largest $n$. Under a rigid
over-smoothed fit on a misspecified target its coverage falls with $n$.}
\label{tab:regimefull}
\footnotesize
\setlength{\tabcolsep}{3pt}
\begin{tabular}{@{}llccccc@{}}
\toprule
regime & & $n{=}500$ & $1000$ & $2000$ & $4000$ & $8000$ \\
\midrule
Mat\'ern-$3/2$ & ours & $\bm{3.29\ (1.00)}$ & $\bm{3.01\ (1.00)}$ & $\bm{2.72\ (1.00)}$ & $\bm{2.85\ (1.00)}$ & $\bm{2.69\ (1.00)}$ \\
 & CV ($\bar\sigma^2$) & $3.64\ (0.91)$ & $4.29\ (0.95)$ & $6.01\ (0.94)$ & $8.58\ (0.97)$ & $10.6\ (0.98)$ \\
 & CV ($\widehat\sigma^2$) & $3.81\ (1.00)$ & $4.45\ (1.00)$ & $5.50\ (1.00)$ & $8.54\ (1.00)$ & $9.94\ (1.00)$ \\
 & quantile CV & $6.26\ (1.00)$ & $7.67\ (1.00)$ & $10.1\ (1.00)$ & $14.5\ (1.00)$ & $18.7\ (1.00)$ \\
\addlinespace
Mat\'ern-$1/2$ & ours & $8.57\ (1.00)$ & $7.02\ (1.00)$ & $6.08\ (1.00)$ & $6.04\ (1.00)$ & $\bm{5.22\ (1.00)}$ \\
 & CV ($\bar\sigma^2$) & $2.52\ (0.91)$ & $\bm{3.06\ (0.96)}$ & $4.07\ (0.94)$ & $\bm{5.79\ (0.97)}$ & $7.01\ (0.97)$ \\
 & CV ($\widehat\sigma^2$) & $\bm{2.74\ (1.00)}$ & $3.23\ (1.00)$ & $\bm{3.84\ (1.00)}$ & $5.92\ (1.00)$ & $6.44\ (1.00)$ \\
 & quantile CV & $4.25\ (1.00)$ & $5.12\ (1.00)$ & $6.69\ (1.00)$ & $9.82\ (1.00)$ & $11.9\ (1.00)$ \\
\addlinespace
misspec, flexible & ours & $1.76\ (1.00)$ & $1.57\ (0.99)$ & $1.43\ (0.99)$ & $1.21\ (0.97)$ & $1.05\ (0.83)$ \\
\addlinespace
misspec, rigid & ours & $1.45\ (1.00)$ & $1.25\ (0.98)$ & $1.06\ (0.91)$ & $0.89\ (0.56)$ & $0.81\ (0.10)$ \\
\bottomrule
\end{tabular}
\end{table}

\bibliography{references}

\begin{thebibliography}{45}
\providecommand{\natexlab}[1]{#1}
\providecommand{\url}[1]{\texttt{#1}}
\expandafter\ifx\csname urlstyle\endcsname\relax
  \providecommand{\doi}[1]{doi: #1}\else
  \providecommand{\doi}{doi: \begingroup \urlstyle{rm}\Url}\fi

\bibitem[Abbasi-Yadkori et~al.(2011)Abbasi-Yadkori, P{\'a}l, and
  Szepesv{\'a}ri]{abbasiYadkori2011improved}
Y.~Abbasi-Yadkori, D.~P{\'a}l, and C.~Szepesv{\'a}ri.
\newblock Improved algorithms for linear stochastic bandits.
\newblock In \emph{Adv. Neural Inf. Process. Syst.}, volume~24, 2011.

\bibitem[Anderson(1955)]{anderson1955integral}
T.~W. Anderson.
\newblock The integral of a symmetric unimodal function over a symmetric convex
  set and some probability inequalities.
\newblock \emph{Proc. Amer. Math. Soc.}, 6\penalty0 (2):\penalty0 170--176,
  1955.

\bibitem[Austern and Zhou(2025)]{austernZhou2025cv}
M.~Austern and W.~Zhou.
\newblock Asymptotics of cross-validation.
\newblock \emph{Ann. Inst. Henri Poincar{\'e} Probab. Stat.}, 61\penalty0 (4),
  2025.

\bibitem[Baraud(2004)]{baraud2004confidence}
Y.~Baraud.
\newblock Confidence balls in {G}aussian regression.
\newblock \emph{Ann. Statist.}, 32\penalty0 (2):\penalty0 528--551, 2004.

\bibitem[Barber et~al.(2021)Barber, Cand{\`e}s, Ramdas, and
  Tibshirani]{barber2021predictive}
R.~F. Barber, E.~J. Cand{\`e}s, A.~Ramdas, and R.~J. Tibshirani.
\newblock Predictive inference with the jackknife+.
\newblock \emph{Ann. Statist.}, 49\penalty0 (1):\penalty0 486--507, 2021.

\bibitem[Bates et~al.(2024)Bates, Hastie, and
  Tibshirani]{batesHastieTibshirani2024cv}
S.~Bates, T.~Hastie, and R.~Tibshirani.
\newblock Cross-validation: what does it estimate and how well does it do it?
\newblock \emph{J. Amer. Statist. Assoc.}, 119\penalty0 (546):\penalty0
  1434--1445, 2024.

\bibitem[Bayle et~al.(2020)Bayle, Bayle, Janson, and Mackey]{bayle2020cv}
P.~Bayle, A.~Bayle, L.~Janson, and L.~Mackey.
\newblock Cross-validation confidence intervals for test error.
\newblock In \emph{Adv. Neural Inf. Process. Syst.}, volume~33, 2020.

\bibitem[Bellec and Zhang(2021)]{bellecZhang2021sure}
P.~C. Bellec and C.-H. Zhang.
\newblock Second-order {S}tein: {SURE} for {SURE} and other applications in
  high-dimensional inference.
\newblock \emph{Ann. Statist.}, 49\penalty0 (4):\penalty0 1864--1903, 2021.

\bibitem[Bengio and Grandvalet(2004)]{bengioGrandvalet2004no}
Y.~Bengio and Y.~Grandvalet.
\newblock No unbiased estimator of the variance of {K}-fold cross-validation.
\newblock \emph{J. Mach. Learn. Res.}, 5:\penalty0 1089--1105, 2004.

\bibitem[Beran and D{\"u}mbgen(1998)]{beranDumbgen1998modulation}
R.~Beran and L.~D{\"u}mbgen.
\newblock Modulation of estimators and confidence sets.
\newblock \emph{Ann. Statist.}, 26\penalty0 (5):\penalty0 1826--1856, 1998.

\bibitem[Cai and Low(2006)]{caiLow2006adaptive}
T.~T. Cai and M.~G. Low.
\newblock Adaptive confidence balls.
\newblock \emph{Ann. Statist.}, 34\penalty0 (1):\penalty0 202--228, 2006.

\bibitem[Caponnetto and De~Vito(2007)]{caponnettoDeVito2007optimal}
A.~Caponnetto and E.~De~Vito.
\newblock Optimal rates for the regularized least-squares algorithm.
\newblock \emph{Found. Comput. Math.}, 7\penalty0 (3):\penalty0 331--368, 2007.

\bibitem[Chowdhury and Gopalan(2017)]{chowdhuryGopalan2017kernelized}
S.~R. Chowdhury and A.~Gopalan.
\newblock On kernelized multi-armed bandits.
\newblock In \emph{Proc. 34th Int. Conf. Machine Learning}, volume~70 of
  \emph{Proceedings of Machine Learning Research}, pages 844--853, 2017.

\bibitem[Cs{\'a}ji and Kis(2019)]{csajiKis2019kernel}
B.~C. Cs{\'a}ji and K.~B. Kis.
\newblock Distribution-free uncertainty quantification for kernel methods by
  gradient perturbations.
\newblock \emph{Machine Learning}, 108:\penalty0 1677--1699, 2019.

\bibitem[Cs{\'a}ji et~al.(2015)Cs{\'a}ji, Campi, and
  Weyer]{csajiCampiWeyer2015sps}
B.~C. Cs{\'a}ji, M.~C. Campi, and E.~Weyer.
\newblock Sign-perturbed sums: a new system identification approach for
  constructing exact non-asymptotic confidence regions in linear regression
  models.
\newblock \emph{IEEE Trans. Signal Process.}, 63\penalty0 (1):\penalty0
  169--181, 2015.

\bibitem[Davidson and Flachaire(2008)]{davidson2008wild}
R.~Davidson and E.~Flachaire.
\newblock The wild bootstrap, tamed at last.
\newblock \emph{Journal of Econometrics}, 146\penalty0 (1):\penalty0 162--169,
  2008.

\bibitem[D{\"o}bler and Peccati(2017)]{doeblerPeccati2017}
C.~D{\"o}bler and G.~Peccati.
\newblock The fourth moment theorem on the {P}oisson space.
\newblock \emph{Ann. Probab.}, 45\penalty0 (3):\penalty0 1804--1849, 2017.

\bibitem[Fiedler et~al.(2021)Fiedler, Scherer, and
  Trimpe]{fiedler2021practical}
C.~Fiedler, C.~W. Scherer, and S.~Trimpe.
\newblock Practical and rigorous uncertainty bounds for {G}aussian process
  regression.
\newblock In \emph{Proc. AAAI Conf. Artificial Intelligence}, volume~35, pages
  7439--7447, 2021.

\bibitem[Geisser(1975)]{geisser1975predictive}
S.~Geisser.
\newblock The predictive sample reuse method with applications.
\newblock \emph{J. Amer. Statist. Assoc.}, 70\penalty0 (350):\penalty0
  320--328, 1975.

\bibitem[Hu and Simchi-Levi(2025{\natexlab{a}})]{huSimchiLevi2025bregman}
H.~Hu and D.~Simchi-Levi.
\newblock Perturbing the derivative: Wild refitting for model-free evaluation
  of machine learning models under {B}regman losses.
\newblock arXiv preprint arXiv:2509.02476, 2025{\natexlab{a}}.

\bibitem[Hu and Simchi-Levi(2025{\natexlab{b}})]{huSimchiLevi2025doubly}
H.~Hu and D.~Simchi-Levi.
\newblock Perturbing the derivative: Doubly wild refitting for model-free
  evaluation of opaque machine learning predictors.
\newblock arXiv preprint arXiv:2511.18789, 2025{\natexlab{b}}.

\bibitem[Imhof(1961)]{imhof1961computing}
J.~P. Imhof.
\newblock Computing the distribution of quadratic forms in normal variables.
\newblock \emph{Biometrika}, 48\penalty0 (3/4):\penalty0 419--426, 1961.

\bibitem[Juditsky and Lambert-Lacroix(2003)]{juditskyLambertLacroix2003}
A.~Juditsky and S.~Lambert-Lacroix.
\newblock Nonparametric confidence set estimation.
\newblock \emph{Math. Methods Statist.}, 12\penalty0 (4):\penalty0 410--428,
  2003.

\bibitem[Lahr et~al.(2025)Lahr, K{\"o}hler, Scampicchio, and
  Zeilinger]{lahr2025optimal}
A.~Lahr, J.~K{\"o}hler, A.~Scampicchio, and M.~N. Zeilinger.
\newblock Optimal kernel regression bounds under energy-bounded noise.
\newblock arXiv:2505.22235, 2025.

\bibitem[Laurent and Massart(2000)]{laurentMassart2000}
B.~Laurent and P.~Massart.
\newblock Adaptive estimation of a quadratic functional by model selection.
\newblock \emph{Ann. Statist.}, 28\penalty0 (5):\penalty0 1302--1338, 2000.

\bibitem[Lei(2020)]{lei2020cvc}
J.~Lei.
\newblock Cross-validation with confidence.
\newblock \emph{J. Amer. Statist. Assoc.}, 115\penalty0 (532):\penalty0
  1978--1997, 2020.

\bibitem[Lei et~al.(2018)Lei, G'Sell, Rinaldo, Tibshirani, and
  Wasserman]{lei2018distribution}
J.~Lei, M.~G'Sell, A.~Rinaldo, R.~J. Tibshirani, and L.~Wasserman.
\newblock Distribution-free predictive inference for regression.
\newblock \emph{J. Amer. Statist. Assoc.}, 113\penalty0 (523):\penalty0
  1094--1111, 2018.

\bibitem[Li(1989)]{li1989honest}
K.-C. Li.
\newblock Honest confidence regions for nonparametric regression.
\newblock \emph{Ann. Statist.}, 17\penalty0 (3):\penalty0 1001--1008, 1989.

\bibitem[Lin et~al.(2020)Lin, Rudi, Rosasco, and
  Cevher]{linCevherRosasco2018optimal}
J.~Lin, A.~Rudi, L.~Rosasco, and V.~Cevher.
\newblock Optimal rates for spectral algorithms with least-squares regression
  over {H}ilbert spaces.
\newblock \emph{Appl. Comput. Harmon. Anal.}, 48\penalty0 (3):\penalty0
  868--890, 2020.

\bibitem[Liu(1988)]{liu1988bootstrap}
R.~Y. Liu.
\newblock Bootstrap procedures under some non-i.i.d. models.
\newblock \emph{Ann. Statist.}, 16\penalty0 (4):\penalty0 1696--1708, 1988.

\bibitem[Maddalena et~al.(2021)Maddalena, Scharnhorst, and
  Jones]{maddalena2021deterministic}
E.~T. Maddalena, P.~Scharnhorst, and C.~N. Jones.
\newblock Deterministic error bounds for kernel-based learning techniques under
  bounded noise.
\newblock \emph{Automatica}, 134:\penalty0 109896, 2021.

\bibitem[Mammen(1992)]{mammen1992when}
E.~Mammen.
\newblock \emph{When Does Bootstrap Work? Asymptotic Results and Simulations},
  volume~77 of \emph{Lecture Notes in Statistics}.
\newblock Springer-Verlag, New York, 1992.

\bibitem[Mammen(1993)]{mammen1993bootstrap}
E.~Mammen.
\newblock Bootstrap and wild bootstrap for high dimensional linear models.
\newblock \emph{Ann. Statist.}, 21\penalty0 (1):\penalty0 255--285, 1993.

\bibitem[Nourdin et~al.(2010)Nourdin, Peccati, and
  Reinert]{nourdinPeccatiReinert2010}
I.~Nourdin, G.~Peccati, and G.~Reinert.
\newblock Invariance principles for homogeneous sums: universality of
  {G}aussian {W}iener chaos.
\newblock \emph{Ann. Probab.}, 38\penalty0 (5):\penalty0 1947--1985, 2010.

\bibitem[Robins and van~der Vaart(2006)]{robinsVanDerVaart2006adaptive}
J.~Robins and A.~van~der Vaart.
\newblock Adaptive nonparametric confidence sets.
\newblock \emph{Ann. Statist.}, 34\penalty0 (1):\penalty0 229--253, 2006.

\bibitem[Rudelson and Vershynin(2013)]{rudelsonVershynin2013hansonwright}
M.~Rudelson and R.~Vershynin.
\newblock Hanson--{W}right inequality and sub-{G}aussian concentration.
\newblock \emph{Electron. Commun. Probab.}, 18:\penalty0 1--9, 2013.

\bibitem[Singh and Vijaykumar(2023)]{singhVijaykumar2023kernel}
R.~Singh and S.~Vijaykumar.
\newblock Kernel ridge regression inference.
\newblock arXiv:2302.06578, 2023.

\bibitem[Stein(1981)]{stein1981estimation}
C.~M. Stein.
\newblock Estimation of the mean of a multivariate normal distribution.
\newblock \emph{Ann. Statist.}, 9\penalty0 (6):\penalty0 1135--1151, 1981.

\bibitem[Stone(1982)]{stone1982optimal}
C.~J. Stone.
\newblock Optimal global rates of convergence for nonparametric regression.
\newblock \emph{Ann. Statist.}, 10\penalty0 (4):\penalty0 1040--1053, 1982.

\bibitem[Stone(1974)]{stone1974cross}
M.~Stone.
\newblock Cross-validatory choice and assessment of statistical predictions.
\newblock \emph{J. R. Statist. Soc. B}, 36\penalty0 (2):\penalty0 111--147,
  1974.

\bibitem[Vovk et~al.(2005)Vovk, Gammerman, and Shafer]{vovk2005algorithmic}
V.~Vovk, A.~Gammerman, and G.~Shafer.
\newblock \emph{Algorithmic Learning in a Random World}.
\newblock Springer, 2005.

\bibitem[Wager(2020)]{wager2020cv}
S.~Wager.
\newblock Cross-validation, risk estimation, and model selection: comment on a
  paper by {R}osset and {T}ibshirani.
\newblock \emph{J. Amer. Statist. Assoc.}, 115\penalty0 (529):\penalty0
  157--160, 2020.

\bibitem[Wainwright(2025)]{wainwright2025wild}
M.~J. Wainwright.
\newblock Wild refitting for black box prediction.
\newblock arXiv preprint arXiv:2506.21460, 2025.

\bibitem[Wu(1986)]{wu1986jackknife}
C.~F.~J. Wu.
\newblock Jackknife, bootstrap and other resampling methods in regression
  analysis.
\newblock \emph{Ann. Statist.}, 14\penalty0 (4):\penalty0 1261--1295, 1986.

\bibitem[Yang and Barron(1999)]{yang1999information}
Y.~Yang and A.~Barron.
\newblock Information-theoretic determination of minimax rates of convergence.
\newblock \emph{Ann. Statist.}, 27\penalty0 (5):\penalty0 1564--1599, 1999.

\end{thebibliography}

\end{document}